\definecolor{dullmagenta}{rgb}{0.4,0,0.4}   
\definecolor{darkblue}{rgb}{0,0,0.4}
\definecolor{trolleygrey}{rgb}{0.5, 0.5, 0.5}
\numberwithin{equation}{section}
\newtheoremstyle{ttheorem}%
       {1.3ex\@plus1ex}                
       {2.5ex\@plus1ex\@minus.5ex}      
       {\itshape}           
       {0pt}                   
       {\bfseries}          
       {.}                  
       {.5em}               
       {}                
\newtheoremstyle{ddefinition}%
       {1.3ex\@plus1ex}                
       {2.5ex\@plus1ex\@minus.5ex}      
       {}           
       {0pt}                   
       {\bfseries}           
       {.}                  
       {.5em}               
       {}                
\newtheoremstyle{rremark}%
       {1.3ex\@plus1ex}                
       {2.5ex\@plus1ex\@minus.5ex}      
       {\itshape}        
       {0pt}                   
       {\bfseries}           
       {.}                  
       {.5em}               
       {}                   
\theoremstyle{ttheorem}
\newtheorem{theorem}{Theorem}[section]
\newtheorem{rem}[theorem]{Remark}
\theoremstyle{ddefinition}
\newcounter{numcount}
\newcommand{\labelnummer}{\mbox{\normalfont (\roman{numcount})}}%
\newenvironment{nummer}%
  {\let\curlabelspeicher\@currentlabel%
    \begin{list}{\labelnummer}%
      {\usecounter{numcount}\leftmargin0pt%
        \topsep0.5ex\partopsep2ex\parsep0pt\itemsep0ex\@plus1\p@%
        \labelwidth2.5em\itemindent3.5em\labelsep1em%
      }%
    \let\saveitem\item%
    \def\item{\saveitem%
      \def\@currentlabel{\curlabelspeicher$\,$\labelnummer}}%
    \let\savelabel\label%
    \def\label##1{\savelabel{##1}%
      \@bsphack%
        \ifmmode\else%
          \protected@write\@auxout{}%
          {\string\newlabel{##1item}{{\labelnummer}{\thepage}}}%
        \fi%
      \@esphack%
    }%
  }{\end{list}}%
\def\itemref#1{\expandafter\@setref\csname r@#1item\endcsname%
  \@firstoftwo{#1}}%
\def\section{\@startsection{section}{1}%
  \z@{1.3\linespacing\@plus\linespacing}{.5\linespacing}%
  {\normalfont\scshape\centering}}
\renewcommand\L{\mathrm{L}}
\newcommand\cI{\mathcal{I}}
\newcommand\RR{\mathbb{R}}
\newcommand\NN{\mathbb{N}}
\newcommand\ZZ{\mathbb{Z}}
\newcommand\CC{\mathbb{C}}
\newcommand\eps{\varepsilon}
\renewcommand\P{\mathbb P}
\newcommand\E{\mathbb E}
\def\Chi{\raisebox{.4ex}{$\chi$}}
 \let\Im\undefined
\DeclareMathOperator{\Im}{Im}
\DeclareMathOperator{\dom}{dom}
\DeclareMathOperator{\supp}{supp}
\DeclareMathOperator{\dist}{dist}
\def\le{\leqslant}
\theoremstyle{ttheorem}
 \newtheorem{thm}{Theorem}[section]
 \newtheorem{cor}[thm]{Corollary}
 \newtheorem{lem}[thm]{Lemma}
 \newtheorem{prop}[thm]{Proposition}
 \theoremstyle{definition}
 \newtheorem{defn}[thm]{Definition}
 \theoremstyle{remark}
 \numberwithin{equation}{section}
\newcommand{\be}{\begin{equation}}
\newcommand{\ee}{\end{equation}}
\newcommand{\benon}{\begin{equation*}}
\newcommand{\eenon}{\end{equation*}}
\newcommand{\ba}{\begin{array}}
\newcommand{\ea}{\end{array}}
\newcommand{\bal}{\begin{align}}
\newcommand{\eal}{\end{align}}
\newcommand{\bea}{\begin{eqnarray}}
\newcommand{\eea}{\end{eqnarray}}
\newcommand{\bee}{\begin{eqnarray*}}
\newcommand{\eee}{\end{eqnarray*}}
\newcommand{\norm}[1]{\Vert #1 \Vert}
\newcommand{\abs}[1]{\left| #1 \right|}
\newcommand{\Lp}[1]{\textrm{L}^2(#1)}
\renewcommand{\L}{\Lambda}
\newcommand{\angles}[1]{\langle #1 \rangle}
\renewcommand{\dom}{{D^\omega}}
\newcommand{\om}{\omega}
\newcommand{\pa}[1]{\left( {#1} \right)}
\newcommand{\hm}[1]{\leavevmode{\marginpar{\tiny%
$\hbox to 0mm{\hspace*{-0.5mm}$\leftarrow$\hss}%
\vcenter{\vrule depth 0.1mm height 0.1mm width \the\marginparwidth}%
\hbox to
0mm{\hss$\rightarrow$\hspace*{-0.5mm}}$\\\relax\raggedright #1}}}
\begin{document}

\title[Random operators in discrete structures]{Random Schr\"odinger Operators on discrete structures}

  \author[C. Rojas-Molina]{C. Rojas-Molina}
\maketitle
\begin{abstract}
The Anderson model serves to study the absence of wave propagation in a medium in the presence of impurities, and is one of the most studied examples in the theory of quantum disordered systems. In these notes we give a review of the spectral and dynamical properties of the Anderson Model on discrete structures, like the $d$-dimensional square lattice and the Bethe lattice, and the methods used to prove localization. These notes are based on a course given at the CIMPA School ''Spectral Theory of Graphs and Manifolds'' in Kairouan, 2016.
\end{abstract}

\section{Introduction}

The Anderson model was proposed by P.W. Anderson in his ground-breaking article from 1958 \cite{And58} to explain the absence of diffusion of quantum waves in disordered lattices. This phenomenon is known today as Anderson localization, and earned his discoverer the Nobel Prize in physics in 1977. Since the late 70s, the mathematical-physics community has invested many efforts in obtaining a rigorous description of this phenomenon, and today, despite great progress, many of the original questions remain unsolved. In order to study the propagations of electrons in a solid, we work in the framework of quantum mechanics. An electron moving in a space $\Gamma$ at a given time is described by a normalized wave function $\psi$ in a suitable Hilbert space $\mathcal H$, whose evolution in time is given by the Schr\"odinger equation:
\[i\partial_t \psi(x,t)=(-\Delta+V)\psi(x,t),\quad x\in\Gamma,\,t\in\mathbb R.\]
Here, the one-particle Schr\"odinger operator $H:=-\Delta+V$ is a self-adjoint linear operator acting on $\mathcal H$ that  represents the energy of the particle. Namely, the negative Laplacian $-\Delta$ represents the kinetic energy and the potential $V$ encodes the interaction between the particle and the atomic structure of the solid. Therefore, if the initial state of an electron is described by $\psi(x,0)$, its state at a time $t$ is given by
\[\psi(x,t)=e^{-itH}\psi(x,0),\]
where the right hand side is well defined using the spectral theorem for self-adjoint operators. Knowing the spectral and dynamical properties of the operator $H$ yields information on how the electron propagates in the material in time. 

In terms of the dynamics of the particle, if the electron propagates (corresponding to a conducting behavior of the material), the associated wave function is \emph{extended} (ex. $e^{ix}$). On the contrary, if the electron does not propagate, the wave function is \emph{localized} (ex. $e^{-x^2}$), in which case the material behaves as an insulator.
An analogue duality can be seen in the spectral theoretical decomposition of the spectrum of the operator $H$. The spectrum of $H$ can be decomposed into a pure-point and a continuous part \cite{RSI}. The existence of pure-point spectrum is called spectral localization. Despite what the names might suggest, there is no exact equivalence between the spectral type of $H$ and the evolution of $\psi(x,t)$, as we will see later in Section \ref{s:loc}.

P.W. Anderson observed that the presence of impurities in the environment, coming from either the composition of the atoms or the space distribution of the nuclei in the atomic structure, was, under certain conditions, enough to suppress the propagation of electrons, turning the material into an insulator.  To explain this phenomenon, Anderson proposed to study a Schr\"odinger operator where the impurities are encoded in the potential in the form of realizations of a random variable in some suitable probability space $\Omega$. In this way, the Anderson model is a random Schr\"odinger operator
\be H_\omega=-\Delta+\lambda V_\omega \quad \mbox{on }\ell^2(\ZZ^d),\ee
 where $-\Delta$ is the negative discrete Laplacian, $V_\omega$ with $\omega\in\Omega$ is a random operator and $\lambda>0$ is a real parameter representing the strength of the disorder. The localization phenomenon observed for $H_\omega$ is exclusively caused by the randomness in $V_\omega$. This is to be differentiated from other types of localization caused by, for exemple, interactions (Mott localization).

\begin{figure}[h]
  \centering
  \includegraphics[width=9cm]{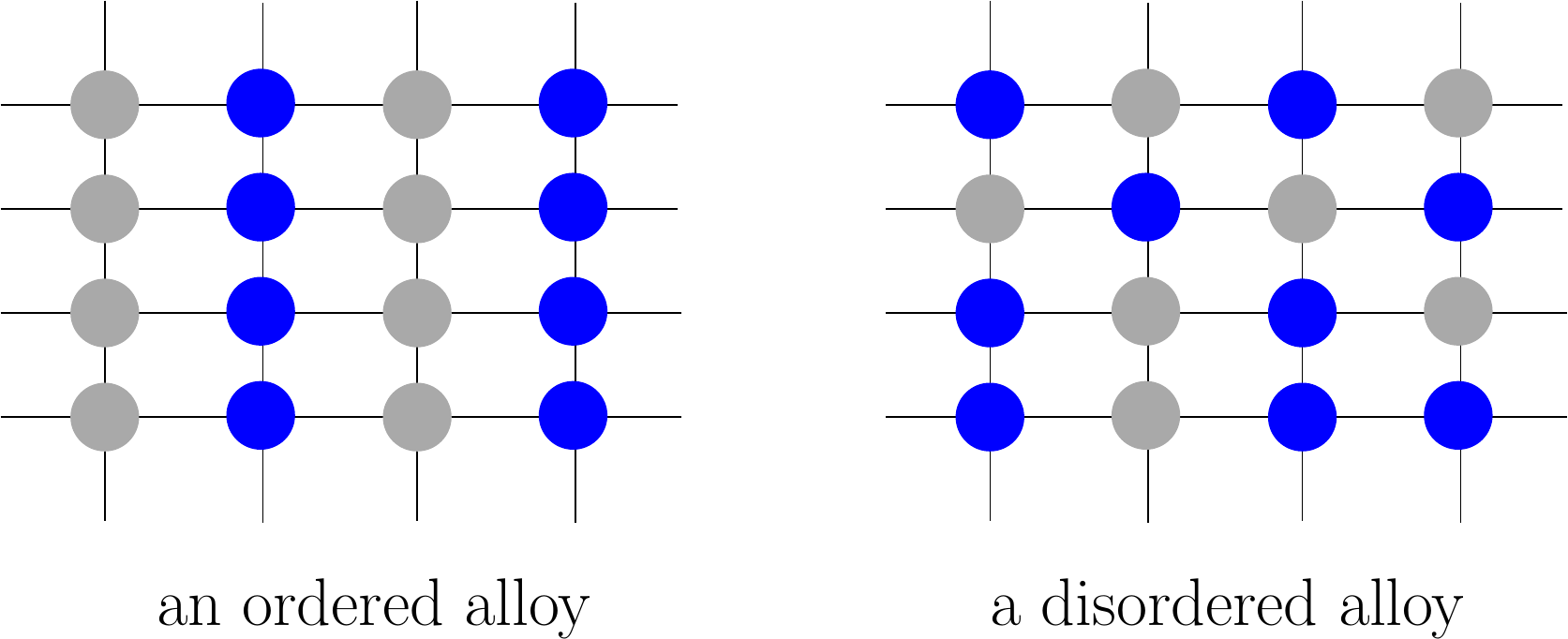}\\
  \caption{A depiction of potentials in an ordered and disordered material. In this case an alloy, where impurities correspond to the species of atom sitting on each point of the lattice. Both pictures depict a possible configuration $\omega\in\Omega$, the left one depicts a periodic configuration and therefore the resulting $V_\omega$ is periodic}
\end{figure}

 The localization properties of the Anderson model have been extensively studied in the mathematics literature since the late 70s, starting with the work of Gold'sheid, Molchanov and Pastur \cite{GMP77} and Kunz-Souillard \cite{KuSo80}. In one dimension, the operator typically exhibits localization in the whole spectrum irrespective of the intensity of the disorder. In two dimensions and higher, the operator typically exhibits \emph{localization} in the whole spectrum at high disorder, or at spectral band edges, if the disorder strength is moderate. In the particular case of two dimensions, the Anderson operator is expected to exhibit localization throughout the spectrum at any disorder strength, as in the one-dimensional case. The proof of this remains, however, an open problem. In dimensions three and above, the operator is expected to undergo a transition from extended to localized states, known as the \emph{Anderson metal-insulator transition}, exhibiting \emph{localization} at spectral band edges and \emph{delocalization} in the bulk of the spectrum. This can be understood as different regimes where one of the two parts of the operator dominates the picture: either the free part $-\Delta$ dominates and imposes its absolutely continuous spectrum with associated extended states, or the random perturbation $V_\omega$ dominates, with its pure point spectrum and associated localized eigenfunctions.

\begin{figure}[h]
  \centering
  \includegraphics[width=9cm]{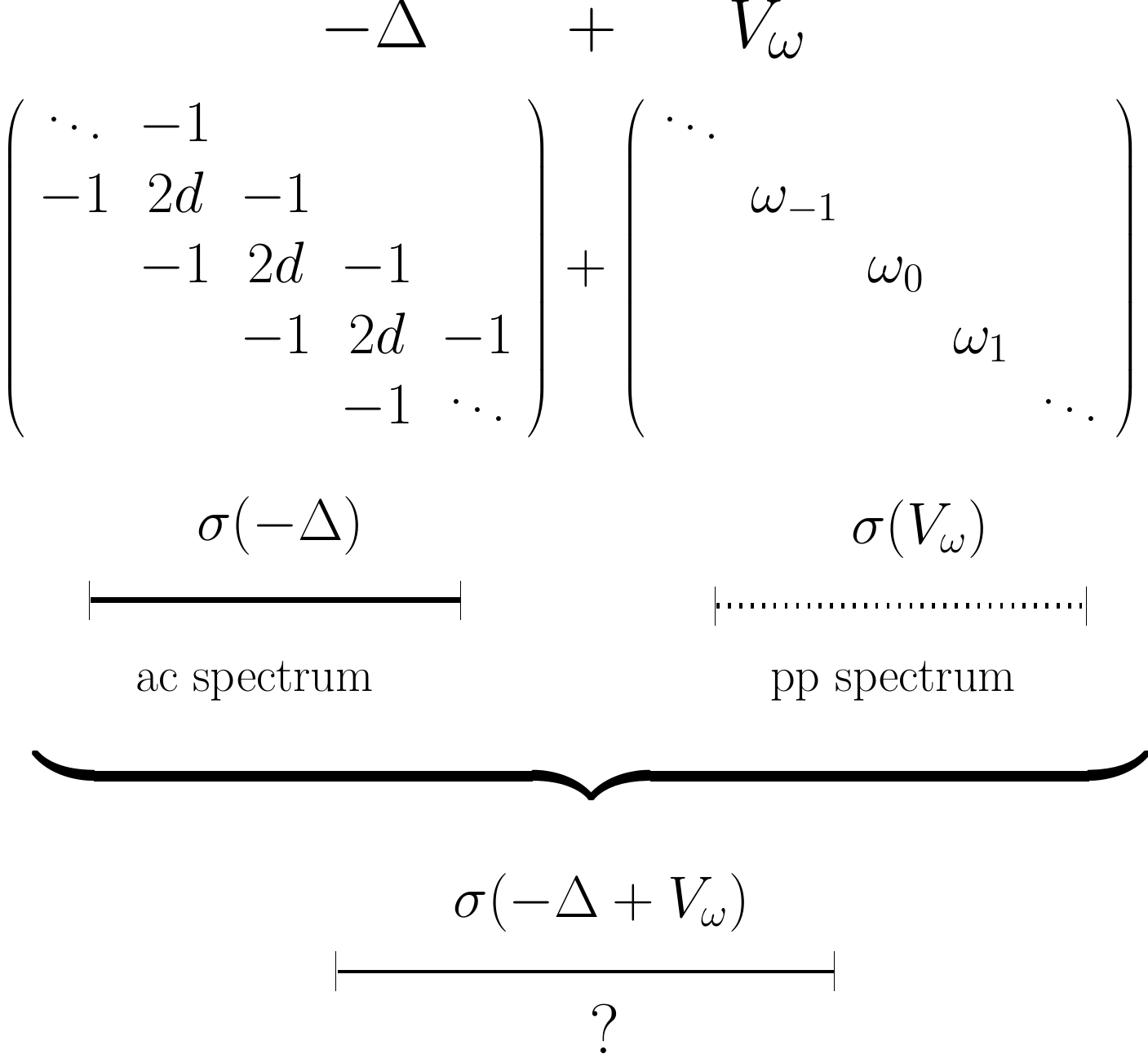}\\
  \caption{Here, $\sigma(A)$ denotes the spectrum of an operator $A$}
 \label{sumspectra}
\end{figure}

\bigskip

  Delocalization for the Anderson model, that is, the appearance of extended states, in the bulk of the spectrum has only been proven on the Bethe lattice $\mathbb B$, also called Cayley tree \cite{Kl94, Kl98, ASW06,FHS07,AW13}, and a proof of delocalization for the Anderson model on the square lattice $\mathbb Z^d$ remains the biggest open problem in the field.

  In arbitrary dimensions two methods are available to prove localization: the Multiscale Analysis, developed by J. Fr\"ohlich and T. Spencer in the early 80s \cite{FrS83}, and the Fractional Moment Method, developed by M. Aizenman and S. Molchanov ten years later \cite{AM93}. Streamlined early versions of the Multiscale Analysis can be found in the books by Pastur and Figotin \cite[Section 15.C]{PF92}, and Carmona and Lacroix \cite[Chapter IX]{CaLa90}. These textbooks contain a comprehensive account of the spectral theory for general random Schr\"odinger operators, including the Anderson model on $\mathbb Z^d$ and the one-dimensional case, where methods from dynamical systems can be used to prove localization.
  Today, both monographs are standard references in the theory of disordered quantum systems. Other early references of interest, more physics-oriented, are \cite{MS87,Sp,M90}. Among the specialized references for the two existing methods to prove localization in arbitrary dimension, the monograph by P. Stollmann \cite{Sto} focuses on the Multiscale Analysis, while the more recent work by M. Aizenman and S. Warzel \cite{AW} explains in detail the Fractional Moment Method and its connection with statistical mechanics. An object of great importance in the study of disordered materials is the integrated density of states, which is also fundamental in the proofs localization and of interest in itself, see e.g. the monograph by I. Veselic \cite{V}.

   The busy reader who does not feel the need to know every result in the field can find in the literature excellent short monographs concerning the Anderson model and its localization properties: the lectures notes by W. Kirsch \cite{Kirsch89, K} and the more advanced by A. Klein \cite{Kl} put emphasis on the Multiscale Analysis, while those by G. Stolz \cite{S10} focus on the Fractional Moment Method, as do the notes by D. Hundertmark \cite{H08}, who gives a more probabilistic approach to the method. We encourage the interested reader to look at these references to have a first understanding of the proofs of localization in random Schr\"odinger operators. In the present article, we build on the aforementioned works and give a complementary view focusing on the Anderson model on the discrete setting.  We try to avoid overlaps, while at the same time trying to be as self-contained as possible. At times we will inevitably fail at one task or the other.

In Section \ref{s:and} we set the model in a general framework that allows us to consider both the $d$-dimensional square lattice $\mathbb Z^d$ and the Bethe lattice $\mathbb B$. In Section \ref{s:loc} we discuss the various notions of localization and how they are related. In Section \ref{s:dgf} we discuss the methods known to prove localization in arbitrary dimension, while in Section \ref{s:deloc} we comment on delocalization on the Bethe lattice $\mathbb B$.


\section{The Anderson model on discrete structures}\label{s:and}
\subsection{A brief recall of probability theory}
Consider a probability space $(\Omega,\mathcal B,\mathbb P)$, where $\mathcal B$ is a $\sigma$-algebra on $\Omega$ and $\mathbb P$ is a probability measure on $(\Omega,\mathcal B)$. Given a probability space $(\Omega,\mathcal B,\mathbb P)$, a random variable is a measurable function $X:\Omega\rightarrow \mathbb R$. Sets in $\mathcal B$ are called \emph{measurable sets}. The probability distribution of $X$ is a measure $\P_X$ defined by
 \be \P_X(A)=\P(X\in A):=\P(X^{-1}(A))= \mathbb P(\{\omega\in\Omega;\, X(\omega)\in A \}). \ee
The support of an arbitrary measure $\mu$ is given by
\be \rm{supp}\, \mu:=\{ x\in\mathbb R;\, \mu([ x-\epsilon,x+\epsilon ])>0,\, \forall \epsilon>0 \}.\ee
 If for any $A\in\mathcal B$, $\mathbb P(Y\in A)=\mathbb P(X\in A)=\mu(A)$, we say $X$ and $Y$ are \emph{identically distributed} with common probability distribution $\mu$. We denote by $\E$ the expectation with respect to the probability measure $\P$, that is, $\E(X)=\int_\Omega X d\P(\omega)$.

Given a countable set $\Gamma$, a collection of random variables $(X_i)_{i\in\Gamma}$ defined on the same probability space is called a \emph{stochastic process}. Moreover, the collection $(X_i)_{i\in\Gamma}$ is called \emph{independent} if, for any finite subset $\{n_1,...n_k\}\subset \Gamma$ and arbitrary Borel sets $A_1,...,A_k\subset\mathbb R$,
\be \mathbb P(X_{n_1}(\omega)\in A_1,...,X_{n_k}(\omega)\in A_k)  =\prod_{j=1}^k \mathbb P(X_{n_j}(\omega)\in A_j).\ee
If the collection of random variables $(X_i)_{i\in\Gamma}$ is independent and identically distributed (denoted i.i.d.) with common probability distribution $\mu$, we have
          \be \mathbb P(X_1(\omega)\in A_1,...,X_k(\omega)\in A_k)  = \prod_{j=1}^k \mu(A_j).\ee

Given a sequence of probability spaces $(\mathbb R, \mathcal B_i,\mu_i)$, $i\in\Gamma$, we consider the product probability space $(\Omega,\mathcal B, \P)$, where $\Omega=\bigotimes_{i\in\Gamma}\mathbb R=\RR^\Gamma$, $\P$ is the product probability measure
\be \P=\bigotimes_{i\in\Gamma}\mu, \ee
and $\mathcal B$ is the product $\sigma$-algebra generated by the cylinder sets of the form
\be\label{cysets} \prod_{i\in\Gamma} B_i,\ee
with $B_i\in\mathcal B_i$ and $B_i=\mathbb R$ for all except finitely many $i\in\Gamma$. The existence of the infinite product probability space $(\Omega,\mathcal B,\P)$ is ensured by Kolmogorov's extension Theorem \cite[Appendix, Sect. 7]{Du}. Note that cylinder sets of the form \eqref{cysets} also generate the topology of $\Omega$, induced by that of $\mathbb R$.
We write $\omega:=(\omega_i)_{i\in\mathbb Z^d}$ instead of $\{X_i(\omega)\}_{i\in\Gamma}$.

Given a probability space $(\Omega,\mathcal B,\P)$, we will often be interested in results that hold for $\P$-almost every $\omega\in\Omega$, that is, for $\omega$ in a set $\Omega_0$ with $\P(\Omega_0)=1$. For this, it will be very useful to have the Borel-Cantelli Lemma \cite[Ch.1, Sect. 6]{Du} in our toolbox,
\begin{thm}\label{t:bc}Let $(\Omega,\mathcal B,\P)$ be a probability space and $(B_k)_{k\in\NN}$ a sequence of sets in $\mathcal B$. Define
\be B_\infty=\bigcap_K \bigcup_{k\geq K}B_k. \ee
The following holds,
\begin{itemize}
\item[i.] If $\sum_{k=1}^\infty \P(B_k)<\infty$, then $\P(B_\infty)=0$.
\item[ii.] If the sets $B_k$ are independent and $\sum_k\P(B_k)=\infty$, then $\P(B_\infty)=1$.
\end{itemize}
\end{thm}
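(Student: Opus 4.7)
The plan is to treat the two parts separately, interpreting $B_\infty$ as the event ``infinitely many of the $B_k$ occur'', i.e., $\omega\in B_\infty$ iff $\omega\in B_k$ for infinitely many indices $k$. In both cases the strategy is to control $\P\bigl(\bigcup_{k\ge K}B_k\bigr)$ or its complement $\P\bigl(\bigcap_{k\ge K}B_k^c\bigr)$ and then let $K\to\infty$, using continuity of the probability measure along the monotone sequence in $K$.

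For part (i), the observation is that for every $K\in\NN$,
\[
B_\infty \;\subset\; \bigcup_{k\ge K}B_k,
\]
so by countable subadditivity of $\P$,
\[
\P(B_\infty)\;\le\;\sum_{k\ge K}\P(B_k).
\]
Since the full series $\sum_{k=1}^\infty \P(B_k)$ converges by hypothesis, the tail on the right-hand side tends to $0$ as $K\to\infty$, which forces $\P(B_\infty)=0$. This part does not use independence at all.

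For part (ii), I would pass to complements: $B_\infty^c=\bigcup_{K}\bigcap_{k\ge K}B_k^c$, and by countable subadditivity it suffices to show $\P\bigl(\bigcap_{k\ge K}B_k^c\bigr)=0$ for every $K$. By independence of the $B_k$ (hence of the $B_k^c$), for every $N\ge K$,
\[
\P\!\left(\bigcap_{k=K}^{N}B_k^c\right)=\prod_{k=K}^{N}\bigl(1-\P(B_k)\bigr).
\]
The decisive analytic step is the elementary inequality $1-x\le e^{-x}$ valid for $x\in[0,1]$, which gives
\[
\prod_{k=K}^{N}\bigl(1-\P(B_k)\bigr)\;\le\;\exp\!\left(-\sum_{k=K}^{N}\P(B_k)\right).
\]
Because $\sum_k \P(B_k)=\infty$, the exponent diverges to $-\infty$ as $N\to\infty$, so the finite intersections have probability tending to $0$. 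Applying continuity of $\P$ from above to the decreasing sequence $\bigcap_{k=K}^{N}B_k^c$ as $N\to\infty$ yields $\P\bigl(\bigcap_{k\ge K}B_k^c\bigr)=0$, and then a countable union over $K$ finishes the argument.

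The only mildly delicate point is the passage from the finite to the infinite intersection in part (ii), which must be justified by monotone continuity of the probability measure rather than independence itself; apart from that, no real obstacle appears, since the two ingredients (subadditivity in (i) and the exponential bound in (ii)) are both standard.
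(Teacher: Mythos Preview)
Your proof is correct and entirely standard. Note, however, that the paper does not give its own proof of this theorem: it is stated as a toolbox result with a reference to \cite[Ch.~1, Sect.~6]{Du}, so there is nothing in the paper to compare your argument against. Your treatment of both parts is the classical one and would be accepted without change.
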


\subsection{The Anderson model}
For $\Gamma$ a discrete set, consider the Hilbert space $\mathcal H=\ell^2(\Gamma)$ given by
\[\ell^2(\Gamma)=\{\psi:\Gamma\rightarrow \CC:\,\sum_{x\in\Gamma}\abs{\psi(x)}^2<\infty \},\]
with inner product $\angles{\psi,\phi}=\displaystyle\sum_{x\in\Gamma}\overline{\psi(x)}\phi(x)$ and norm $\norm{\psi}=\sqrt{\angles{\psi,\psi}}$. We write $\norm{\psi}_\infty=\sup_{x\in\Gamma}\abs{\psi(x)}$. We denote by $\ell^2_c(\Gamma)$ the elements of $\ell^2(\Gamma)$ with compact support. We also denote the canonical orthonormal basis of $\ell^2(\Gamma)$ by $(\delta_i)_{i\in\Gamma}$, where $\delta_i$ is given by
\be \delta_i(x) = \begin{cases} 1, & \mbox{if } x=i \\ 0, & \mbox{otherwise }.\end{cases}\ee

In what follows, the underlying space $\Gamma$ will correspond to the set of vertices
of a given graph. By abuse of notation, we will use  the same notation
for the graph $\Gamma$ and its set of vertices $V$ (excepting this paragraph).  We will only
consider undirected graphs $\Gamma=(V,E)$, defined by a set of
vertices $V$ (also called points) and a set of edges
$E \subset V\times V$. Undirected means that if $(x,y) \in E$ then
$(y,x)\in E$. If two points $x,y\in \Gamma$ are connected by an edge,
i.e. $(x,y) \in E$, we will write $x\sim y$ and say that they are
nearest neighbors. The graph $\Gamma$ can be completely described in matrix
form. We define the adjacency matrix $A_\Gamma$ of the graph $\Gamma$
as the matrix with entries
\be A_\Gamma(x,y) = \begin{cases} 1, & \mbox{if } x\sim y \\ 0, & \mbox{otherwise },\end{cases}\ee
where $A_\Gamma(x,y)=\angles{\delta_x,A_\Gamma\delta_y}$. Note that for an undirected graph this is a symmetric
matrix.

We define the Anderson model acting on $\ell^2(\Gamma)$ by
 \be\label{eq:and} H_{\om,\lambda}\psi=-\Delta\psi+\lambda V_\omega\psi,\ee
with $\lambda>0$. We make the following assumptions:
 \begin{enumerate}
 \item[\bf(A1)] $\Gamma$ is the set of vertices of a locally finite, connected graph, with uniformly bounded vertex degree. That is, every point $x$ in $\Gamma$ has a finite number $\textrm{deg}_\Gamma(x)$ of nearest neighbors and $\sup_{x\in \Gamma} \textrm{deg}_\Gamma(x) \leq K<\infty$. We also assume that the graph has no loops, i.e. $\forall x\in \Gamma,\, x \not \sim x$. With these assumptions, all the entries in the diagonal of $A_\Gamma$, the adjacency matrix of $\Gamma$, vanish and the matrix gives rise to a bounded operator on $\ell^2(\Gamma)$.

     We associate to $\Gamma$  a distance function $\textrm{dist}_\Gamma:\Gamma\times\Gamma\rightarrow [0,\infty)$. If $\Gamma=\ZZ^d$, $\textrm{dist}_\Gamma(x,y)=\norm{x-y}_\infty=\sup_{1\leq j\leq d}\abs{x_j-y_j}$, where $x_j$ denotes the $j$-th component of the vector $x\in\ZZ^d$.

    \item[\bf(A2)] The operator $-\Delta$ is the discrete analog of the negative Laplacian,
     \be\label{eq:lap}-\Delta \psi(x)= -\sum_{y\sim x}\pa{\psi(y)-\psi(x)},\ee
     where the summation index $y\sim x$ runs over the nearest-neighbors  of $x$ in $\Gamma$. We can write $-\Delta=\textrm{deg}_\Gamma-A_\Gamma$, where $A_\Gamma$ is the adjacency matrix, i.e.
     \be\label{eq:adj} A_\Gamma \psi(x)= \sum_{y\sim x}\psi(y),\ee
     and $\bigl( \textrm{deg}_\Gamma \psi \bigr) (x)=\textrm{deg}_\Gamma(x)\psi(x)$ is a multiplication operator. Note that our operator $-\Delta$ correspond to $\Delta$ in Section 4 of \cite{Go}.

 \item[\bf(A3)] The Anderson potential $V_\omega$ is defined by
 \be\label{eq:ranop} V_\omega \psi(x)=\omega_x \psi(x), \ee
 where  $\omega_x$ are i.i.d. random variables, with a probability distribution $\mu$ supported in a compact interval $[a,b]$, with $a,b \in\mathbb R$, $a<b$.  The probability space is constructed as in the previous subsection, $\Omega=[a,b]^{\Gamma}$,
with probability measure $\P=\bigotimes_{x\in\Gamma}\mu$. We denote by $\omega=(\omega_x)_{x\in\Gamma}$ an element of $\Omega$.
\end{enumerate}
For simplicity, we will write $H_\omega$ when $\lambda>0$ is fixed or its value is clear from the context.

Under these assumptions, the operators $-\Delta$ and $V_\omega$ are bounded and self-adjoint, therefore $H_\omega$ is a bounded and self-adjoint Schr\"odinger operator on $\ell^2(\Gamma)$ with spectrum $\sigma(H_\om)$ (for bounded operators this is direct, while in the case of unbounded operators, in the continuous setting $\rm L^2(\mathbb R^d)$, this is a consequence of the Kato-Rellich theorem \cite{RSI}).

In the sequel we will be interested in the cases where $\Gamma=\ZZ^d$, the square lattice, and $\Gamma=\mathbb B$, the Bethe lattice, that is, a tree with constant vertex degree $K$.

Given $z\in\mathbb C\setminus \mathbb R$, we denote the resolvent of $H_\omega$ by $G_\omega(z)=(H_\omega-z)^{-1}$ and write $G_\omega(z;x,y):=\angles{\delta_x,(H_\omega-z)^{-1}\delta_y}$.

\begin{rem}
\begin{enumerate}\quad
\item The model originally proposed by Anderson in \cite{And58} corresponds to \eqref{eq:and} with $\Gamma=\mathbb Z^d$ and probability distribution $\mu$ with probability density $\frac{1}{2}\chi_{[-1,1]}$.
\item The assumption of $\Gamma$ having a bounded vertex degree in $\textrm{\bf(A1)}$ can be relaxed, see \cite{T11} and references therein.
    \item In our discussion we do not include the case of single-site potentials of changing sign. The difficulty in these models is the lack of monotonicity of the eigenvalues of the (finite-volume) operator with respect to the parameter $\omega$ \cite{CaE,ESS,ETV10,ETV11,EKTV}. Another setting we have omitted is quantum graphs, where the methods to prove localization (the Multiscale Analysis and the Fractional Moment Method) can also be applied. For results on localization on diverse types of Anderson models on quantum graphs, see \cite{KloP08,KloP09,ASW06bis,ExHS,Schub,HiP,Sa}.
\end{enumerate}
\end{rem}

\begin{defn}The map $\Omega\ni\omega\mapsto H_\omega\in\mathcal L(\mathcal H)$, the space of linear operators acting on $\mathcal H$, is measurable if for all $\varphi,\psi \in\mathcal H$, the map $\Omega\ni\omega\mapsto \langle \varphi, H_\omega \psi \rangle \in \mathbb C$ is measurable.
\end{defn}
\begin{thm}The Anderson model on $\ell^2(\Gamma)$ is measurable.
 \end{thm}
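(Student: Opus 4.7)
The plan is to decompose $H_\omega = -\Delta + \lambda V_\omega$ and observe that the Laplacian part is deterministic, so $\omega \mapsto \langle \varphi, -\Delta\psi\rangle$ is constant in $\omega$ and trivially measurable. The whole question thus reduces to showing that $\omega \mapsto \langle \varphi, V_\omega \psi\rangle$ is measurable for every $\varphi,\psi \in \ell^2(\Gamma)$.

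The key starting point is that by construction of the product $\sigma$-algebra $\mathcal B$ on $\Omega = [a,b]^{\Gamma}$ from the cylinder sets \eqref{cysets}, each coordinate projection $\pi_x \colon \Omega \to [a,b]$, $\omega \mapsto \omega_x$, is measurable. Using the canonical basis, for any $x,y\in\Gamma$,
\[
\langle \delta_x, V_\omega \delta_y\rangle \;=\; \omega_x\,\delta_{x,y},
\]
which is either identically zero or equal to $\pi_x$, hence measurable. For general $\varphi,\psi \in \ell^2(\Gamma)$, I would expand in the basis and write
\[
\langle \varphi, V_\omega \psi\rangle \;=\; \sum_{x\in\Gamma} \overline{\varphi(x)}\,\omega_x\,\psi(x).
\]

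To justify this, I would first handle the case $\varphi,\psi \in \ell^2_c(\Gamma)$: the sum is then finite, so it is a finite linear combination of the measurable functions $\pi_x$ with complex coefficients, hence measurable. For general $\varphi,\psi \in \ell^2(\Gamma)$, I would approximate by truncations $\varphi_N, \psi_N$ supported on an exhaustion $\Gamma_N \uparrow \Gamma$. Since $|\omega_x| \le \max(|a|,|b|)=:M$ uniformly in $\omega$, Cauchy--Schwarz gives
\[
\bigl|\langle \varphi, V_\omega \psi\rangle - \langle \varphi_N, V_\omega \psi_N\rangle\bigr|
\;\le\; M\bigl(\|\varphi-\varphi_N\|\,\|\psi\| + \|\varphi_N\|\,\|\psi-\psi_N\|\bigr) \;\longrightarrow\; 0
\]
as $N\to\infty$, for every $\omega\in\Omega$. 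So $\langle \varphi, V_\omega\psi\rangle$ is the pointwise limit of measurable functions, hence measurable. The same absolute bound $M\|\varphi\|\|\psi\|$ also shows the defining series converges absolutely for every $\omega$, so there is no issue with the meaning of the expression.

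There is no real obstacle here; the only point that needs care is being explicit about the fact that measurability of the coordinate projections $\pi_x$ is \emph{built into} the product $\sigma$-algebra, and then noting that countable pointwise limits preserve measurability. Combining the two measurable contributions, $\omega \mapsto \langle \varphi, H_\omega\psi\rangle = \langle \varphi,-\Delta\psi\rangle + \lambda\langle \varphi, V_\omega\psi\rangle$ is measurable for all $\varphi,\psi$, which is the definition that needs to be verified.
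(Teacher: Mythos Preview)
Your proof is correct, but it follows a different route from the paper's. The paper argues topologically: it shows that for each fixed $\varphi$, the map $\omega\mapsto h_\varphi(\omega)=\langle\varphi,V_\omega\varphi\rangle$ is \emph{continuous} with respect to the product topology on $\Omega$, by exhibiting for each $\omega^0\in h_\varphi^{-1}(\mathcal O)$ a cylinder-type neighborhood mapped into $\mathcal O$; measurability then follows since open sets are Borel, and the passage to $\langle\varphi,V_\omega\psi\rangle$ is done by polarization. Your argument is purely measure-theoretic: you use that the coordinate projections $\pi_x$ are measurable by definition of the product $\sigma$-algebra, build up finite sums, and pass to general $\varphi,\psi$ by pointwise limits. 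Your approach is slightly more elementary and handles $\langle\varphi,V_\omega\psi\rangle$ directly without polarization; the paper's approach, on the other hand, yields the stronger conclusion of continuity, which can be of independent use.
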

 \begin{proof} It is enough to show that the map $\Omega\ni\omega\mapsto h_{\varphi}:=\langle \varphi, V_\omega \varphi\rangle \in \mathbb R$ is measurable, for $\varphi\in\ell^2(\Gamma)$, $\varphi\neq 0$. The result then extends to $\langle \varphi, V_\omega \psi\rangle$ for $\varphi,\psi\neq 0$ using the polarization identity.

Let $\mathcal O$ be an open set in $\mathbb R$ such that $h_\varphi^{-1}(\mathcal O)\neq \emptyset$, and take $\omega^0\in h_\varphi^{-1}(\mathcal O)$. We will show that there is a neighborhood of $\omega^0$ whose image under $h_\varphi$ is contained in $\mathcal O$.

Since $h_\varphi(\omega^0)\in\mathcal O$ and $\mathcal O$ is open, there exists $\epsilon>0$ such that for every $z\in\mathbb R$ satisfying $\abs{h_\varphi(\omega^0)-z}<\epsilon$, we have $z\in\mathcal O$. Let $c=\sup_{x\in\Gamma}\abs{\omega_x}<\infty$. Since $\norm{\varphi}<\infty$, there exists a compact set $K=K(\epsilon,\varphi,b)\subset \Gamma$ such that
\be\label{eq:cyl} \sum_{x\in K^c} \abs{\varphi(x)}^2 <\frac{\epsilon}{2c}, \ee
where $K^c=\Gamma\setminus K$. Consider the neighborhood of $\omega^0$ in $\Omega$ defined by
\be B_\epsilon(\omega^0)=\{\omega=(\omega_x)_{x\in\Gamma};\, \abs{\omega_x-\omega_x^0}<\frac{\epsilon}{2\norm{\varphi}} \,\text{for }x\in K,\, \omega_x\in[a,b]\,\text{for }\,x\in K^c  \}, \ee
where $K$ is the compact set defined by \eqref{eq:cyl}. Then, for $\omega\in B_\epsilon(\omega^0)$
\begin{align}
h_\varphi(\omega)& = \sum_{x\in K}\omega_x \abs{\varphi(x)}^2+\sum_{x\in K^c} \omega_x\abs{\varphi(x)}^2\\
& = h_\varphi(\omega^0)+\sum_{x\in K} (\omega_x-\omega^0_x) \abs{\varphi(x)}^2+\sum_{x\in K^c} (\omega_x-\omega^0_x) \abs{\varphi(x)}^2
\end{align}
Thus
\begin{align} \abs{h_\varphi(\omega)-h_\varphi(\omega^0)}&\leq \sup_{x\in K}\abs{\omega_x-\omega^0_x} \sum_{x\in K} \abs{\varphi(x)}^2
+ \sup_{x\in\Gamma} \abs{\omega_x}\sum_{x\in K^c} \abs{\varphi(x)}^2\\
& \leq \epsilon.
\end{align}
This implies that $h_\varphi(B_\epsilon(\omega^0))\subset \mathcal O$, therefore,  $h_\varphi^{-1}(\mathcal O)$ is an open set.
\end{proof}

\subsection{The deterministic spectrum}
We now recall some standard facts from the spectral theory of self-adjoint operators, see e.g. \cite{RSI}, or the notes by S. Gol\'enia in this volume \cite{Go}. Let $H$ be a self-adjoint operator on a Hilbert space $\mathcal H$, with spectrum $\sigma(H)\subset \mathbb R$. For a given $\varphi\in \ell^2(\Gamma)$ there exists a real measure $\mu_{H,\varphi}$ on $\mathcal B(\mathbb R)$ (the Borel sets of $\mathbb R$) such that
\be \angles{\varphi,H,\varphi}= \int_\RR E d\mu_{H,\varphi}(E). \ee
This measure is called \emph{spectral measure} and using Lebesgue's decomposition Theorem, it can be decomposed into three mutually singular parts:
\be \mu_{H,\varphi}=\mu_{H,\varphi}^{pp}+\mu_{H,\varphi}^{sc}+\mu_{H,\varphi}^{ac}, \ee
that are, respectively, pure-point (pp), singular continuous (sc), and absolutely continuous (ac) with respect to the Lebesgue measure. This induces a decomposition of the Hilbert space $\mathcal H=\mathcal H_{pp}\oplus\mathcal H_{sc}\oplus\mathcal H_{ac}$, where
\be\mathcal H_{*}=\{\varphi \in\mathcal H;\, \mu_{H,\varphi}= \mu_{H,\varphi}^{*}\},\quad \text{for}\,*\in \{pp,sc,ac\}.\ee
By restricting the operator $H$ to each of these spaces, we obtain the following decomposition of the spectrum
\be \sigma(H)=\sigma_{pp}(H)\cup \sigma_{sc}(H)\cup \sigma_{ac}(H). \ee

Note that the random operator $H_\omega$ represents a \emph{family} of operators $(H_\omega)_{\omega\in\Omega}$ acting on $\mathcal H=\ell^2(\Gamma)$, and each realization of the operator has a spectral decomposition as above. We will see in what follows that the Anderson model satisfies a fundamental property, called \emph{ergodicity}, by which the spectrum and its spectral parts are deterministic, that is, do not depend on the realization $\omega\in\Omega$. Therefore, we aim for spectral results that hold for almost every $\omega\in\Omega$.

\begin{defn}\label{def:erg}
\begin{itemize}
\item[i)]Given a probability space $(\Omega,\mathcal B,\P)$, a measurable application $\tau:\Omega\rightarrow \Omega$ is called measure preserving if $\P(\tau^{-1}B)=\P(B)$ for all $B\in\mathcal B$.
    \item[ii)]Given a collection $(\tau_\gamma)_{\gamma\in\Gamma}$ of measure preserving transformations, we call $B\in\mathcal B$ invariant with respect to it if $\tau_\gamma^{-1}B=B$ for all $\gamma\in\Gamma$.
        \item[iii)] A measure preserving group of transformations $(\tau_\gamma)_{\gamma\in\Gamma}$ is called \emph{ergodic} with respect to $\P$ if any set $B\in\mathcal B$ that is invariant with respect to the family $(\tau_\gamma)_{\gamma\in\Gamma}$ has probability either zero or one.
\end{itemize}
\end{defn}

\begin{defn}The operator $H_\om$ is called \emph{ergodic} if there exists an ergodic group of transformations  $(\tau_\gamma)_{\gamma\in \Gamma}$ acting on $\Omega$ associated to a family of unitary operators  $(U_\gamma)_{\gamma\in \Gamma}$ on $\mathcal H$ such that
\be\label{eq:erg}
 H_{\tau_\gamma (\omega)}= U_\gamma H_{\omega}U_\gamma^* \quad\text{for all}\,\gamma\in \Gamma,
\ee
where $U_\gamma^*$ denotes the adjoint of $U_\gamma$.
\end{defn}
The notion of ergodicity applies to more general Schr\"odinger operators, see \cite{Dam16} and \cite{PF92}.
\smallskip

\noindent{\bf Examples}
\begin{itemize}
\item[a)] The Anderson model $H_\omega$ on $\ell^2(\mathbb Z^d)$ is ergodic with respect to $\mathbb Z^d$.
 That is, with respect to the translations $\tau_\gamma(\omega)=(\omega_{x+\gamma})_{x\in\mathbb Z^d}$ and $U_\gamma\varphi(x)=\varphi(x-\gamma)$ with $\gamma \in \mathbb Z^d$. Note that $U_{\gamma}^*$ is given by $U_{\gamma}^*\varphi(x)=\varphi(x+\gamma)=U_{-\gamma}$.
Then
\begin{align}
U_\gamma H_\omega U_{-\gamma}\varphi (x)& =U_\gamma\, (-\Delta) U_{-\gamma}\,\varphi (x)+ U_\gamma\,\left( V_\omega\, U_{-\gamma}\right)\,\varphi(x)\notag\\
& = -\Delta\,\varphi(x)+\left(V_\omega \,U_{-\gamma}\,\varphi\right)(x-\gamma)\notag\\
 & = -\Delta\,\varphi(x)+ V_\omega(x-\gamma)\,\left(U_{-\gamma}\varphi\right)(x-\gamma)\notag\\
 & = -\Delta\,\varphi(x)+ V_\omega(x-\gamma)\varphi(x).\notag
\end{align}
Since $V_\omega(x-\gamma)\varphi(x)=\omega_{x-\gamma}\varphi(x)=V_{\tau_\gamma(\omega)}\varphi(x)$, we have
\be  U_\gamma H_\omega U_{-\gamma}\varphi= H_{\tau_\gamma(\omega)}. \ee

\item[b)] Consider the Bethe lattice $\mathbb B$ with vertex degree $K\in\mathbb N$ even, rooted at zero (also called the Cayley tree).  The Anderson model $H_\omega$ on $\ell^2(\mathbb B)$ is ergodic with respect to the translations generated by the free group generating $\mathbb B$. For example, consider the case $K=4$, where $\mathbb B$ is generated by the family $\mathcal T=\{a,b,a^{-1},b^{-1}\}$. Denote by $0$ the root, such that $0=aa^{-1}=a^{-1}a=bb^{-1}=b^{-1}b$. The set $\mathcal T \cup \{0\}$ forms a group with respect to multiplication to the right, defined by $\alpha_x(y)=yx$. If we define the operator $U_x\varphi(y)=\varphi\circ \alpha_x(y)=\varphi(yx)$ for $x\in \mathcal T$, $y\in\mathbb B$, we have that $U^{-1}_x=U^*_x=U_{x^{-1}}$. Then any point $y\in\mathbb B$ can be written as a unique composition of elements in $\mathcal T$, that is, if $d_{\mathbb B}(0,x)=n$, $x$ can be written uniquely as a product $x=\Pi_{i=1}^n e_i$, where $e_i\in\mathcal T$, see Fig. \ref{bethe}. We define in an analogous way $\tau_x(\omega)=(\omega_{xy})_{y\in\mathbb B}$ for $x\in \mathcal T$. In this way we have $U_x H_\om U_{x^{-1}}=H_{\tau_x(\om)}$.
     See also \cite[Appendix]{AK92} for another approach to these translations.\\
\begin{figure}[h]
  \centering
  \includegraphics[width=11cm]{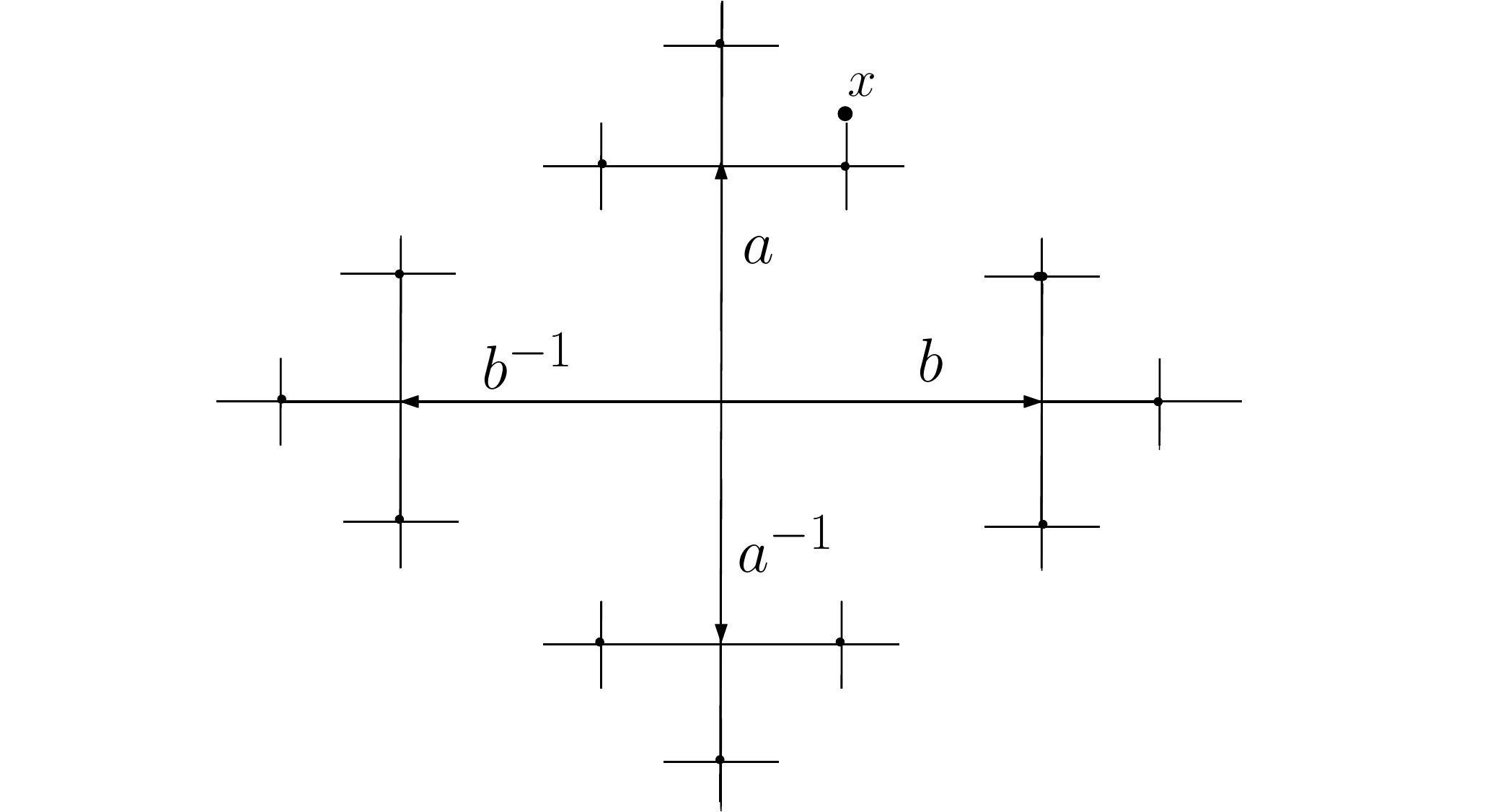}\\
  \caption{In this picture, $x=aba$. The decreasing length of the edges in the graph is only for pedagogical purposes. It simulates an optical effect to depict growing distance from the root.}
  \label{bethe}
\end{figure}

\end{itemize}

If a random variable $X$ is invariant under an ergodic family of transformations $(\tau_\gamma)_{\gamma\in\Gamma}$, that is, if $X(\tau_\gamma\om)=X(\om)$ for all $\gamma\in\Gamma$, then $X$ is almost surely constant, see e.g. \cite[Proposition 9.1]{CyFKS}. Since the spectral projection $\chi_I(H_\omega)$ associated to an interval $I\subset \mathbb R$  is a measurable function of $H_\omega$, the ergodicity of $H_\omega$ implies that $\chi_I(H_\omega)$, seen as a random variable, is almost surely constant on $\Omega$. On the other hand, this function characterizes the spectrum of the operator \cite[Theorem 7.22]{Weid}:
\be \sigma(H_\omega)=\{ E\in\mathbb R; \chi_{(E-\epsilon,E+\epsilon)}(H_\omega)\neq 0 \, \mbox{for every}\, \epsilon>0 \}. \ee
Therefore, the almost-sure constancy of the spectral projection implies that the spectrum is deterministic. The same applies to the spectral projections $\chi_I^{(*)}(H_\omega)$ associated to the different spectral types $*\in\{pp,sc,ac\}$:
\begin{thm}[\cite{Pas80,KuSo80,KMa}]Let $\Omega$ be a probability space and $H_\omega$ with $\omega\in\Omega$ an ergodic operator. There exist closed sets $\Sigma$, $\Sigma_{pp}, \Sigma_{ac},\Sigma_{sc}\subset \RR$ such that for  $\P$-a.e. $\omega\in\Omega$,
\[ \Sigma=\sigma(H_\omega),\]
\[\Sigma_{pp}=\sigma_{pp}(H_\omega),\,\,\, \Sigma_{ac}=\sigma_{ac}(H_\omega),\,\,\,  \Sigma_{sc}=\sigma_{sc}(H_\omega). \]
where \emph{pp, ac} and $sc$ stand for the pure point, absolutely continuous and singular continuous part of the spectrum.
\end{thm}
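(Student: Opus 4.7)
The plan is to exploit ergodicity at the level of the spectral projections. For any Borel set $I\subset\RR$, the functional calculus applied to the conjugation relation \eqref{eq:erg} gives
\begin{equation*}
\chi_I(H_{\tau_\gamma\omega})=U_\gamma\,\chi_I(H_\omega)\,U_\gamma^*,
\end{equation*}
so the event $A_I=\{\omega\in\Omega:\chi_I(H_\omega)=0\}$ is invariant under each $\tau_\gamma$, since unitary conjugation sends the zero operator to itself. Ergodicity then forces $\P(A_I)\in\{0,1\}$. Fixing a countable base of the topology of $\RR$, namely the intervals $I=(p,q)$ with $p,q\in\QQ$, and intersecting the associated full-measure events produces a single full-measure set $\Omega_0\subset\Omega$ on which the family $\{I:\chi_I(H_\omega)=0\}$ does not depend on $\omega$. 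Combined with the characterization $\sigma(H_\omega)=\{E\in\RR:\chi_{(E-\eps,E+\eps)}(H_\omega)\neq 0\ \forall\,\eps>0\}$ recalled just before the theorem, this identifies $\sigma(H_\omega)$, for every $\omega\in\Omega_0$, with a deterministic closed set $\Sigma$.

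For the spectral parts, I would run exactly the same argument with $\chi_I$ replaced by the restricted projections $\chi^{(*)}_I:=\chi_I\,P^{(*)}$, where $P^{(*)}$ projects onto $\mathcal H_*$ for $*\in\{pp,sc,ac\}$. The key additional observation is that $U_\gamma$ intertwines not only the operators but also the Lebesgue decompositions of their spectral measures: for every $\varphi\in\mathcal H$ and every bounded Borel function $f$,
\begin{equation*}
\angles{U_\gamma\varphi,\,f(H_{\tau_\gamma\omega})U_\gamma\varphi}=\angles{U_\gamma\varphi,\,U_\gamma f(H_\omega)U_\gamma^*U_\gamma\varphi}=\angles{\varphi,\,f(H_\omega)\varphi},
\end{equation*}
so the spectral measures $\mu_{H_{\tau_\gamma\omega},U_\gamma\varphi}$ and $\mu_{H_\omega,\varphi}$ coincide and hence have identical pp, sc, and ac parts. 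It follows that $U_\gamma$ maps $\mathcal H_{*}(H_\omega)$ onto $\mathcal H_{*}(H_{\tau_\gamma\omega})$, so $U_\gamma\,\chi^{(*)}_I(H_\omega)\,U_\gamma^*=\chi^{(*)}_I(H_{\tau_\gamma\omega})$. The events $A_I^{(*)}=\{\omega:\chi^{(*)}_I(H_\omega)=0\}$ are therefore $(\tau_\gamma)$-invariant, and the countable-intersection procedure above produces deterministic closed sets $\Sigma_{pp},\Sigma_{sc},\Sigma_{ac}$ that coincide almost surely with $\sigma_{pp}(H_\omega),\sigma_{sc}(H_\omega),\sigma_{ac}(H_\omega)$.

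The main obstacle I anticipate is measurability: the argument above only makes sense once one knows that $\omega\mapsto\chi_I(H_\omega)$ and, more seriously, $\omega\mapsto\chi^{(*)}_I(H_\omega)$ are measurable in the strong operator topology, so that $A_I$ and $A_I^{(*)}$ actually lie in $\mathcal B$. For $\chi_I(H_\omega)$ this follows from the measurability of $\omega\mapsto H_\omega$ together with a standard functional-calculus and Stone--Weierstrass approximation argument. For the spectral-type projections one typically reduces to a countable cyclic decomposition of $\mathcal H$ and uses Radon--Nikodym derivatives to extract the pp, sc, and ac components measurably; this is the substantive technical content of the references cited in the theorem. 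Once measurability is in place, the ergodic zero--one dichotomy closes the argument.
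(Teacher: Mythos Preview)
Your proposal is correct and follows exactly the approach the paper sketches in the paragraph immediately preceding the theorem: covariance of the spectral projections $\chi_I(H_\omega)$ and $\chi_I^{(*)}(H_\omega)$ under $(U_\gamma,\tau_\gamma)$, a zero--one law for the events $\{\chi_I=0\}$, and the characterization of the spectrum via these projections. The paper does not give a self-contained proof but defers to the cited references \cite{Pas80,KuSo80,KMa}; your write-up fleshes out the sketch faithfully and, in particular, correctly flags the measurability of $\omega\mapsto\chi_I^{(*)}(H_\omega)$ as the genuine technical input supplied by those references.
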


Knowing that the spectrum of the operator is deterministic is not enough for our purposes, since we will later work in particular regions of the spectrum. We also need to know its location. In the case of the Anderson model on the lattice, Kunz and Souillard \cite{KuSo80} proved the following as a consequence of ergodicity,

\begin{thm}\label{t:ks}Let $H_\omega=-\Delta+\lambda V_\omega$ with $\lambda>0$ be the Anderson model acting on $\ell^2(\ZZ^d)$. Then
\be\label{eq:ks}\sigma(H_\om)=\sigma(-\Delta)+\lambda\supp \mu ,\quad \text{for a.e.}\, \omega\in\Omega\ee
where the sum of sets is defined as $A+B=\{a+b\in\mathbb R; a\in A,\,b\in B\}$. \end{thm}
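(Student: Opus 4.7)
My plan is to apply Weyl's criterion throughout: $E \in \sigma(A)$ iff there is a normalized sequence $(\psi_n) \subset \ell^2(\ZZ^d)$ with $\norm{(A-E)\psi_n} \to 0$, and to prove the two inclusions of \eqref{eq:ks} separately.

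For $\sigma(-\Delta) + \lambda\supp\mu \subseteq \sigma(H_\omega)$ a.s., fix $E_0 \in \sigma(-\Delta)$ and $v \in \supp\mu$, and set $E = E_0 + \lambda v$. By Weyl's criterion for $-\Delta$ combined with density of $\ell^2_c(\ZZ^d)$, choose a Weyl sequence $(\psi_n)$ of unit vectors with finite supports $B_n := \supp \psi_n$ and $\norm{(-\Delta - E_0)\psi_n} \to 0$. Fix $\epsilon_n \downarrow 0$ with $\epsilon_n \geq \norm{(-\Delta - E_0)\psi_n}$. Since $v \in \supp\mu$, each single-site event $\{|\omega_x - v| < \epsilon_n\}$ has positive probability $p_n := \mu((v-\epsilon_n, v+\epsilon_n))$. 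Choose translates $\gamma^{(k)} \in \ZZ^d$ so that the sets $B_n + \gamma^{(k)}$ are pairwise disjoint; the events
\[ A_n^{(k)} := \bigcap_{x \in B_n + \gamma^{(k)}} \{|\omega_x - v| < \epsilon_n\} \]
are then independent with $\P(A_n^{(k)}) = p_n^{|B_n|} > 0$, so $\sum_k \P(A_n^{(k)}) = \infty$. The second part of the Borel--Cantelli lemma (Theorem~\ref{t:bc}) yields, $\P$-a.s., some $\gamma_n$ with $\omega \in A_n^{(\gamma_n)}$. Setting $\tilde\psi_n := U_{\gamma_n}\psi_n$ and using the translation invariance $U_{\gamma_n}^*(-\Delta)U_{\gamma_n} = -\Delta$ of the free Laplacian,
\[ \norm{(H_\omega - E)\tilde\psi_n} \leq \norm{(-\Delta - E_0)\psi_n} + \lambda \norm{(V_\omega - v)\tilde\psi_n} \leq (1+\lambda)\epsilon_n \to 0, \]
so $E \in \sigma(H_\omega)$ a.s. A countable-dense-subset argument applied to pairs $(E_0, v) \in \sigma(-\Delta) \times \supp\mu$, intersecting the corresponding countably many full-measure events and using closedness of $\sigma(H_\omega)$, promotes this to the full inclusion.

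For the reverse inclusion $\sigma(H_\omega) \subseteq \sigma(-\Delta) + \lambda\supp\mu$ a.s., I first note that almost surely $\omega_x \in \supp\mu$ for every $x$ (countable intersection of full-measure events). Given $E \in \sigma(H_\omega)$ with Weyl sequence $\psi_n$, the strategy is to cover the compact set $\supp\mu$ by finitely many intervals $I_1, \ldots, I_k$ of radius $\eta$ centered at $v_j \in \supp\mu$, and apply a pigeonhole argument to the weights $\norm{\ifu{A_{n,j}}\psi_n}^2$, where $A_{n,j} := \{x : \omega_x \in I_j\}$, to extract $j = j(n)$ with $\norm{\ifu{A_{n,j(n)}}\psi_n} \geq 1/\sqrt{k}$. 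A smooth Combes-type cutoff of $\psi_n$ onto $A_{n,j(n)}$ should then produce, up to boundary errors, an approximate Weyl sequence for $-\Delta$ at energy $E - \lambda v_{j(n)}$; letting $\eta \downarrow 0$ and extracting a subsequence with $v_{j(n)} \to v \in \supp\mu$ (by compactness) gives $E - \lambda v \in \sigma(-\Delta)$, as desired.

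The main obstacle is the reverse inclusion: cutting $\psi_n$ onto the geometrically irregular random site-set $A_{n,j(n)}$ introduces kinetic boundary terms from $-\Delta$ that are a priori of order one and threaten to destroy the approximate eigenvalue equation. Controlling them requires a careful choice of smooth cutoff with small gradient adapted to the random geometry, and it is here that ergodicity of $\omega$ is essential, ensuring that favorable cutoff geometries (where $A_{n,j(n)}$ contains large connected blocks on which $\psi_n$ carries a non-negligible fraction of its mass) arise with positive density.
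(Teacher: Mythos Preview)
Your argument for the inclusion $\sigma(-\Delta) + \lambda\supp\mu \subseteq \sigma(H_\omega)$ is correct and is essentially the paper's proof: translate a compactly supported Weyl sequence for $-\Delta$ to a region where the potential is nearly the constant $v$, using Borel--Cantelli to locate such regions. The paper packages the Borel--Cantelli step as a separate proposition (Proposition~\ref{prop:what}) and performs the countable intersection of full-measure events there, whereas you defer it to a dense-subset argument at the end; the content is the same.

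The reverse inclusion is where your proposal goes astray. The paper dispatches $\sigma(H_\omega) \subseteq \sigma(-\Delta) + \lambda\supp\mu$ in one line via a general perturbation bound (citing \cite[Theorem~V.4.10]{Ka}) after noting that $\sigma(V_\omega)=\supp\mu$ almost surely. At the most elementary level, the operator inequalities $0 \le -\Delta \le 4d$ and $\lambda\inf\supp\mu \le \lambda V_\omega \le \lambda\sup\supp\mu$ (valid a.s.\ since each $\omega_x\in\supp\mu$) immediately yield
\[
\sigma(H_\omega)\subseteq[\lambda\inf\supp\mu,\ 4d+\lambda\sup\supp\mu],
\]
which is exactly $\sigma(-\Delta)+\lambda\supp\mu$ when $\supp\mu$ is an interval. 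No Weyl sequences are needed for this direction.

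Your proposed route --- cutting a Weyl sequence $\psi_n$ for $H_\omega$ onto the random level set $A_{n,j(n)}=\{x:\omega_x\in I_{j(n)}\}$ to manufacture a Weyl sequence for $-\Delta$ --- has a genuine gap that you yourself flag but do not close. The sets $A_{n,j(n)}$ are percolation-type random subsets of $\ZZ^d$ with no geometric regularity; any cutoff adapted to them produces commutator terms with $-\Delta$ that are of order one. Your suggested rescue via ergodicity does not work: ergodicity does guarantee that $A_{n,j(n)}$ contains arbitrarily large cubes \emph{somewhere} in $\ZZ^d$, but you have no control over where the mass of the \emph{given} Weyl sequence $\psi_n$ sits relative to those cubes. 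The sequence $\psi_n$ is handed to you by Weyl's criterion for $H_\omega$, and nothing prevents it from living entirely on regions where $A_{n,j(n)}$ is maximally fragmented. In short, this direction is the easy one and should be done by soft operator bounds, not by trying to invert the Weyl construction.
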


The proof of this theorem can be found in the standard references, e.g. \cite{KuSo80,K,S10}, but because it is rather simple and illustrates the idea behind ergodicity, we include it here. First, we need some auxiliary results. The following result corresponds to \cite[Proposition 3.8]{K},
\begin{prop}\label{prop:what}There exists a set $\Omega_0\subset\Omega$ with $\P(\Omega_0)=1$ such that given any $\omega\in\Omega_0$, a finite set $\Lambda\subset\ZZ^d$, any sequence $v=(v_i)_{i\in\ZZ^d}$ with $v_i\in\supp\mu$, and any $\epsilon >0$, there exists a sequence of vectors $(x_n)_{n\in\ZZ^d}\subset\ZZ^d$ with $\norm{x_n}_\infty\rightarrow\infty$ such that
\be \sup_{i\in\L}\abs{v_i -V_\omega(i+x_n)}<\epsilon. \ee
\end{prop}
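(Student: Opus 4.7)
The plan is to combine a countability reduction with the independent (second) part of the Borel--Cantelli Lemma. The key observation is that ``$v_i \in \supp \mu$'' forces every open neighborhood of $v_i$ to have positive $\mu$-measure, so at a single site the potential $\om_x$ takes values within $\eps$ of $v_i$ with positive probability; independence across sites then lets us ``hunt'' for a good translate.

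\textbf{Countable reduction.} I would first fix a countable dense subset $D \subset \supp \mu$, which exists since $\supp \mu \subset [a,b]$ is separable. The collection $\mathcal{C}$ of triples $(\L, w, \eps)$ with $\L \subset \ZZ^d$ a finite set, $w : \L \to D$, and $\eps \in \QQ_{>0}$ is countable. I would first prove the conclusion for every triple in $\mathcal{C}$ simultaneously on a single full-probability set $\Omega_0$, and afterwards upgrade to arbitrary $v$ (with $v_i\in\supp\mu$) and arbitrary $\eps>0$ by approximation.

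\textbf{Applying Borel--Cantelli.} Fix $(\L, w, \eps) \in \mathcal{C}$ and set $R := \max_{i,j \in \L}\norm{i-j}_\infty$. Choose $(y_n)_{n \in \NN} \subset \ZZ^d$ with $\norm{y_n}_\infty \to \infty$ and $\norm{y_n - y_m}_\infty > 2R$ for $n \neq m$, for instance $y_n = n(2R+1)(1,0,\ldots,0)$; then the translates $\L + y_n$ are pairwise disjoint. Define
\be A_n := \set{\om \in \Omega : |\om_{i+y_n} - w_i| < \eps \ \text{for all } i \in \L}. \ee
Since $A_n$ depends only on $(\om_x)_{x \in \L + y_n}$, the pairwise disjointness of the index sets combined with the i.i.d.\ nature of $(\om_x)_{x\in\ZZ^d}$ makes $(A_n)_{n \in \NN}$ a sequence of \emph{independent} events, and
\be \P(A_n) = \prod_{i \in \L} \mu\pa{(w_i - \eps, w_i + \eps)}. \ee
Every factor is strictly positive because $w_i \in \supp \mu$ (apply the definition of $\supp$ to the subinterval $[w_i - \eps/2,\, w_i + \eps/2] \subset (w_i-\eps,w_i+\eps)$). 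Hence $\sum_n \P(A_n) = \infty$, and Theorem~\ref{t:bc}(ii) yields $\P(\limsup_n A_n) = 1$.

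\textbf{Diagonal intersection and approximation.} Let $\Omega_0$ be the intersection, over the countable family $\mathcal C$, of the corresponding sets $\limsup_n A_n$; as a countable intersection of full-probability events, $\P(\Omega_0) = 1$. Now fix $\om \in \Omega_0$, a finite $\L \subset \ZZ^d$, a sequence $(v_i)$ with $v_i \in \supp \mu$, and $\eps > 0$. Pick $w_i \in D$ with $|v_i - w_i| < \eps/2$ for each $i \in \L$ (finitely many choices) and a rational $\eps'$ with $0 < \eps' < \eps/2$. Applying the previous step to $(\L, w, \eps') \in \mathcal C$ yields a sequence $(x_n) \subset \ZZ^d$ with $\norm{x_n}_\infty \to \infty$ and $|\om_{i+x_n} - w_i| < \eps'$ for all $i \in \L$, and the triangle inequality gives $|\om_{i+x_n} - v_i| \le \eps' + \eps/2 < \eps$, as required.

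\textbf{Expected main obstacle.} The only genuinely subtle step is the countable reduction: Borel--Cantelli requires a countable family of events, yet the strict positivity of $\P(A_n)$ demands $w_i \in \supp\mu$, not merely $w_i \in [a,b]\cap\QQ$ (a generic rational would, in general, lie outside $\supp\mu$ and give probability zero). Choosing $D$ dense inside $\supp\mu$ itself resolves this. The geometric disjointification of the translates $\L + y_n$, the independence, and the final triangle-inequality upgrade are all routine given the product structure of $\P$.
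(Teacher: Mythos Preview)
Your proof is correct and follows essentially the same route as the paper: disjoint translates of $\Lambda$, independence and positivity of the events from $v_i\in\supp\mu$, the second Borel--Cantelli lemma, and a countable reduction via a dense subset of $\supp\mu$ together with rational $\eps$. Your write-up is in fact slightly tidier than the paper's in two spots: you index target patterns by $w:\Lambda\to D$ (manifestly countable) whereas the paper writes $v\in Q^{\ZZ^d}$, and you spell out the final triangle-inequality upgrade from $w$ to an arbitrary $v$, which the paper leaves implicit.
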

\begin{proof}
Let $\Lambda\subset\mathbb Z^d$ be a finite set,
{$(v_i)_{i\in\ZZ^d}$ with $v_i\in\rm{supp}\,\mu$} and $\epsilon>0$. Take a sequence $(x_n)_{n\in\ZZ^d}$ such that $\norm{x_n-x_m}_\infty>\rm{diam}\, (\Lambda)$ for $n\neq m$. We write $\L(x_n):=\L+x_n$, and consider the events
\be \label{eq:A1}A_{\eps,v}^\L(x_n):=\{\omega\in\Omega:\,\sup_{i\in\Lambda(x_n)}\abs{V_\omega(i+x_n)-v_i}<\epsilon\}.\ee
Since the $\omega_i$ are i.i.d., the events $A_{\eps,v}^\L(x_n)$ and $A_{\eps,v}^\L(x_m)$ are independent for $n\neq m$ and moreover, since $v_i\in \rm{supp}\,\mu$, we have that $\mathbb P(A_{\eps,v}^\L(x_n))=\mathbb P(A_{\eps,v}^\L(0))>0$ for all $n$, therefore
\be \sum_n \mathbb P(A_{\eps,v}^\L(x_n))=\infty.\ee
From Lemma \ref{t:bc} (Borel-Cantelli) we deduce that
\be\label{eq:A2}\mathbb P\left(A_\infty\left(\Lambda,v,\epsilon\right)\right)=1,\ee
where
\begin{align} A_\infty\left(\Lambda,v,\epsilon\right)& := \bigcap_N\bigcup_{n\geq N} A_{\eps,v}^\L(x_n)\\
&= \{\omega\in\Omega:\,\omega\in A_{\eps,v}^\L(x_n)\text{ for infinitely many }n\}.
\end{align}
Now, we want to take $\L$ in the space $F$ of all possible finite subsets of $\mathbb Z^d$, which is countable.
 In the same way, we want to consider all possible sequences $(v_i)$ with $v_i\in \rm{supp }\, \mu$ and all $\epsilon>0$. Taking a dense countable subset $Q\subset\rm{supp }\, \mu$ and $\epsilon=1/k$ with $k\in\NN$, we get
\be \mathbb P(\Omega_0):=\mathbb P\left( \bigcap_{\substack{v\in Q^{\ZZ^d}\\ \Lambda\in F,\,k\in\NN}}A_\infty(\Lambda,v,1/k)\right)=1,\ee
where we used the fact that a countable intersection of sets of probability one is of probability one.
\end{proof}

We call a configuration $\omega$ restricted to a compact set a \emph{pattern}. Proposition \ref{prop:what} says that for any given pattern, we can find a sparse sequence of sets in space of the same size such that in those sets the potential looks \emph{almost} like the original pattern, see Fig. \ref{patt}. The proof can be found in \cite[Proposition 3.8]{K} and relies on one hand on the underlying lattice structure of the potential, and the fact that the random variables on each site $\omega_i$ are independent and identically distributed. This implies that events defined by the potential looking like a prescribed configuration in disjoint regions of space are independent and have the same probability. W. Kirsch describes this in \cite{K} as "whatever can happen will happen, in fact infinitely often".

\begin{figure}[h]
  \centering
  \includegraphics[width=10cm]{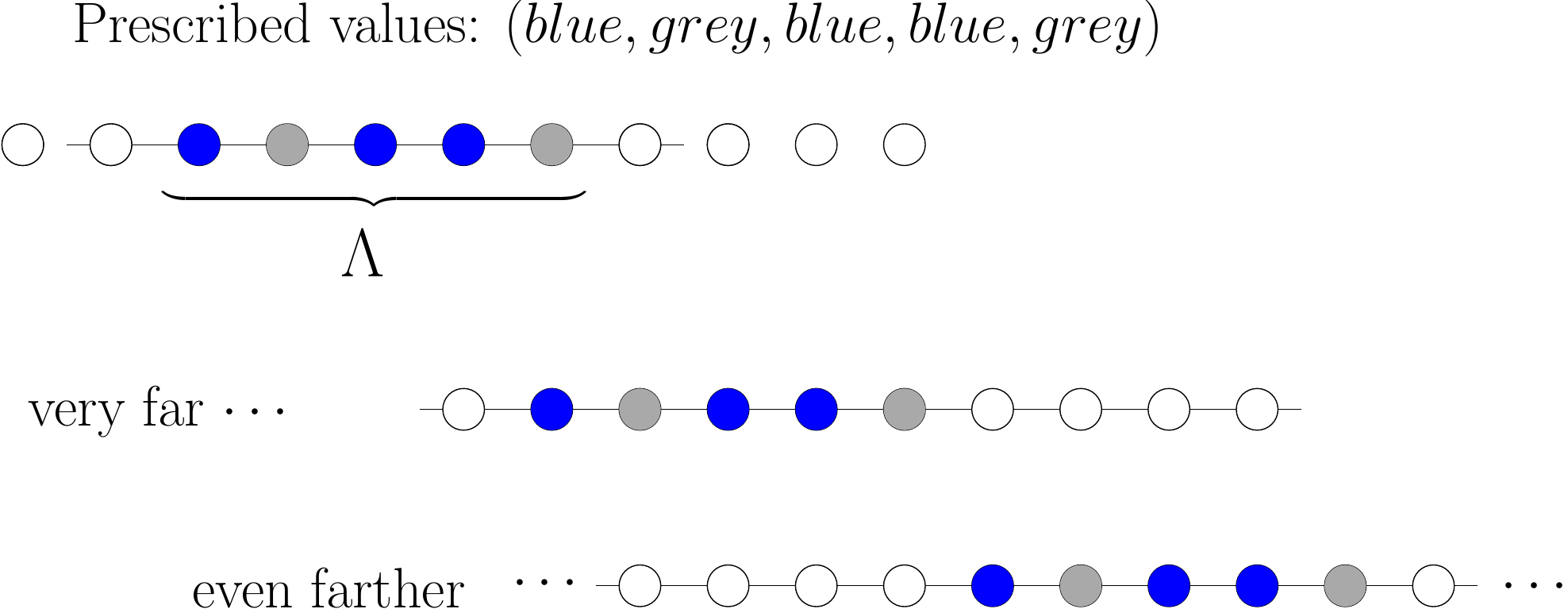}\\
  \caption{Example in $\mathbb Z$ with $\omega_j=\{blue,grey\}$.}
  \label{patt}
\end{figure}

\smallskip

The following result, known as  Weyl's Criterion, is very useful when we need to determine the spectrum of a self-adjoint operator \cite[Theorem 7.22]{Weid}, see also \cite[Theorem 7.6]{Go}.

\begin{thm}\label{t:weyl} Let $H$ be a self-adjoint operator on a Hilbert space $\mathcal H$, with core $\mathcal H_0$, then $E\in\sigma(H)$ if and only if there exists a sequence $(\varphi_n)_{n\in\NN}\subset \mathcal H_0$ such that $\liminf_n\norm{\varphi_n}>0$ and $\norm{(H-E)\varphi_n}\rightarrow 0$ when $n$ tends to infinity.
\end{thm}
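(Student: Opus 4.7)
The plan is to prove the two implications separately, starting with the easy contrapositive of the ``if'' direction, then constructing the sequence in the ``only if'' direction via the spectral theorem combined with a core-approximation step.

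For the ``if'' direction, I would argue by contraposition. Assume $E \notin \sigma(H)$. Then $(H-E)$ has a bounded inverse $R_E = (H-E)^{-1}$ defined on all of $\mathcal H$. For any sequence $(\varphi_n) \subset \mathcal H_0 \subset \dom(H)$, we have the identity $\varphi_n = R_E(H-E)\varphi_n$, hence
\be
\norm{\varphi_n} \leq \norm{R_E}\,\norm{(H-E)\varphi_n}.
\ee
If $\norm{(H-E)\varphi_n} \to 0$, this forces $\norm{\varphi_n} \to 0$, so $\liminf_n \norm{\varphi_n} = 0$, contradicting the hypothesis.

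For the ``only if'' direction, assume $E \in \sigma(H)$. Let $\{P_I\}$ denote the projection-valued spectral measure of $H$ given by the spectral theorem. By the characterization of the spectrum in terms of the spectral measure (used already in the discussion following Definition of ergodicity), $P_{(E-1/n,E+1/n)} \neq 0$ for every $n \in \NN$. For each $n$, choose a unit vector $\psi_n \in \operatorname{Ran} P_{(E-1/n,E+1/n)}$. Then $\psi_n \in \dom(H)$ and the functional calculus gives
\be
\norm{(H-E)\psi_n}^2 = \int_{E-1/n}^{E+1/n} |\lambda-E|^2 \, d\mu_{H,\psi_n}(\lambda) \leq \frac{1}{n^2},
\ee
so that $\norm{(H-E)\psi_n} \leq 1/n \to 0$ while $\norm{\psi_n}=1$. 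This produces the desired sequence in $\dom(H)$; the final step is to pass from $\dom(H)$ to the core $\mathcal H_0$. Since $\mathcal H_0$ is a core for $H$, for each $n$ there exists $\varphi_n \in \mathcal H_0$ with $\norm{\varphi_n - \psi_n} < 1/n$ and $\norm{H\varphi_n - H\psi_n} < 1/n$. A triangle inequality then yields
\be
\norm{(H-E)\varphi_n} \leq \norm{(H-E)\psi_n} + \norm{H(\varphi_n-\psi_n)} + |E|\,\norm{\varphi_n-\psi_n} \leq \frac{2+|E|}{n},
\ee
which tends to zero, and $\norm{\varphi_n} \geq 1 - 1/n$, so $\liminf_n \norm{\varphi_n} \geq 1 > 0$.

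The main obstacle is the core-approximation step in the ``only if'' direction: the spectral theorem naturally produces near-eigenvectors in $\dom(H)$, but the statement demands them in $\mathcal H_0$. The rescue is the very definition of a core, which ensures graph-norm density of $\mathcal H_0$ in $\dom(H)$ and lets us perturb the $\psi_n$ without spoiling either the norm lower bound or the smallness of $\norm{(H-E)\psi_n}$. For the Anderson model on $\ell^2(\Gamma)$ the operator is bounded, so $\dom(H)=\mathcal H$ and any dense subspace (e.g.\ $\ell^2_c(\Gamma)$) is a core, making this step automatic; the argument above covers the general self-adjoint case needed for the statement.
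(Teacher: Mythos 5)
Your proof is correct. Note that the paper does not actually prove Theorem \ref{t:weyl}: it is quoted as a standard fact with a citation to Weidmann's book, so there is no in-text argument to compare against. Your two-step argument --- the resolvent bound $\norm{\varphi_n}\leq\norm{(H-E)^{-1}}\,\norm{(H-E)\varphi_n}$ for the easy direction, and spectral projections onto shrinking intervals followed by a graph-norm approximation from the core for the converse --- is precisely the standard proof found in the cited reference, and all the details (membership of $\psi_n$ in $\dom(H)$, the functional-calculus estimate, the stability of both the norm lower bound and the smallness of $\norm{(H-E)\cdot}$ under the core perturbation) are handled correctly.
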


\begin{rem} The sequence $(\varphi_n)_{n\in\NN}$ is called a Weyl sequence for $H$ and $E$.
\end{rem}
\begin{proof}[Proof of Theorem \ref{t:ks}]
Since $V_\omega$ is a multiplication operator, its spectrum is given by the closure of its essential range, $\sigma(V_\omega)=\overline{\{\omega_x;x\in\ZZ^d\}}=\supp\mu$ almost-surely. Using \cite[Theorem V.4.10]{Ka}
we get
\be \sigma(-\Delta+\lambda V_\omega)\subseteq \sigma(-\Delta)+\lambda\supp\mu. \ee
In order to prove the converse, let $E=E_1+E_2$ with $E_1\in \sigma(-\Delta)$ and $E_2\in \lambda\supp\mu$. Since $\ell_c^2(\ZZ^d)$ is a core for $-\Delta$, by Theorem \ref{t:weyl}, there exists a Weyl sequence $\varphi_n\in\ell^2_c(\ZZ^d)$ for $-\Delta$ and $E_1$ such that $\norm{(-\Delta+E_1)\varphi_n}\rightarrow 0$. Define $\psi_{n,\gamma}=\varphi_n(\cdot-\gamma)$ for $\gamma\in\ZZ^d$ and note that $-\Delta$ is invariant under translations by elements $\gamma\in\ZZ^d$. Therefore $\psi_{n,\gamma}$ is also a Weyl sequence for $-\Delta$ and $E_1$.
Now, let $\Lambda_n\subset \ZZ^d$ be a box containing $\supp\varphi_n$. Since $E_2\in\lambda\supp\mu$, Proposition \ref{prop:what} gives the existence of a set $\Omega_0$ of probability one for which for every $\omega\in\Omega_0$, there exists a sequence $(\gamma^{(n)}_m)_{m\in\ZZ^d}$ with $\norm{\gamma^{(n)}_m}_\infty\rightarrow \infty$ when $m\rightarrow \infty$ such that
\be\label{eq:what} \sup_{i\in\Lambda_n}\abs{E_2-\lambda V_\om(i+\gamma_m^{(n)})}<\frac{1}{n} \ee
We define $\gamma_n:=\gamma_n^{(n)}$, and consider the sequence $\psi_n=\psi_{n,\gamma_n}$. Then
\begin{align} \norm{(-\Delta+\lambda V_\om-E)\psi_n}& \leq\norm{(-\Delta-E_1)\psi_n}+\norm{(\lambda V_\om-E_2)\psi_n}.
\end{align}
The first term in the r.h.s tends to zero when $n$ tends to infinity because $\psi_n$ is a Weyl sequence for $-\Delta$, while the second term tends to zero due to \eqref{eq:what}. Therefore $\psi_n$ is a Weyl sequence for $H_\omega$ and $E$ for $\om\in\Omega_0$, so by Theorem \ref{t:weyl}, $E\in \sigma(H_\omega)$ almost surely.
\end{proof}

Note that in the proof of Theorem \ref{t:ks} the Weyl sequence $\psi_n$  can be chosen to be orthogonal and normalized, therefore \cite[Theorem 7.24]{Weid} implies
\begin{prop} The operator $H_\omega$ has purely essential spectrum.
\end{prop}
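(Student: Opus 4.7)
The plan is to upgrade the Weyl sequence constructed in the proof of Theorem \ref{t:ks} into a singular Weyl sequence, i.e.\ an orthonormal sequence $(\psi_n)_{n\in\NN}\subset\ell^2(\ZZ^d)$ with $\|(H_\omega-E)\psi_n\|\to 0$; by \cite[Theorem 7.24]{Weid} this places $E$ in $\sigma_{\text{ess}}(H_\omega)$, and since $E\in\sigma(H_\omega)$ was arbitrary (and deterministic by ergodicity combined with Theorem \ref{t:ks}), we conclude $\sigma(H_\omega)=\sigma_{\text{ess}}(H_\omega)$ almost surely.

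Fix $\om\in\Omega_0$ and $E\in\sigma(H_\omega)=\sigma(-\Delta)+\lambda\supp\mu$, and write $E=E_1+E_2$ as in Theorem \ref{t:ks}. Since $\ell^2_c(\ZZ^d)$ is a core for $-\Delta$, Weyl's criterion (Theorem \ref{t:weyl}) provides a sequence $\varphi_n\in\ell^2_c(\ZZ^d)$ with $\|\varphi_n\|=1$ and $\|(-\Delta-E_1)\varphi_n\|\to 0$; let $\Lambda_n\subset\ZZ^d$ be a finite box containing $\supp\varphi_n$. Proposition \ref{prop:what} then yields, for each $n$, an infinite sequence $(\gamma_m^{(n)})_{m}$ with $\|\gamma_m^{(n)}\|_\infty\to\infty$ satisfying \eqref{eq:what}.

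The key additional step is to enforce orthogonality. I extract inductively a subsequence $\gamma_n:=\gamma_{m_n}^{(n)}$ such that the translated supports $\Lambda_n+\gamma_n$ are pairwise disjoint; this is possible because the $\gamma_m^{(n)}$ are unbounded in $m$ for every fixed $n$, while at each step I only need to avoid a finite union of previously used boxes. Setting $\psi_n(\cdot):=\varphi_n(\cdot-\gamma_n)$, the functions $\psi_n$ have pairwise disjoint supports, hence are orthogonal, and satisfy $\|\psi_n\|=1$ since the discrete Laplacian is translation invariant and $\varphi_n$ is normalized. The triangle inequality estimate from the proof of Theorem \ref{t:ks},
\begin{equation*}
\|(H_\omega-E)\psi_n\|\le\|(-\Delta-E_1)\psi_n\|+\|(\lambda V_\om-E_2)\psi_n\|,
\end{equation*}
together with translation invariance of $-\Delta$ for the first term and \eqref{eq:what} for the second term, shows $\|(H_\omega-E)\psi_n\|\to 0$.

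Thus $(\psi_n)$ is an orthonormal Weyl sequence for $H_\omega$ at $E$, so $E\in\sigma_{\text{ess}}(H_\omega)$ by \cite[Theorem 7.24]{Weid}. The main (minor) obstacle is the inductive disjointness extraction, but this is routine once one observes that Proposition \ref{prop:what} provides \emph{infinitely many} admissible translates for each $n$, so finitely many previously chosen boxes can always be avoided.
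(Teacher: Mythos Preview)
Your proposal is correct and is exactly the approach the paper takes: the paper simply remarks that the Weyl sequence $\psi_n$ from the proof of Theorem \ref{t:ks} can be chosen orthogonal and normalized, and then invokes \cite[Theorem 7.24]{Weid}. You have supplied the details the paper omits (the inductive extraction of translates with pairwise disjoint supports), but the strategy is identical.
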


We see from the previous result that in order to determine the almost-sure spectrum $\Sigma$ of $H_\omega$, we need to determine first $\sigma(-\Delta)$. In the case $\Gamma$ is of constant degree, the difference between $\sigma(-\Delta)$ and  $\sigma(\mathcal A_{\Gamma})$ is a constant $\deg_\Gamma$. Since shifting the spectrum by a constant does not change the spectral properties, one can absorb it in the potential and therefore the problem turns into determining the spectrum of the adjacency matrix $A_\Gamma$. In a (common) abuse of notation, we consider from now on the operator $H_\omega=\mathcal A_{\Gamma}+V_\omega$. From \cite[Sections 4.2 and 4.3]{Go} we have that
\begin{itemize}
\item[a)]For $\Gamma=\mathbb Z^d$, $\sigma(\mathcal A_{\ZZ^d})=[-2d,2d]$. Therefore, for the Anderson model $H_\omega$ on $\ell^2(\mathbb Z^d)$, we have \be\sigma(H_{\om})=[-2d,2d]+\lambda\,\textrm{supp}\mu.\ee
\item[b)]For $\Gamma=\mathbb B$, $\sigma(\mathcal A_{\mathbb B})=[-2\sqrt{K},2\sqrt{K}]$, where $K+1$ is the degree of $\mathbb B$. Therefore, for the Anderson model $H_{\omega}$ on $\ell^2(\mathbb B)$, \be \sigma(H_{\om})=[-2\sqrt{K},2\sqrt{K}]+\lambda\,\textrm{supp}\mu\ee.
\end{itemize}

\subsubsection{Non-ergodic operators}\label{s:ne}
The proof of Theorem \ref{t:ks} using Weyl sequences relies on the covariance property \eqref{eq:erg}. However,
in the case where the operator is not ergodic in the sense of \eqref{eq:erg}, under certain conditions a weaker result holds, which can be enough for our purposes. Such is the case of an Anderson operator in which the potential is forced to be zero in some prescribed regions of $\ZZ^d$ in such a way that it looses the underlying periodic structure of the lattice (see \cite{RM13,EK}). The Anderson model with missing sites enters in the more general setting of the so-called Delone-Anderson operators.

\begin{figure}[h]
  \centering
  \includegraphics[width=8cm]{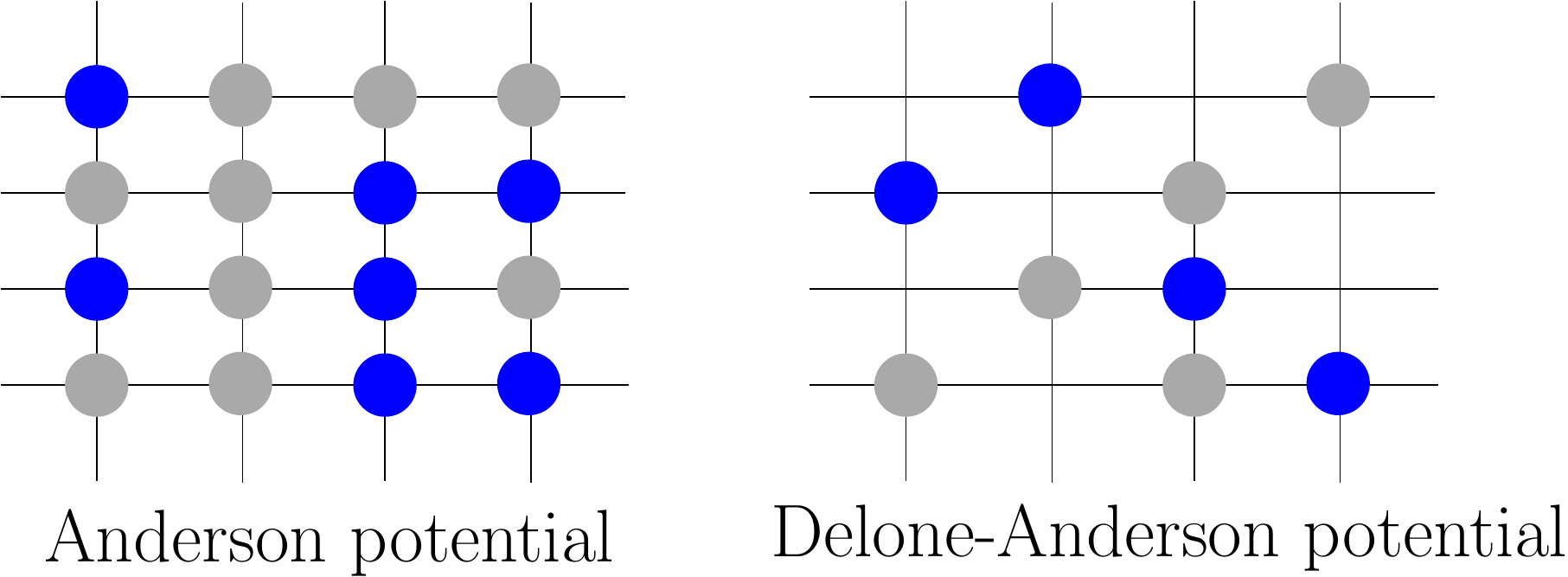}\\
\end{figure}

For this model we have the following
\begin{thm}\label{t:nerg} Let $D$ be a discrete subset of $\ZZ^d$ and $H_\dom=-\Delta+V_\dom$ acting on $\ell^2(\ZZ^d)$, where
\be V_\dom(x)=\begin{cases}\omega_x & x\in D \\ 0  & x\notin D \end{cases}.
\ee
Assume the following:
\begin{itemize}
\item[i.] The set $D$ is a Delone set, that is, there is a constant $R$ such that $[-R+x,R+x]^d\cap D\neq \emptyset$ for all $x\in\ZZ^d$.
 \item[ii.] $\omega_x$ are i.i.d. random variables with common probability distribution $\mu$, compactly supported with $0\in\supp\mu$.
\end{itemize}
Then,
\be \sigma(-\Delta)\subseteq \sigma(H_\omega) \,\text{ for a.e. }\omega\in\Omega.\ee
\end{thm}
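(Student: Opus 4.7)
The plan is to mimic the Weyl sequence argument used in the proof of Theorem \ref{t:ks}, but replacing the full translation-covariance of the ergodic case by the Delone property together with the hypothesis $0\in\supp\mu$. The assumption $0\in\supp\mu$ plays the role of $E_2=0$: we only need to find translates of a Weyl sequence for $-\Delta$ on whose supports the random potential $V_{\dom}$ is simultaneously small.

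First, fix $E\in\sigma(-\Delta)$. Since $\ell^2_c(\ZZ^d)$ is a core for $-\Delta$, Weyl's criterion (Theorem \ref{t:weyl}) yields a sequence $\vphi_n\in\ell^2_c(\ZZ^d)$, which may be taken to be normalized, such that $\norm{(-\Delta-E)\vphi_n}\to 0$. Because $-\Delta$ commutes with lattice translations, for any $\gamma\in\ZZ^d$ the shifted function $\psi_{n,\gamma}(\cdot):=\vphi_n(\cdot-\gamma)$ is again a normalized approximate eigenfunction for $-\Delta$ with the same energy $E$. Let $\Lambda_n\subset\ZZ^d$ be a finite box containing $\supp\vphi_n$, so that $\supp\psi_{n,\gamma}\subset\Lambda_n+\gamma$.

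Next, we build the randomness-selecting translations. For each $n$, pick a sequence $(\gamma_m^{(n)})_{m\in\NN}\subset\ZZ^d$ so far apart that the translated boxes $\Lambda_n+\gamma_m^{(n)}$ are pairwise disjoint; the events
\[
A_m^{(n)}:=\Bigl\{\omega\in\Omega:\ \sup_{i\in\Lambda_n+\gamma_m^{(n)}}\abs{V_{\dom}(i)}<\tfrac{1}{n}\Bigr\}
\]
depend only on the $\omega_i$ with $i\in D\cap(\Lambda_n+\gamma_m^{(n)})$, hence are independent across $m$. Since $D$ is Delone and $\Lambda_n$ is finite, $|D\cap(\Lambda_n+\gamma_m^{(n)})|\le|\Lambda_n|$, and because $0\in\supp\mu$ the probability of each $A_m^{(n)}$ is bounded below by $\mu([-1/n,1/n])^{|\Lambda_n|}>0$, independently of $m$. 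Thus $\sum_m\P(A_m^{(n)})=\infty$, and Borel-Cantelli (Theorem \ref{t:bc}) gives a full-measure set $\Omega_n$ on which infinitely many $A_m^{(n)}$ occur. Setting $\Omega_0:=\bigcap_n\Omega_n$ preserves probability one.

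Finally, for each $\omega\in\Omega_0$ and each $n$ choose, by the conclusion above, an index $m(n)$ such that $\omega\in A_{m(n)}^{(n)}$; define $\gamma_n:=\gamma_{m(n)}^{(n)}$ and $\psi_n:=\psi_{n,\gamma_n}$. Then $\norm{\psi_n}=1$ and
\[
\norm{(H_{\dom}-E)\psi_n}\le\norm{(-\Delta-E)\psi_n}+\norm{V_{\dom}\psi_n}\le\norm{(-\Delta-E)\vphi_n}+\tfrac{1}{n},
\]
which tends to zero. By Weyl's criterion, $E\in\sigma(H_{\dom})$ for every $\omega\in\Omega_0$, and since $\sigma(-\Delta)$ is a closed subset of $\RR$ (and the above argument applies along a countable dense subset, with a standard closure step), we obtain $\sigma(-\Delta)\subseteq\sigma(H_{\dom})$ almost surely.

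The only subtle point is that, without ergodicity, we no longer have $\P(A_m^{(n)})$ equal across $m$; the Delone property is precisely what is needed to keep this probability uniformly bounded away from zero so that Borel-Cantelli can be applied. Everything else follows the template of the proof of Theorem \ref{t:ks}.
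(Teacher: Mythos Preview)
Your proof is correct and follows the same Weyl-sequence/Borel--Cantelli template as the paper, which explicitly reduces to the proof of Theorem~\ref{t:ks} with the choice $v_i\equiv 0$; the only cosmetic difference is that you handle the $E$-dependence of $\Omega_0$ via a countable-density-plus-closure step, whereas the paper invokes Proposition~\ref{prop:what} to produce a single full-measure set valid for all finite boxes $\Lambda$ and all $\epsilon=1/k$ at once. One small correction to your final remark: the uniform lower bound $\P(A_m^{(n)})\ge \mu((-1/n,1/n))^{|\Lambda_n|}$ comes from the trivial upper bound $|D\cap(\Lambda_n+\gamma_m^{(n)})|\le|\Lambda_n|$, which holds for \emph{any} $D\subset\ZZ^d$ and has nothing to do with the Delone (relative-density) condition---the latter gives a \emph{lower} bound on $\sharp(\Lambda\cap D)$, which is recorded in the paper's proof but is not actually needed for this particular spectral inclusion.
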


\begin{proof}
We follow the proof of Theorem \ref{t:ks} with $E\in\sigma(-\Delta)$, using Proposition \ref{prop:what} with the particular choice $v_i=0$ $\forall i\in\mathbb Z^d$, and taking a finite set $\Lambda$ such that $[-R,R]^d\cap \ZZ^d\subset \Lambda$ see \cite[Section 6.4]{RM12}. In this case, the events \eqref{eq:A1} are, for the Delone set $D$,
\be \tilde A_{\epsilon,0}^\L(x_n)=\{\omega\in\Omega;\,\abs{\om_i}< \epsilon,\,\forall i\in\Lambda(x_n)\cap D\},\ee
where $\Lambda(x_n)=\Lambda+x_n$ as before. Then the events $\tilde A_{\epsilon,0}^\L(x_n)$ are independent, and $\P(\tilde A_{\epsilon,0}^\L(x_n))>0$ for any $n$. These events do not have the same probability, since we do not necessarily have $\sharp(\Lambda(x_n)\cap D) = \sharp(\Lambda(x_m)\cap D)$ for $n\neq m$. However, the following holds for $\Lambda$ uniformly in the position $x_n$
\be 1\leq C_{R,d,\Lambda}\leq \sharp (\L(x_n)\cap D)\leq C_{d,\Lambda} \ee
for positive constants $C_{R,d,\Lambda}$ and $C_{d,\Lambda}$ that depend on $R$ and the volume of the set $\L$. Therefore, for all $n$
\begin{align}
\P(A_{\epsilon,0}^\L(x_n) )&= \P(\abs{\omega_j}<\epsilon)^{\sharp (\L\cap D)}\\
&\geq \mu((-\epsilon,\epsilon))^{C_{d,\L}},
\end{align}
which implies \eqref{eq:A2} and the rest of the proof follows. We see that Proposition \ref{prop:what} holds even though the potential is not ergodic, because although  the probability of the event that the potential has values near zero on a compact set $\L$ might depend on the position of set $\L$, there is a lower bound that is uniform in the position, which is enough for the proof.
\end{proof}
In the case $\supp \mu\subseteq[0,\infty)$, Theorem \ref{t:nerg} implies we recover a portion of almost-sure spectrum at the lower spectral edge, which is enough to give a meaningful notion of the phenomenon of localization that will be described in Section \ref{s:loc}. For localization results on the Anderson model with missing sites, see \cite{RM13,EK,ES}.

\begin{rem}
The notion of ergodicity and the almost-sure constancy of the spectrum and spectral types holds in more general settings. For example, when the set $\Gamma$ changes with $\om$ and the operator $H_\om$ acts on the Hilbert space $\ell^2(\Gamma_\om)$. This problem is studied in an algebraic setting in \cite{LPV07}. For the particular case where $\Gamma_\om$ is a random graph subset of $\ZZ^d$ obtained through a bond percolation process, see \cite{Ve05a,Ve05b,KMue,MueS}. The case of operators on manifolds with random metrics and random potentials is treated in \cite{LPV03}. The study of the deterministic nature of the spectrum is often related to the study of ergodic theorems and the existence of the integrated density of states. In this field, a considerable part of the literature has been produced by the Chemnitz School (D. Lenz, P. Stollmann, I. Veseli\'c, M. Keller, M. Tautenhahn, F. Schwarzenberger and C. Schumacher, among others).
\end{rem}

\section{Types of localization}\label{s:loc}
The efforts to explain rigorously the phenomenon of Anderson localization led to the development of the theory of random
Schr\"odinger operators, of which the Anderson model $H_\omega$ is one example.
The first mathematical results on localization for random operators showed the existence of pure point spectrum in the one-dimensional setting \cite{GMP77,Pas80}. Stronger results showing in addition to pure point spectrum the exponential decay of eigenfunctions (Anderson localization) were obtained in \cite{KuSo80}. The first results for dimension $d>1$ were given by Fr\"ohlich and Spencer \cite{FrS83} showing the absence of diffusion uniformly in $t$:
\be\label{eq:ad} \sum_x\norm{x}^2\abs{\angles{\delta_x,e^{-itH_\omega}\delta_0}}^2<\infty,\quad\text{for a.e. }\om\in\Omega. \ee
Their work set the foundations of the Multiscale Analysis (MSA), a method to prove localization by an induction over finite-volume versions of the resolvent of the operator. Martinelli and Scoppolla \cite{MS} improved on the technique of \cite{FrS83} and were able to prove the absence of absolutely continuous spectrum, while the joint efforts of the aforementioned authors gave way to the first proof of Anderson localization in arbitrary dimension in \cite{FrMSS}. A stronger notion of localization called dynamical localization (see below) followed from the work of Aizenman and Molchanov \cite{AM93} and Aizenman \cite{A}. In these works, the authors developed the Fractional Moment Method (FMM), that together with the MSA are the two available techniques to prove localization in dimension $d\geq 1$. It was not until the 90s with the work of del Rio, Jitomirskaya, Last, and Simon \cite{dRJLS95} that it was understood that the definitions of localization were not equivalent and that what was called Anderson localization was not enough to describe the absence of quantum transport observed by Anderson in disordered materials.

The aforementioned results involve probability measures that are regular, i.e., at least of H\"older continuity. For the Anderson model with Bernoulli random variables, although it is expected that the model exhibits localization, the lack of regularity of the probability measure makes it a technically difficult problem. For the Bernoulli Anderson model in one dimension, acting on $\ell^2(\ZZ)$, the first proof of localization was obtained in \cite{CKM}, while for a continuous analogous of the Anderson model acting on $\Lp{\RR}$ a first proof was given in \cite{DSS}. In the continuous case Anderson localization was shown for dimensions $d\geq1$ in the ground-breaking work \cite{BoK05}. The authors developed a multi-scale scheme and relied on unique continuation properties of the eigenfunctions of finite-volume operators. In \cite{GK11} this MSA was improved to obtain dynamical localization.

At a more advanced level, many intertwined notions of localization can be defined, see \cite[Section 3]{Kl} for a zoology of localization types. For the reader's convenience, we summarize the most relevant ones studied in the literature.
\begin{defn}\label{def:loc} Let $H_\omega$ be a random Schr\"odinger operator acting on $\ell^2(\Gamma)$.
\begin{itemize}
\item[i.] We say that the operator $H_\omega$ exhibits \emph{spectral localization} in an interval $I$ if $\sigma(H_\om)\cap I=\sigma_{pp}(H_\om)\cap I$, almost surely.
    \item[ii.] We say that the operator $H_\om$ exhibits \emph{Anderson localization} (AL) in $ I$ if
$\sigma(H_\om)\cap  I=\sigma_{pp}(H_\om)\cap I$ with exponentially decaying eigenfunctions, almost surely.
\item[iii.] We say that $H_\omega$ exhibits \emph{dynamical localization} (DL) in $ I$ if there exist constants $0<c,C<\infty$, and $\zeta\in(0,1]$ such that for all $x,y\in\Gamma$,
\be\label{eq:dl}\mathbb E\left( \sup_{t\in\mathbb R}\abs{\langle \delta_y, e^{-itH_\omega}\chi_{ I}(H_\omega)\delta_x\rangle} \right)\leq Ce^{-c\textrm{dist}_\Gamma(x,y)^\zeta}. \ee
\end{itemize}
\end{defn}
We recall \cite[Theorem 8.5]{K}, whose proof is based on the RAGE Theorem (see \cite[Section 5.4]{CyFKS} or \cite{Go} in this volume).
\begin{thm}\label{t:pp}Dynamical localization implies pure point spectrum.
\end{thm}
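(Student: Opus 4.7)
The strategy is to combine the dynamical localization bound \eqref{eq:dl} with the RAGE theorem's characterization of the pure-point subspace, and then propagate the conclusion to all of $\chi_I(H_\om)\ell^2(\Gamma)$ by density. Recall that a corollary of the RAGE theorem (see \cite[Section 5.4]{CyFKS}) is that a vector $\varphi\in\cH$ belongs to $\cH_{pp}(H_\om)$ if and only if its time orbit $\{e^{-itH_\om}\varphi:t\in\RR\}$ is relatively compact in $\cH$. For $\cH=\ell^2(\Gamma)$ this reduces to boundedness of the orbit together with uniform tightness, namely, for every $\eps>0$ there exists a finite $\Lambda\subset\Gamma$ such that
\be
\sup_{t\in\RR}\sum_{y\notin\Lambda}|(e^{-itH_\om}\varphi)(y)|^2<\eps.
\ee
The plan is to verify this tightness criterion for the test vectors $\varphi_{x,\om}:=\chi_I(H_\om)\delta_x$, $x\in\Gamma$, and then conclude.

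For fixed $x\in\Gamma$, set $a_\om(y):=\sup_{t\in\RR}|\la\delta_y,e^{-itH_\om}\chi_I(H_\om)\delta_x\ra|$, which is bounded by $1$ because $\chi_I(H_\om)$ is an orthogonal projection. Dynamical localization gives $\E(a_\om(y))\le Ce^{-c\,\textrm{dist}_\Gamma(x,y)^\zt}$. Assumption \textbf{(A1)}, together with parameters $c,\zt$ compatible with the volume growth of $\Gamma$ (automatic for $\Gamma=\ZZ^d$; on the Bethe lattice this imposes a restriction such as $\zt=1$ with $c$ large enough to beat the exponential volume growth), ensures $\sum_y e^{-c\,\textrm{dist}_\Gamma(x,y)^\zt}<\infty$. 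By Fubini--Tonelli,
\be
\E\left(\sum_y a_\om(y)\right)\le C\sum_y e^{-c\,\textrm{dist}_\Gamma(x,y)^\zt}<\infty,
\ee
so $\sum_y a_\om(y)<\infty$ almost surely; since $0\le a_\om(y)\le 1$, also $\sum_y a_\om(y)^2<\infty$ almost surely.

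Consequently, given $\eps>0$, one may choose a finite $\Lambda=\Lambda(\eps,\om)\subset\Gamma$ with $\sum_{y\notin\Lambda}a_\om(y)^2<\eps$, and then
\be
\sup_{t\in\RR}\sum_{y\notin\Lambda}|(e^{-itH_\om}\varphi_{x,\om})(y)|^2\le\sum_{y\notin\Lambda}a_\om(y)^2<\eps,
\ee
which, combined with $\norm{\varphi_{x,\om}}\le 1$, yields the relative compactness of the orbit of $\varphi_{x,\om}$ and hence $\varphi_{x,\om}\in\cH_{pp}(H_\om)$ by RAGE. Since $\Gamma$ is countable, a single event of full probability handles all $x\in\Gamma$ simultaneously; then the density of $\mathrm{span}\{\delta_x\}_{x\in\Gamma}$ in $\ell^2(\Gamma)$, the continuity of $\chi_I(H_\om)$ and the closedness of $\cH_{pp}(H_\om)$ give $\chi_I(H_\om)\ell^2(\Gamma)\subset\cH_{pp}(H_\om)$ almost surely, equivalently $\sigma(H_\om)\cap I=\sigma_{pp}(H_\om)\cap I$ almost surely. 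The essential technical point is that $\sup_t$ sits \emph{inside} the expectation in \eqref{eq:dl}: this is precisely what permits upgrading a pointwise-in-$t$ decay estimate into the almost-sure uniform tightness of the entire orbit needed to invoke RAGE.
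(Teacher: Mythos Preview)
Your proof is correct and follows essentially the same route as the paper, which likewise derives pure point spectrum from the RAGE theorem after using the summability of the dynamical-localization bound \eqref{eq:dl} to control the tail of the time orbit. The only difference is that the paper first packages this summability as \emph{absence of transport} (bounding $\sup_t\norm{\abs{X}^p e^{-itH_\omega}\chi_I(H_\omega)\varphi}$ for all $p\ge 0$) and then appeals to RAGE, whereas you bypass the moment formulation and verify uniform tightness of the orbit of $\chi_I(H_\omega)\delta_x$ directly---a slightly more streamlined but equivalent argument.
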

\begin{defn} We say that $H_\omega$ exhibits \emph{absence of transport} in an interval $I\subset \RR$ if, for any $p\geq 0$ and any $\varphi\in\mathcal H$ with compact support, the following holds almost surely,
\be\label{eq:at}\sup_{t\in\mathbb R}\norm{\abs{X}^p e^{-itH_\omega}\chi_{I}(H_\omega)\varphi}<\infty,  \ee
where $\abs{X}$ is the multiplication operator defined by $\abs{X}\varphi(x)=\norm{x}\varphi(x)$ and $\chi_I(H_\omega)$ is the spectral projection of $H_\om$ associated to the interval $I$.
\end{defn}
Absence of transport means that, as time evolves, wave packets do not propagate in the medium. Because the definition of absence of transport is very strong (it holds for any $p\geq0$), one can show that it also implies pure point spectrum. This is in fact the proof given in \cite[Theorem 8.5]{K}. We retrieve the previous theorem from the following result, whose proof we take from \cite[Section 3]{S10},
\begin{thm}Dynamical localization implies absence of transport.
\end{thm}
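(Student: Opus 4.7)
The plan is to reduce the claim to the basis-vector estimate \eqref{eq:dl} by expanding $\varphi$ in the canonical basis and then bounding moment by moment.

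First, since $\varphi$ has compact support, write $\varphi=\sum_{x\in F}\varphi(x)\delta_x$ for a finite set $F\subset\Gamma$. Using the Parseval identity in the basis $(\delta_y)_{y\in\Gamma}$ and the fact that $|X|^p$ is diagonal in this basis, I obtain
\begin{equation*}
\bigl\||X|^p e^{-itH_\omega}\chi_I(H_\omega)\varphi\bigr\|^2 = \sum_{y\in\Gamma}\|y\|^{2p}\,\bigl|\langle\delta_y,e^{-itH_\omega}\chi_I(H_\omega)\varphi\rangle\bigr|^2 .
\end{equation*}
Since every summand is nonnegative, $\sup_t$ may be pushed inside the sum, and the triangle inequality yields
\begin{equation*}
\sup_t\bigl|\langle\delta_y,e^{-itH_\omega}\chi_I(H_\omega)\varphi\rangle\bigr|\le \sum_{x\in F}|\varphi(x)|\,\sup_t\bigl|\langle\delta_y,e^{-itH_\omega}\chi_I(H_\omega)\delta_x\rangle\bigr|.
\end{equation*}

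Next, applying the elementary bound $\bigl(\sum_{x\in F}a_x\bigr)^2\le |F|\sum_{x\in F}a_x^2$ and the trivial inequality $|\langle\delta_y,e^{-itH_\omega}\chi_I(H_\omega)\delta_x\rangle|\le 1$ (which follows from unitarity of $e^{-itH_\omega}$ and $\|\chi_I(H_\omega)\|\le 1$), I can replace the square of the supremum by the supremum itself. Taking expectation, using Tonelli to interchange $\EE$ and $\sum_y$, and invoking the dynamical localization bound \eqref{eq:dl}, I arrive at
\begin{equation*}
\EE\Bigl(\sup_{t\in\RR}\bigl\||X|^p e^{-itH_\omega}\chi_I(H_\omega)\varphi\bigr\|^2\Bigr)\le C_{F,\varphi}\sum_{x\in F}\sum_{y\in\Gamma}\|y\|^{2p}\,e^{-c\,\dist_\Gamma(x,y)^{\zeta}},
\end{equation*}
with $C_{F,\varphi}$ depending only on $|F|$, $\|\varphi\|_\infty$, and the constant $C$ of \eqref{eq:dl}.

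Finally, I would check that the double sum on the right is finite. The key point is that for each fixed $x\in F$, the stretched-exponential decay $e^{-c\,\dist_\Gamma(x,y)^{\zeta}}$ beats the polynomial weight $\|y\|^{2p}$ times the volume growth of spheres centered at $x$. On $\Gamma=\ZZ^d$ this growth is polynomial in the radius, so convergence is automatic for every $\zeta\in(0,1]$ and every $p\ge 0$; on the Bethe lattice $\mathbb{B}$ the volume grows like $K^n$, which requires $\zeta=1$ together with a sufficiently large decay constant $c$ to ensure summability. Once the expectation is finite, the nonnegative random variable $\sup_t\||X|^p e^{-itH_\omega}\chi_I(H_\omega)\varphi\|^2$ is $\P$-almost surely finite, which is precisely \eqref{eq:at}. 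The only genuinely delicate step is justifying that the exponential decay dominates the combined polynomial-and-volume growth on the graph at hand; the rest is linearity and bookkeeping.
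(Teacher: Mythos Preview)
Your argument is correct and follows essentially the same route as the paper: expand in the canonical basis via Parseval, use the finite support of $\varphi$, reduce the square of the matrix element to its first power via the trivial bound $|\langle\delta_y,e^{-itH_\omega}\chi_I(H_\omega)\delta_x\rangle|\le 1$, and then invoke \eqref{eq:dl} to get summability. Your version is in fact slightly more explicit than the paper's in two respects: you take the expectation and apply Tonelli before using \eqref{eq:dl} (which is the honest way to exploit an averaged bound), and you flag that on $\mathbb{B}$ the exponential volume growth forces $\zeta=1$ with $c$ large enough, a point the paper's proof (written for $\ZZ^d$) does not address.
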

\begin{proof}
Take $\varphi\in\ell^2_c(\mathbb Z^d)$, that is, for some $R>0$, $\varphi(x)=0$ for $\norm{x}>R$. Then, since $(\delta_n)$ is an orthonormal base of $\ell^2(\ZZ^d)$, we have
\begin{align}
\norm{\abs{X}^p e^{-itH_\omega}\chi_I(H_\omega)\varphi}^2 & =\sum_{j\in\mathbb Z^d}\abs{\langle \delta_j,\abs{X}^p e^{-itH_\omega}\chi_I(H_\omega) \varphi \rangle  }^2 \notag\\
&\leq \sum_j \abs{j}^{2p}\abs{\langle \delta_j, e^{-itH_\omega}\chi_I(H_\omega) \varphi \rangle  }^2 \notag\\
&\leq \sum_j \abs{j}^{2p}\norm{\varphi} \abs{\langle \delta_j, e^{-itH_\omega}\chi_I(H_\omega) \left(\sum_{\abs{k}\leq R} \langle \varphi,\delta_k \rangle \delta_k\right) \rangle  } \notag\\
&\leq \sum_j \sum_{\abs{k}\leq R} \abs{j}^{2p}\norm{\varphi}^2\abs{\langle \delta_j, e^{-itH_\omega}\chi_I(H_\omega)\delta_k  \rangle  }. \notag\\
\notag
\end{align}
We see that \eqref{eq:dl} implies that the r.h.s is summable and uniformly bounded with respect to time, which gives the desired the result.
\end{proof}

\begin{rem}\label{r:loc} \begin{itemize}
\item[i.] Dynamical localization and absence of transport in the sense of \eqref{eq:at} are actually equivalent, as shown in  \cite[Theorem 4.2]{GKduke}.
\item[ii.] Note that absence of diffusion \eqref{eq:ad} corresponds to \eqref{eq:at} with $p=1$.
\end{itemize}
\end{rem}
Dynamical localization also implies that eigenfunctions associated to the pure point spectrum are exponentially decaying. The decay of eigenfunctions observed in the Anderson model is described by a strong notion called SULE (semi-uniform localization of eigenfunctions, see below). Here, we recall a weaker result from \cite[Theorem 9.22]{CyFKS} with a conceptually simple proof to see how dynamical localization implies decay of eigenfunctions. We reproduce its proof here with only slight modifications.
\begin{thm}\label{t:ef}Dynamical localization implies the (sub)exponential decay of eigenfunctions. Namely, suppose that for a given interval $\cI$, there exist positive constants $c,C$ and $\zeta\in(0,1]$ such that \eqref{eq:dl} holds for all $x,y\in\Gamma$ . Then,  for $\P$-a.e. $\omega\in\Omega$, for any $\epsilon>0$, there exists a positive constant $C_{\om,\epsilon}$ such that any eigenfunction of $H_\omega$ in $\cI$ satisfies
\be \abs{\varphi_\omega(x)}\leq C_{\omega,\epsilon}e^{-(c-\epsilon)\dist_\Gamma(x,x_0)^\zeta}, \ee
where $x_0$ is a center of localization of $\varphi_\omega$, i.e., $\abs{\varphi_\omega(x_0)}=\sup_{x\in\Gamma}\abs{\varphi_\omega(x)}$.
\end{thm}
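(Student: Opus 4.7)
The plan is to reduce the desired pointwise decay of the eigenfunction to the off-diagonal decay of the matrix elements $\langle \delta_y, e^{-itH_\omega}\chi_\cI(H_\omega)\delta_x\rangle$, and then to promote the expectation bound \eqref{eq:dl} to an almost-sure, pointwise bound via a soft Fubini/Chebyshev argument. The first step is a Cesaro/spectral-theorem trick: let $\varphi_\omega$ be a normalized eigenfunction of $H_\omega$ with eigenvalue $E\in\cI$, and assume (for simplicity; the general case reduces to this by diagonalizing the finite-dimensional eigenspace) that $E$ is simple, so that the spectral projection satisfies $\chi_{\{E\}}(H_\omega) = |\varphi_\omega\rangle\langle\varphi_\omega|$. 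By the spectral theorem and dominated convergence,
\[
\frac{1}{T}\int_0^T e^{itE}\langle \delta_y, e^{-itH_\omega}\chi_\cI(H_\omega)\delta_x\rangle\, dt \;\xrightarrow[T\to\infty]{}\; \langle\delta_y, \chi_{\{E\}}(H_\omega)\delta_x\rangle = \varphi_\omega(y)\overline{\varphi_\omega(x)}.
\]
Since the integrand is bounded in modulus by $f_\omega(x,y) := \sup_t |\langle \delta_y, e^{-itH_\omega}\chi_\cI(H_\omega)\delta_x\rangle|$, this yields the key inequality $|\varphi_\omega(x)|\,|\varphi_\omega(y)| \leq f_\omega(x,y)$ for all $x,y\in\Gamma$.

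The second step turns the DL hypothesis on $\EE(f_\omega(x,y))$ into an almost-sure, pointwise control on $f_\omega$. Fix a reference site $y_0 \in\Gamma$; by Tonelli's theorem and \eqref{eq:dl},
\[
\EE\Bigl(\sum_{x\in\Gamma} e^{(c-\epsilon)\dist_\Gamma(x,y_0)^\zeta}\, f_\omega(x,y_0)\Bigr) \leq C\sum_{x\in\Gamma} e^{-\epsilon\,\dist_\Gamma(x,y_0)^\zeta},
\]
and for $\Gamma=\ZZ^d$ the final sum converges for any $\epsilon>0$ thanks to the polynomial volume growth. Hence the random sum on the left is finite $\P$-a.s., which forces $f_\omega(x,y_0) \leq C_{\omega,y_0,\epsilon}\, e^{-(c-\epsilon)\dist_\Gamma(x,y_0)^\zeta}$ for every $x\in\Gamma$. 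Taking a countable intersection over the (countable) set of possible reference points $y_0\in\Gamma$ produces a single full-measure set $\Omega_0$ on which this bound holds simultaneously for every $y_0$.

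To conclude, fix $\omega\in\Omega_0$ and any eigenfunction $\varphi_\omega$ associated to an eigenvalue in $\cI$, with center of localization $x_0$. Applying the first step with $y = x_0$ and the second step with $y_0 = x_0$ gives
\[
|\varphi_\omega(x)| \leq \frac{f_\omega(x,x_0)}{|\varphi_\omega(x_0)|} \leq \frac{C_{\omega,x_0,\epsilon}}{|\varphi_\omega(x_0)|}\, e^{-(c-\epsilon)\dist_\Gamma(x,x_0)^\zeta},
\]
which is the desired bound with $C_{\omega,\epsilon} := C_{\omega,x_0,\epsilon}/|\varphi_\omega(x_0)|$, finite because $x_0$ maximizes the nonzero function $|\varphi_\omega|$. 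The main technical hurdle lies in Step 1: one must identify the right time average to isolate the product $\varphi_\omega(x)\overline{\varphi_\omega(y)}$ from the full semigroup $e^{-itH_\omega}\chi_\cI(H_\omega)$, which is precisely what the Cesaro phase $e^{itE}$ achieves, and one must be careful about possible degeneracy of the eigenvalue $E$. The remaining probabilistic step is essentially automatic, the small loss $c\rightsquigarrow c-\epsilon$ being the price paid for using Markov/Tonelli to convert an $L^1$ bound into a pointwise one.
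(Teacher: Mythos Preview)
Your proof is correct and follows essentially the same two-step strategy as the paper: use the Ces\`aro average $\frac{1}{T}\int_0^T e^{itE}e^{-itH_\omega}\,dt \to \chi_{\{E\}}(H_\omega)$ to bound $|\varphi_\omega(x)\varphi_\omega(y)|$ by $\sup_t|\langle\delta_y,e^{-itH_\omega}\chi_\cI(H_\omega)\delta_x\rangle|$, and then upgrade the expectation bound \eqref{eq:dl} to an almost-sure pointwise bound at the cost $c\to c-\epsilon$. The only cosmetic difference is that the paper performs the second step via Chebyshev plus Borel--Cantelli whereas you use a Tonelli/summability argument; both rely on the same convergence of $\sum_{x}e^{-\epsilon\,\dist_\Gamma(x,y_0)^\zeta}$ and on a countable intersection over base points.
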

\begin{proof} For simplicity, we drop the subscript $\Gamma$ from the notation $\dist_\Gamma$.
Define the event
\be B_{x,y}:=\left\{\omega\in\Omega;\,\sup_{t\in\mathbb R}\abs{\langle \delta_y, e^{-itH_\omega}\chi_{\mathcal I}(H_\omega)\delta_x\rangle}>e^{-(c-\epsilon)\textrm{dist}(x,y)^\zeta} \right\}. \ee
By Chebyshev's inequality, using \eqref{eq:dl}, we have
\begin{align}
\P\left( B_{x,y}\right) & \leq e^{(c-\epsilon)\textrm{dist}(x,y)^\zeta}\mathbb E\left(\sup_{t\in\mathbb R}\abs{\langle \delta_y, e^{-itH_\omega}\chi_{\mathcal I}(H_\omega)\delta_x\rangle} \right) \\ \notag
& \leq Ce^{-\epsilon\textrm{dist}(x,y)^\zeta}.
\end{align}
Then, for fixed $x$ we have $\sum_y \P(B_{x,y})<\infty$, therefore we can use Theorem \ref{t:bc} (Borel-Cantelli) and obtain
\be \P\left( \bigcup_{N\in\NN}\bigcap_{\norm{y-x}=N}^\infty B_{x,y}^c \right)=1. \ee
This implies that for $\P$-a.e. $\om\in\Omega$, there exists $N_0$ such that $\omega\in B_{x,y}^c$ for all $y$ such that $\textrm{dist}(x,y)\geq N_0$. In the case $\textrm{dist}(x,y)<N_0$, note that
\begin{align} \sup_{t\in\mathbb R}\abs{\langle \delta_y, e^{-itH_\omega}\chi_{\mathcal I}(H_\omega)\delta_x\rangle} & \leq 1\\
& \leq C_{c,N_0,\zeta,\epsilon,\omega}e^{-(c-\epsilon)\textrm{dist}(x,y)^\zeta},
\end{align}
where $C_{c,N_0,\zeta,\epsilon,\omega}$ denotes a constant depending on the parameters ${c,N_0,\zeta,\epsilon,\omega}$.
Therefore, for all $y\in\Gamma$,
\be \sup_{t\in\mathbb R}\abs{\langle \delta_y, e^{-itH_\omega}\chi_{\mathcal I}(H_\omega)\delta_x\rangle} \leq \tilde C_{c,N_0,\zeta,\epsilon,\omega}e^{-(c-\epsilon)\textrm{dist}(x,y)^\zeta}.\ee
Now, suppose $E \in \cI$ is an eigenvalue of $H_\omega$. This eigenvalue is simple (see e.g. \cite[Theorem 5.8]{AW}), so we can use the spectral theorem to obtain the formula for the spectral projector on $E$,
\be P_{\{E\}}(H_\om)=\lim_{T\rightarrow \infty}\frac{1}{T}\int_0^Te^{isE}e^{-isH_\om}ds. \ee
Denote by $\varphi_{\om,E}$ the normalized eigenfunction corresponding to $E$. Then,
\begin{align}
\abs{\varphi_{\om,E}(x)} \abs{\varphi_{\om,E}(y)}& =\abs{ \angles{\delta_x,\varphi_{\om,E}}\angles{\varphi_{\om,E},\delta_y}  }\\ \notag
& =\abs{ \angles{\delta_x,\angles{\varphi_{\om,E},\cdot}\varphi_{\om,E},\delta_y}  }\\ \notag
& =\abs{\angles{\delta_x,P_{\{E\}}(H_\om)\chi_{\cI}(H_\om),\delta_y}}\\ \notag
& \leq \lim_{T\rightarrow \infty}\frac{1}{T}\int_0^T \abs{\angles{\delta_x,e^{-isH_\om}\chi_{\cI}(H_\om)\delta_y}  }ds\notag \\
& \leq \sup_{t\in\mathbb R} \abs{ \angles{\delta_x,e^{-itH_\om}\chi_{\cI}(H_\om)\delta_y}} \\
& \leq \tilde C_{c,N_0,\zeta,\epsilon,\omega}e^{-(c-\epsilon)\textrm{dist}(x,y)^\zeta}.
\end{align}
Here, we used  Fatou's lemma and the fact that $\angles{\varphi_{\omega,E},\cdot}=P_{\{E\}}(H_\om)=P_{\{E\}}(H_\om)\chi_I(H_\om)$ since $E\in \cI$. Since this holds for any $y$, we take $y$ to be a center of localization $x_0$, so $\varphi_{\om,E}(x_0)\neq 0$ (note $\varphi_{\om,E}(x_0)<\infty$ since $\varphi_{\om,E} \in\ell^2(\Gamma)$). Then
\be \abs{\varphi_{\om,E}(x)}  \leq \frac{1}{\abs{\varphi_{\om,E}(x_0)} }\tilde C_{c,N_0,\zeta,\epsilon,\omega}e^{-(c-\epsilon)\textrm{dist}(x,x_0)^\zeta} \ee
\end{proof}

From Theorems \ref{t:pp} and \ref{t:ef} we deduce the following
\begin{cor}Dynamical localization implies Anderson localization.
\end{cor}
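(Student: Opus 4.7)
The plan is essentially to observe that the Corollary is a direct packaging of the two preceding theorems, so the proof will be a short assembly rather than a new argument. Recall that Anderson localization in an interval $I$ requires two ingredients: (a) pure point spectrum, $\sigma(H_\omega)\cap I=\sigma_{pp}(H_\omega)\cap I$ almost surely, and (b) (sub)exponential decay of the corresponding eigenfunctions.

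First, I would invoke Theorem \ref{t:pp} to obtain (a): assuming the dynamical localization bound \eqref{eq:dl} holds on the interval $I$, the RAGE-based argument gives $\sigma(H_\om)\cap I=\sigma_{pp}(H_\om)\cap I$ for $\P$-a.e. $\om\in\Omega$. Next, I would invoke Theorem \ref{t:ef} on the same interval to get (b): for $\P$-a.e. $\om\in\Omega$ and every $\epsilon>0$, each eigenfunction $\varphi_{\om,E}$ with $E\in I$ satisfies
\be
\abs{\varphi_{\om,E}(x)}\leq C_{\om,\epsilon}\,e^{-(c-\epsilon)\dist_\Gamma(x,x_0)^\zeta},
\ee
where $x_0$ is a center of localization of $\varphi_{\om,E}$. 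One should pick the full-measure set on which AL holds as the intersection of the two full-measure sets provided by Theorems \ref{t:pp} and \ref{t:ef}, which is still of full measure.

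In terms of obstacles, there is essentially none beyond a bookkeeping point: the eigenfunction decay produced by Theorem \ref{t:ef} is of stretched-exponential type with exponent $\zeta\in(0,1]$, matching the exponent appearing in the dynamical localization hypothesis \eqref{eq:dl}. When $\zeta=1$ this is genuine exponential decay as demanded by Definition \ref{def:loc}(ii); for $\zeta<1$ one obtains Anderson localization in the (standard, mildly relaxed) sense of pure point spectrum with subexponentially decaying eigenfunctions. In either case the two ingredients (a) and (b) hold on a common set of full $\P$-measure, which is the conclusion.
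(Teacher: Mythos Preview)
Your proposal is correct and matches the paper's approach exactly: the Corollary is stated immediately after Theorems~\ref{t:pp} and~\ref{t:ef} with the remark that it follows from them, and your assembly of (a) pure point spectrum via Theorem~\ref{t:pp} and (b) eigenfunction decay via Theorem~\ref{t:ef}, together with the intersection of the two full-measure sets, is precisely the intended argument. Your observation about $\zeta<1$ yielding subexponential rather than strictly exponential decay is also apt and consistent with how the paper phrases Theorem~\ref{t:ef}.
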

The converse of this result is not true, as shown in \cite[Appendix 2]{dRJLS95}. There, the authors consider the quasiperiodic Schr\"odinger operator acting on $\ell^2(\NN\cup\{0\})$, $H=-\Delta+3\cos (\pi \alpha x+\theta)+\lambda \delta_x$, with $\theta,\lambda\in\RR$ and $\alpha$ irrational. For a particular choice of $\alpha$, they show that $H$ has pure point spectrum with exponentially decaying eigenfunctions $(\varphi_n)_{n\in\NN}$,
\be\label{eq:cn} \abs{\varphi_n(x)}\leq C_n e^{-\beta\norm{x}},\,\text{ for some constants}\,\beta,\,C_n>0. \ee
Therefore $H$ exhibits Anderson localization, however, it does not exhibit dynamical localization. Indeed, the authors show
\be\label{eq:prop} \lim_{t\rightarrow\infty} \frac{\norm{x{e^{-itH}}\delta_0}^2\ln t}{t^2}=\infty. \ee
This means that the quantity $\norm{x{e^{-itH}}\delta_0}^2$ is not bounded in $t$. What fails in their example is the lack of control in the constants $C_n$ in \eqref{eq:cn}. If the constant $C_n$ grows with the labelling $n$, the eigenfunction can become more and more spread, which might contribute to the type of propagation seen in \eqref{eq:prop}. The authors in \cite{dRJLS95} propose a stronger notion of eigenfunction localization called SULE (semi-uniform localized eigenfunctions) that avoids these pathologies. This property is actually obtained from the definition of dynamical localization \eqref{eq:dl}, see \cite[Section 7]{dRJLS95}. For a detailed discussion of the relation between dynamical localization and the decay of eigenfunctions, see \cite{GT}, which extends these notions to graphs $\Gamma$ with a certain condition on the growth of the volume of balls \cite[Theorem 2.9]{GT}. Namely, the volume of the balls in the graph distance should grow at most sub-exponentially with the radius.

\begin{figure}[h]
  \centering
  \includegraphics[width=14cm]{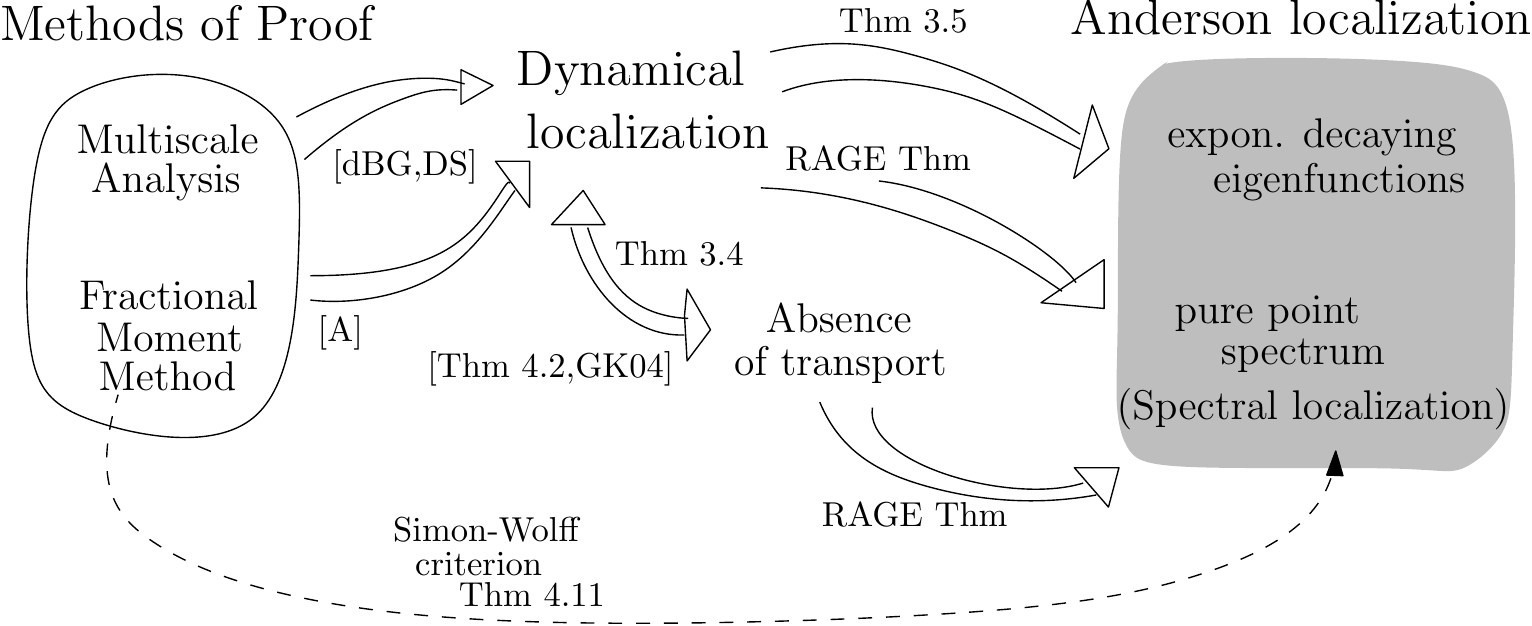}\\
  \caption{Summary of types of localization. The methods of proof will be discussed in Section \ref{s:dgf}.}
\end{figure}

\section{The decay of Green's function and localization}\label{s:dgf}
We saw in the previous section that the decay of the term $e^{-itH_\om}$  in  \eqref{eq:dl} is crucial to prove localization in all its forms. Instead of studying this unitary operator directly, in our setting it is useful to study the resolvent of $H_\om$, also called the Green's function. Using the spectral theorem (see, for ex., \cite{RSI} or \cite{Go}) one can see that both are related by the formal identity
\be (H-z)^{-1}=i\int_0^\infty e^{-it(H-z)}dt,\quad z\in\CC,\,\textrm{Im}\,z>0. \ee
The Green's function is bounded outside the spectrum of $H_\om$ and its decay inside the spectrum gives information on the decay of $e^{-itH_\om}$, and actually, the decay the class of complex-valued measurable functions $f(H_\om)$ with $\norm{f}_\infty\leq 1$. Not only that, the behavior of $(H-z)^{-1}$ when $\textrm{Im}\,z\rightarrow 0$ also characterizes the spectral measures, see \cite{Si}. Therefore, it is not surprising that the existing methods to prove localization focus on obtaining estimates on the terms $\angles{\delta_x, (H-z)^{-1}\delta_y}$, $x,y\in\Gamma$.

In this section, we limit ourselves to describe briefly the methods available to obtain the decay of Green's function that implies dynamical localization in arbitrary dimension. For full proofs, we refer the reader to the introductory notes \cite{K,Kl}, plus the book \cite{Sto} on the Multiscale Analysis, and to \cite{S10,H08}, plus the book \cite{AW} on the Fractional Moment method.

\subsection{The Multiscale Analysis (MSA)}

Consider $\Gamma=\ZZ^d$, and denote by $\norm{x}$ the sup-norm in $\ZZ^d$. For a given bounded set $\L\subset \ZZ^d$, we will write $\Lambda^c=\ZZ^d\setminus \L$. We define its boundary,
inner and outer boundary, respectively, by
\be \partial\L=\{(u,v)\in\L\times\L^c;u\in\L\,,v\in\L^c\},\ee
\be \partial_+ \L=\{ v\in\L^c;\exists u\in \L \,\text{such that}\,(u,v)\in\partial\L \}, \ee
\be \partial_- \L=\{ u\in\L;\exists v\in \L^c \,\text{such that}\,(u,v)\in\partial\L \}. \ee
This implies the following decomposition for $H_{\om}$:
\be H_\om=H_{\om,\L}\oplus H_{\om,\L^c}+\Upsilon_\L, \ee
where
\be \angles{\delta_x,H_{\om,\L}\oplus H_{\om,\L^c}\delta_y}=\begin{cases} \angles{\delta_x,H_{\om,\L}\delta_y},\quad \text{if }x,y\in\L\\\angles{\delta_x,H_{\om,\L^c}\delta_y},\quad \text{if }x,y\in\L^c\\ 0\quad \mbox{otherwise} \end{cases},  \ee
and the boundary operator $\Upsilon_\L$ is given by
\be \angles{\delta_x,\Upsilon_\L\delta_y}= \begin{cases}-1,\quad \text{if }(x,y)\in\partial\L\\ 0\quad \mbox{otherwise}\end{cases}.\ee

We say that $\psi$ is a generalized eigenfunction of $H_\om$ with generalized eigenvalue $E$ if $\angles{\varphi,H_{\om}\psi}=E\angles{\varphi,\psi}$ for all $\varphi\in\ell^2_c(\ZZ^d)$. The following key observation enables us to obtain the decay of the generalized eigenfunctions from the decay of the Green's function: for any generalized eigenfunction $\psi$ with generalized eigenvalue $E$,
\be (H_{\om,\L}\oplus H_{\om,\L^c}-E)\psi=-\Upsilon_\L \psi.
\ee
Therefore, for $x\in\L$ we have
\begin{align}\psi(x)& =- \left(( H_{\om,\L}-E )^{-1} \Upsilon_\L\psi\right)(x)\\
&=-\sum_{\substack{(k,m)\in\partial\L, \\ k\in\partial_-\L, \,m\in\partial\L_+}    } G_{\om,\L}(E;x,k)\psi(m),
\end{align}
where we write $G_{\om,\L}(E;x,k)=\angles{\delta_x,( H_{\om,\L}-E )^{-1}\delta_k}$.

We consider the finite-volume restriction of $H_\om$ to $\L$, denoted by $H_{\om,\L}$, by taking the restriction of $\chi_{\L}H_\om\chi_{\L}$ to $\ell^2(\L)$. We obtain thus a finite-volume operator (a finite matrix) and therefore its spectrum is discrete. The goal is to prove the decay of the terms $\angles{\delta_x,(H_{\om,\L}-E)^{-1}\delta_y}$ when $x$ and $y$ are distant, for an increasing sequence of sets $\L$ which exhausts the whole space, see Fig. \ref{box}. The problem that appears here is that since $E\in\sigma(H_\om)$ and the spectrum $\sigma(H_{\om,\L})$ is random, we might have that $\sigma(H_{\om,\L})$ is arbitrarily close to $E$, the quantity $(H_{\om,\L}-E)^{-1}$ being unbounded. To control the appearance of singularities we exploit the fact that $\sigma(H_{\om,\L})$ ''moves with the randomness''. This is known as the Wegner estimate.
\begin{defn}\label{d:we}Let $I\subset\mathbb R$ be an open interval. We say the operator $H_\om$ satisfies a Wegner estimate in $I$ if there exists a finite constant $Q_{I}$ such that
\be\label{eq:we} \mathbb P\left(\norm{(H_{\om,\L}-E)^{-1}}\geq \frac{1}{\eta}\right)\leq Q_I\eta^a \rm{vol}(\L)^b,\ee
for all $E\in I$, $\eta\in(0,1]$, some $b\geq1$, $a>0$ and all $L\in\NN$ large enough.
\end{defn}
For a proof, see for ex. \cite[Section 5.5]{K} or \cite[Chapter 4]{AW}.
\begin{rem}\begin{itemize}
\item[i.]The Wegner estimate, which usually holds throughout the spectrum, is a consequence of the regularity of the probability distribution $\mu$, which needs to be at least H\"older continuous. It can also be obtained for more singular distributions, like Bernoulli, but the bound in \eqref{eq:we} is not good enough to perform the MSA in dimension $d>1$. For references see the historical notes in \cite[Section 5.5]{K} or \cite[Chapter 4]{AW}.
\item[ii.] For the Anderson model on $\ell^2(\mathbb Z^d)$, \eqref{eq:we} is known to hold with $a=b=1$, see e.g. \cite[Section 5.5.]{K}. For different models the Wegner estimate can be obtained with different values of $a$ and $b$, see the discussion in \cite[Remark 4.6]{Kl}.
    \end{itemize}
\end{rem}
Next, we need a quantitative estimate on the desired decay of the Green's function.
From now on, we let $\L_L(u)$ be the box of side $L$ centered on $u\in\ZZ^d$, given by
\be \L_L(u)=\left\{ v\in\ZZ^d;\norm{u-v}\leq \frac{L}{2}\right\}, \ee

\begin{defn}\label{d:gb} We say that the box $\L_L(u)$ is $(m,E)$-good if $E\notin \sigma(H_{\om,\L})$ and
\be\label{eq:gb} \abs{\angles{\delta_x,(H_{\om,\L}-E)^{-1}\delta_y}}\leq e^{-m \frac{L}{2}}, \ee
for any $x\in \L_{L/2}(u)$ and $y\in\partial_-\L_L(u)$.
\end{defn}
We define
\be\label{eq:pel} p_{E,L,m,u}:=\P(\L_{L}(u) \,\text{is not}\,(m,E)\text{-good}). \ee
The desired decay of the resolvent at the scale $L$ becomes the statement
\be\label{eq:pelk} p_{E,L,m,u}\leq \frac{1}{L^{\beta}},\quad\text{for some}\,\beta>0. \ee
This means that for a large set of $\om\in\Omega$, the resolvent decays as \eqref{eq:gb} between points $x,y$ that are at a distance proportional to $L$.

\begin{figure}[h]
  \centering
  \includegraphics[width=4cm]{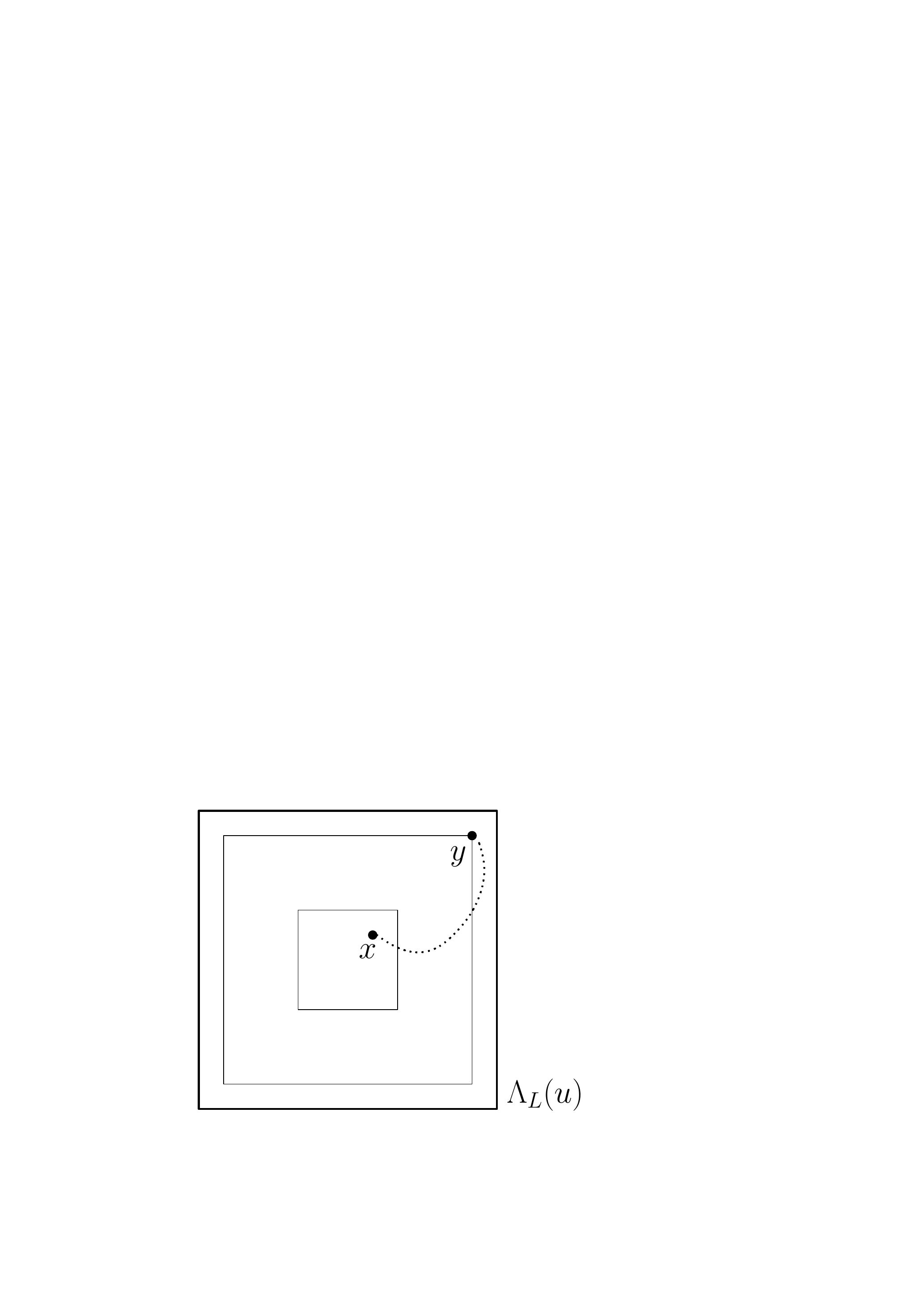}\\
  \caption{The goal is to study the decay of the resolvent from the core to the boundary of boxes of side-length $L$.}
  \label{box}
\end{figure}

\begin{rem}Note that when $H_\om$ is ergodic with respect to translations in $\ZZ^d$, $p_{E,L,m,u}=p_{E,L,m,0}$ for all $u\in\ZZ^d$. Therefore it is enough to consider estimes on a box $\L=\L_{L}(0)$. If $H_\om$ is not ergodic, like in the case of Delone-Anderson operators, in order to perform the MSA method we need estimates that are uniform with respect to the center $u\in\ZZ^d$ of the box, so condition \eqref{eq:pelk} becomes (see \cite{RM12})
\[\sup_{u\in\ZZ^d}p_{E,L,m,u}\leq \frac{1}{L^{\beta}},\quad\text{for some}\,\beta>0.\]
\end{rem}

The Multiscale Analysis method (MSA) consists of an induction on scales. In order to prove localization in a given interval $I\subset\RR$, one needs to prove the following
\begin{itemize}
\item[i.] The operator satisfies a Wegner estimate on $I$ of the form \eqref{eq:we}.
\item[ii.] The \emph{initial length scale estimate}: there exists a finite initial length-scale $L_0$ and $\beta>0$ such that for all $E\in I$, \eqref{eq:pelk} holds for $L_0$.
\item[iii.] The \emph{induction step}: if, for $E\in I$, \eqref{eq:pelk} holds at a scale $L_k$, then it holds at the scale $L_{k+1}=L_k^\alpha$, for a suitably chosen $\alpha\in(1,2)$.
\end{itemize}

There are several versions of this method in the literature, with different degrees of refinements. We state the single-energy multiscale analysis from \cite[Theorem 2.2]{vDK}, an improvement of the MSA \cite{FrS83},
\begin{thm}\label{vDK}Let $I\subset\RR$ be an interval. Suppose that
\begin{itemize}
\item[(H1)] $H_{\om}$ satisfies the Wegner estimate \eqref{eq:we} on $I$.
\item[(H2)] there exists a finite scale $L_0$ for which
\be p_{E,L_0,m_0,0}\leq \frac{1}{L_0^{\beta}},\quad\text{for some}\,\beta>2d,m_0>0. \ee
\end{itemize}
Then, there exists $\alpha\in(1,2)$ such that if we set $L_{k+1}=L_k^{\alpha}$, $k=0,1,2,...$ and pick $m\in(0,m_0)$, we can find $\mathcal L<\infty$ such that if $L_0>\mathcal L$, we have for all $k=0,1,2,...$
\be\label{eq:msa1} p_{E,L_k,m,0}\leq \frac{1}{L_k^\beta}. \ee
\end{thm}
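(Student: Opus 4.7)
The plan is to prove \eqref{eq:msa1} by induction on $k$, following the classical bootstrap of Fr\"ohlich--Spencer and von Dreifus--Klein. The base case $k=0$ is supplied by (H2); since any $(m_0,E)$-good box is also $(m,E)$-good for any $m\leq m_0$, we may freely work with $m\in(0,m_0)$. Assume the estimate holds at scale $L_k$; I want to deduce it at $L_{k+1}=L_k^{\alpha}$. First, cover $\L_{L_{k+1}}(0)$ by a family $\{\L_{L_k}(u_j)\}_{j=1}^{N}$ of boxes centered on an $L_k/2$-spaced net inside $\L_{L_{k+1}}(0)$, with $N\leq C(L_{k+1}/L_k)^d$. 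Introduce two ``bad events'':
\[
\cB:=\{\omega:\exists\, i\neq j\ \text{with}\ \norm{u_i-u_j}>L_k\ \text{and both}\ \L_{L_k}(u_i),\L_{L_k}(u_j)\ \text{not}\ (m,E)\text{-good}\},
\]
\[
\cW:=\{\omega:\norm{(H_{\om,\L_{L_{k+1}}(0)}-E)^{-1}}\geq e^{mL_{k+1}/2}\}.
\]

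Sub-boxes separated by more than $L_k$ depend on disjoint collections of random variables, so independence and the induction hypothesis yield
\[
\P(\cB)\leq \binom{N}{2}L_k^{-2\beta}\leq C L_{k+1}^{2d}L_k^{-2d-2\beta} = C L_{k+1}^{\,2d-(2d+2\beta)/\alpha},
\]
while the Wegner estimate \eqref{eq:we} with $\eta=e^{-mL_{k+1}/2}$ gives $\P(\cW)\leq Q_I e^{-amL_{k+1}/2}L_{k+1}^{bd}$, which is super-polynomially small in $L_{k+1}$. To force $\P(\cB)\leq \tfrac12 L_{k+1}^{-\beta}$ I need $\alpha(2d+\beta)\leq 2d+2\beta$; the hypothesis $\beta>2d$ ensures this upper bound on $\alpha$ strictly exceeds $1$, so any $\alpha$ in a small open interval above $1$ works (and lies in $(1,2)$).

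It then remains to show that on $(\cB\cup\cW)^c$ the box $\L_{L_{k+1}}(0)$ is itself $(m,E)$-good, which gives $p_{E,L_{k+1},m,0}\leq \P(\cB)+\P(\cW)\leq L_{k+1}^{-\beta}$. For this, fix $x\in\L_{L_{k+1}/2}(0)$ and $y\in\partial_-\L_{L_{k+1}}(0)$ and iterate the geometric resolvent identity
\[
G_{\om,\L_{L_{k+1}}}(E;x,y)=-\!\!\sum_{(u,v)\in\partial\L_{L_k}(\tilde x)}\!\!G_{\om,\L_{L_k}(\tilde x)}(E;x,u)\,G_{\om,\L_{L_{k+1}}}(E;v,y),
\]
at each step picking a good sub-box $\L_{L_k}(\tilde x)$ that contains the current source point. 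On $\cB^c$ this choice is available except inside a single bad cluster of controlled diameter, which is crossed using the a priori bound $e^{mL_{k+1}/2}$ provided by $\cW^c$ as a stuck factor. Each good step multiplies by $e^{-mL_k/2}$ and by an $L_k^{d-1}$ boundary factor; after roughly $L_{k+1}/L_k$ iterations one reaches $\partial_-\L_{L_{k+1}}(0)$, and a slight decrease of $m$ from $m_0$ absorbs the accumulated polynomial losses, yielding $|G_{\om,\L_{L_{k+1}}}(E;x,y)|\leq e^{-mL_{k+1}/2}$.

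The main obstacle is this deterministic geometric-resolvent iteration on $(\cB\cup\cW)^c$: one must formalize the notion of a single ``bad cluster'', track the polynomial losses $(CL_k^{d-1})^{L_{k+1}/L_k}$ against the exponential gain $e^{-mL_{k+1}/2}$, and verify that $m$ can be fixed strictly positive uniformly in $k$ (for instance by choosing per-scale mass decrements $\Delta m_k$ with $\sum_k \Delta m_k < m_0 - m$). The condition $\beta>2d$ is what leaves room to pick $\alpha>1$ in the probability estimate while still beating $L_{k+1}^{-\beta}$, and taking $L_0>\mathcal L$ large ensures both the per-scale losses and the constants hidden in $\P(\cB)$ can be absorbed into the target bound.
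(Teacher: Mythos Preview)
The paper does not actually supply a proof of this theorem: immediately after the statement it says ``We do not explain in detail the induction procedure, for this see \cite{K} or \cite{Kl}. We limit ourselves to sketch the idea in Figure~\ref{f:indstep}.'' The caption of that figure describes exactly the strategy you outline --- pass from $x$ to $y$ in $\Lambda_{L_{k+1}}$ through a chain of overlapping sub-boxes of scale $L_k$, use the geometric resolvent identity so that good boxes contribute exponential decay while bad boxes contribute a polynomial factor controlled by Wegner, and verify that the exponential gain wins. Your proposal is therefore in complete agreement with the paper's sketch, only spelled out in considerably more detail (the definition of the events $\cB$ and $\cW$, the counting of sub-boxes, the arithmetic constraint on $\alpha$, and the mass-decrement bookkeeping are all standard ingredients of the von~Dreifus--Klein argument that the paper simply outsources to the references).

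One small remark: your sentence ``The hypothesis $\beta>2d$ ensures this upper bound on $\alpha$ strictly exceeds $1$'' is not quite where $\beta>2d$ enters. The inequality $\alpha(2d+\beta)\leq 2d+2\beta$ rewrites as $\alpha\leq 1+\beta/(2d+\beta)$, which exceeds $1$ for any $\beta>0$. The genuine role of $\beta>2d$ in the von~Dreifus--Klein scheme is rather to leave room for a \emph{strict} inequality in that exponent comparison (so that the constant $C$ in $\P(\cB)\leq C L_{k+1}^{\,2d-(2d+2\beta)/\alpha}$ can be absorbed for $L_0$ large) while simultaneously keeping $\alpha$ bounded away from $1$ so that the deterministic iteration --- where one must beat the polynomial boundary factor $(CL_k^{d-1})^{O(L_{k+1}/L_k)}$ and the single Wegner excursion --- still closes with a fixed positive mass. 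In the standard references this shows up as the compatibility between the probabilistic constraint on $\alpha$ from above and the deterministic constraint from below; you have identified both constraints, just attributed $\beta>2d$ to the wrong one.
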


We do not explain in detail the induction procedure, for this see \cite{K} or \cite{Kl}. We limit ourselves to sketch the idea in Figure \ref{f:indstep}.

\begin{figure}[h]
  \centering
  \includegraphics[width=12cm]{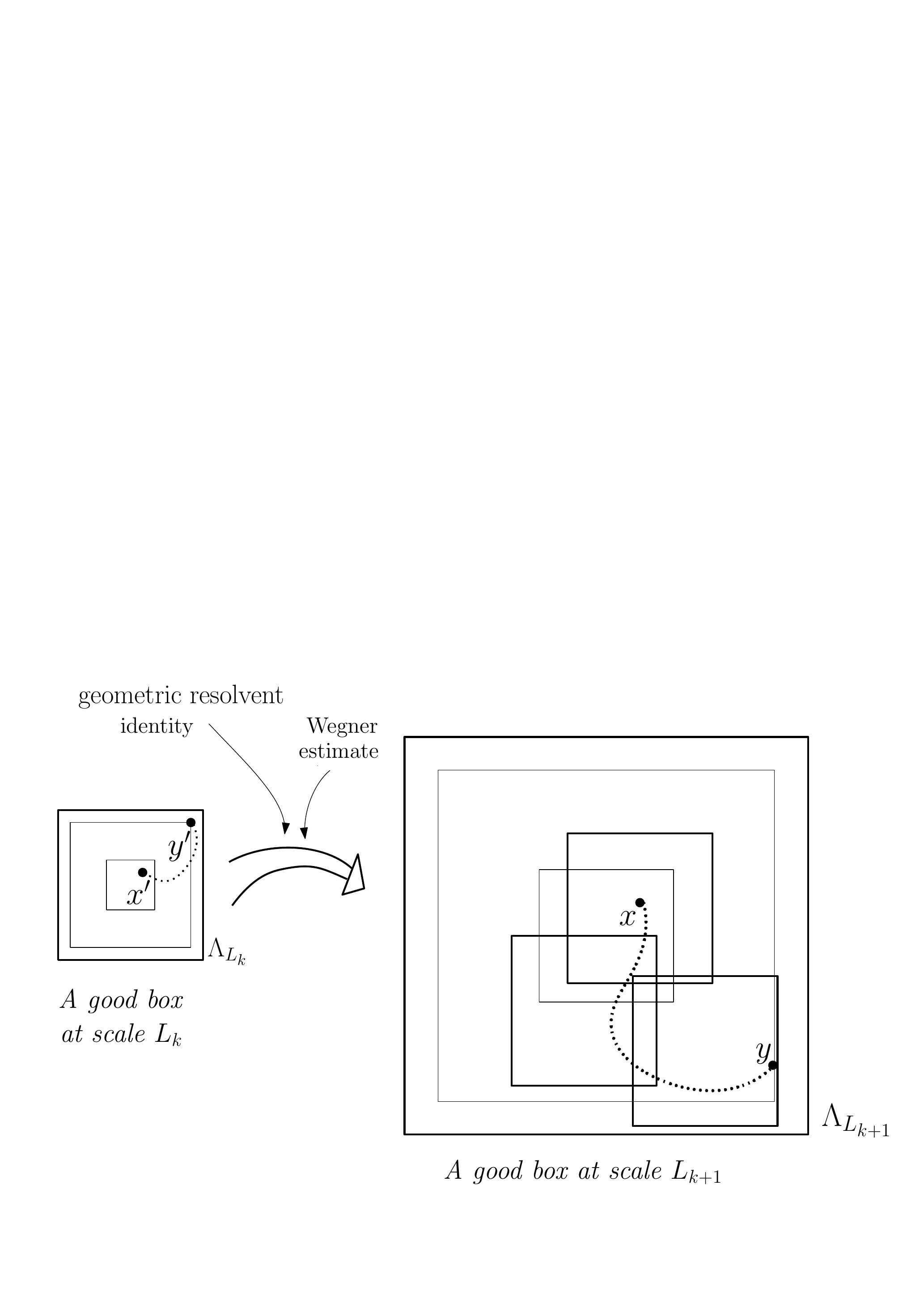}\\
  \caption{The induction step uses the information on the decay of the resolvent restricted to a box of side-length $L_k$ to show the decay of the resolvent at a scale $L_{k+1}$. The idea is to go from $x$ to $y$ in the cube $\Lambda_{L_{k+1}}$ via a path of overlapping boxes of sidelength $L_k$. Loosely speaking, using the geometric resolvent identity, the result is the product of the decay of the resolvent on the boxes of scale $L_k$. If a box $\Lambda_{L_k}$ is good, by definition it contributes a factor exponentially decaying in $L_k$. If a box $\Lambda_{L_k}$ is not good (bad), one can use the Wegner estimate to bound the contribution, which then adds a factor polynomial in $L_k$. If the conditions to perform the MSA hold, the exponential decay from good boxes dominates, i.e, compensates for the contributions of powers of $L_k$, giving the overall decay of the resolvent restricted to $\Lambda_{L_{k+1}}$.}\label{f:indstep}
\end{figure}

Assumption (H2) in Theorem \ref{vDK} corresponds to the initial length scale estimate. It can be usually shown either at the bottom of the spectrum, with $\lambda>0$ fixed, or in the whole spectrum, if $\lambda$ is large enough, see \cite[Section 11]{K} for details. Because of its uses in different settings, other than $\ell^2(\ZZ^d)$, we describe briefly the argument used for intervals $I$ that contain the spectral infimum $E_0=\inf\sigma(H_{\om,\lambda})$. One first needs to show that restricting the operator to a finite volume produces a gap in the spectrum, that is $E_0-E_{0,\L}=f(L)$ , where $E_{0,\Lambda}=\inf\sigma(H_{\om,\lambda,\L_L})$ and $f(L)$ is a function of $L$ that decays polynomially.

Once the spectral gap is proven, assumption (H2) is the consequence of the Combes-Thomas estimate, which asserts the decay of the term $G_{\om,\L_L}(E;x,y)$ in terms of $\dist(E,\sigma(H_{\om,\L_L}))$ and $\norm{x-y}$. We recall this result as stated in \cite[Appendix A]{KlNRM} for general finite-volumes (not necessarily a box). See \cite[Section 10.3]{AW} for other versions.

\begin{thm}[Combes-Thomas estimate]
Let $H = -\Delta + V$  be a Schr\"odinger operator on $\ell^{2}(\ZZ^d)$. Given $S \subset \ZZ^d$,  let $H_S$  be the  restriction of $\Chi_S H \Chi_S$ to $\ell^2(S)$.  Then for every $z \notin \sigma(H_S)$,  setting $\eta_z = \dist \pa{z, \sigma(H)}$, for all $\eps \in ( 0, 1)$ we get
\be\label{CTest}
\abs{ \angles{\delta_x, (H_S - z)^{-1} \delta_y}  }   \le \tfrac{1}{\eta_z (1-\epsilon)} e^{- \log  \left(\tfrac{\epsilon \eta_z}{2d} + 1\right)  \norm{y - x}}\,\text{for all}\,\, x,y \in \,S.
\ee
\end{thm}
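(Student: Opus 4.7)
The classical Combes--Thomas argument proceeds by conjugating $H_S$ by a non-unitary multiplication operator whose weight decays in a prescribed direction; the conjugated operator differs from $H_S$ by a small bounded perturbation, so its resolvent can be controlled by a Neumann series. For $v\in\RR^d$, formally set $U\delta_u=e^{v\cdot u}\delta_u$ on $\ell^2_c(S)$. Because $V$ is diagonal, only the kinetic part is affected, and
\[
 A:=U^{-1}H_S U-H_S
\]
extends to a bounded operator on $\ell^2(S)$ with matrix entries
\[
 A(u,w)=\bigl(e^{v\cdot(w-u)}-1\bigr)\,\angles{\delta_u,-\Delta\,\delta_w},
\]
which vanish unless $u\sim w$.

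The key estimate is $\norm{A}\le 2d(e^{\norm{v}_\infty}-1)$, which follows from the Schur test: each row has at most $2d$ nonzero entries, each bounded in modulus by $e^{\norm{v}_\infty}-1$ since $|v\cdot(w-u)|\le\norm{v}_\infty$ for $u\sim w$. Calibrating $\norm{v}_\infty=\log\bigl(\tfrac{\eps\eta_z}{2d}+1\bigr)$ gives $\norm{A}\le\eps\eta_z$, and hence $\norm{A(H_S-z)^{-1}}\le\eps<1$. Expanding
\[
 (H_S+A-z)^{-1}=(H_S-z)^{-1}\bigl(1+A(H_S-z)^{-1}\bigr)^{-1}
\]
in a Neumann series and using $U^{-1}(H_S-z)^{-1}U=(H_S+A-z)^{-1}$ yields $\norm{U^{-1}(H_S-z)^{-1}U}\le\tfrac{1}{\eta_z(1-\eps)}$.

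To extract the pointwise decay, pick a coordinate $j$ maximizing $|x_j-y_j|$ and align $v$ with $\pm e_j$ so that $v\cdot(x-y)=-\log\bigl(\tfrac{\eps\eta_z}{2d}+1\bigr)\norm{x-y}$. Then
\[
 \abs{\angles{\delta_x,(H_S-z)^{-1}\delta_y}}=e^{v\cdot(x-y)}\,\abs{\angles{\delta_x,U^{-1}(H_S-z)^{-1}U\delta_y}},
\]
and combining with the operator-norm bound from the previous step gives \eqref{CTest}.

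The main technical point to watch is that $U$ is unbounded when $S$ is infinite, so strictly speaking the conjugation must be interpreted on the dense subspace $\ell^2_c(S)$; equivalently, one proves the estimate uniformly on finite truncations of $S$ containing $x$ and $y$ and passes to the limit. Nothing is lost in the limit because the controlling quantities $\norm{A}$ and $\norm{(H_S+A-z)^{-1}}$ depend only on $v$ and $\eta_z$, not on $S$ itself.
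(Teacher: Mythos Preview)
The paper does not supply its own proof of the Combes--Thomas estimate; it simply quotes the statement from \cite[Appendix~A]{KlNRM} and refers the reader to \cite[Section~10.3]{AW} for other versions. Your argument is exactly the standard conjugation-by-exponential-weight proof one finds in those references: conjugate $H_S$ by $U=e^{v\cdot X}$, bound the off-diagonal remainder $A=U^{-1}H_SU-H_S$ via the Schur test using that each vertex has at most $2d$ neighbours, calibrate $\norm{v}_\infty=\log\bigl(\tfrac{\eps\eta_z}{2d}+1\bigr)$ so that $\norm{A}\le\eps\eta_z$, and read off the decay from $\angles{\delta_x,(H_S-z)^{-1}\delta_y}=e^{v\cdot(x-y)}\angles{\delta_x,(H_S+A-z)^{-1}\delta_y}$ with $v$ pointing along the coordinate realising $\norm{x-y}_\infty$. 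There is therefore nothing to contrast---your route is the intended one.

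One small point worth flagging: your step $\norm{A(H_S-z)^{-1}}\le\eps$ uses $\norm{(H_S-z)^{-1}}\le 1/\eta_z$, i.e.\ $\eta_z\le\dist(z,\sigma(H_S))$. The statement as printed has $\eta_z=\dist(z,\sigma(H))$ rather than $\dist(z,\sigma(H_S))$; this is evidently a typographical slip (and is stated with $H_S$ in \cite{KlNRM}). Your proof is correct under that reading, and your remark about handling the unboundedness of $U$ on infinite $S$ by passing through finite truncations is the usual device.
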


From \eqref{CTest} we see that in order for the r.h.s. to decay in $L$, we need to have a gap size $f(L)$ to be not too small. The existence of this gap at the bottom of the spectrum can be shown using the Lifshitz tails behavior of the Integrated Density of States (IDS), when the latter exists (see \cite[Section 6 and Section 11.3]{K}). In the case where the IDS is not available, other arguments can be used to show that finite-volume restriction "lifts up" the spectrum, for example, a space averaging argument (see \cite{RM13} and references therein).

  \bigskip

The output of the MSA, \eqref{eq:msa1} for a sequence of scales $L_k$, implies the absence of diffusion \cite{FrS83}, and absence of absolutely continuous spectrum in $I$ \cite{MS}, but it is not enough to conclude dynamical localization. For this, a stronger version of the MSA is needed, which adds to the induction step an analysis on pairs of cubes, instead of only one cube. When working on the lattice $\ZZ^d$, the following is a consequence of the independence of the random variables,
\begin{defn}\label{d:web}Let $I\subset \RR$ be an open interval. We say the operator $H_\om$ satisfies a Wegner estimate between boxes if, for any two disjoint boxes $\L_1$ and $\L_2$ there exists a finite constant $Q_J$ for each compact interval $J\subset I$ such that
\begin{align}\label{eq:web} & \P\left(\exists E\,\text{such that} \dist(E,\sigma(H_{\om,\L_1}))<\eta\,\,\text{and}\, \dist(E,\sigma(H_{\om,\L_2}))<\eta\right) \notag\\
& \quad \quad \quad \quad \quad \quad \quad \quad \quad \quad \quad \quad \quad \quad \quad  \leq Q_I \eta\,\textrm{vol}(\L_1)\textrm{vol}(\L_2).
\end{align}
\end{defn}
We now state the energy-interval MSA from \cite[Theorem 2.2]{vDK}.
\begin{thm}\label{t:vdk}Let $I\subset\RR$ be an interval. Suppose that
\begin{itemize}
\item[(H1)'] $H_{\om}$ satisfies both the Wegner estimate \eqref{eq:we}  and the Wegner estimate between boxes \eqref{eq:web} in $I$.
\item[(H2)] there exists a finite scale $L_0$ for which
\be p_{E,L_0,m_0,0}\leq \frac{1}{L_0^{\beta}},\quad\text{for some}\,\beta>2d,m_0>0. \ee
\end{itemize}
Then, there exists $\alpha\in(1,2)$ such that if we set $L_{k+1}=L_k^{\alpha}$, $k=0,1,2,...$ and pick $m\in(0,m_0)$, we can find $\mathcal L<\infty$ such that if $L_0>\mathcal L$, we have for all $k=0,1,2,...$
\be\label{outMSA} \P\left(\forall E\in I,\, \text{either}\,\L_{L_k}(x) \,\text{or}\,\L_{L_k}(y)\,\text{is }(m,E)\text{-good}\right)\leq \frac{1}{L_k^{\beta}}. \ee
\end{thm}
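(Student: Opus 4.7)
The plan is to run an induction on scales parallel to the proof of Theorem \ref{vDK}, with the single-energy bad event replaced by the ``two-box, common-energy'' event
\[ R_k(x,y) := \bigl\{\omega\in\Omega : \exists E\in I \text{ such that both } \L_{L_k}(x) \text{ and } \L_{L_k}(y) \text{ are not } (m,E)\text{-good}\bigr\}. \]
The target at each scale is then $\P(R_k(x,y))\le L_k^{-\beta}$ for every pair of disjoint, well-separated boxes, from which \eqref{outMSA} follows by passing to the complement.

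For the base case at $L_0$, I would exploit the independence of the potential on the disjoint boxes $\L_{L_0}(x)$ and $\L_{L_0}(y)$. For each fixed $E\in I$, (H2) together with this independence gives $\P(\text{both not }(m_0,E)\text{-good})\le L_0^{-2\beta}$. To promote this pointwise bound to a uniform one over $E\in I$, I would invoke the Wegner estimate between boxes \eqref{eq:web}: outside an event whose probability is bounded by $Q_I\eta\,\mathrm{vol}(\L_{L_0}(x))\mathrm{vol}(\L_{L_0}(y))$, no energy lies within $\eta$ of both finite-volume spectra simultaneously. Combined with the Combes--Thomas estimate, which certifies resolvent decay at energies sufficiently far from the finite-volume spectrum, this forces at least one of the two boxes to be $(m,E)$-good for every admissible $E$, and a covering argument on $I$ with an appropriate choice of $\eta$ closes the base step.

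The inductive step shares its deterministic skeleton with the single-energy induction behind Theorem \ref{vDK}: iterating the geometric resolvent identity shows that if $\L_{L_{k+1}}(u)$ fails to be $(m,E)$-good, then either the single-box Wegner event \eqref{eq:we} is realized at scale $L_{k+1}$ for this energy, or $\L_{L_{k+1}}(u)$ contains two disjoint sub-boxes at scale $L_k$ which are both not $(m,E)$-good. Applying this simultaneously to $\L_{L_{k+1}}(x)$ and $\L_{L_{k+1}}(y)$, outside the single-box Wegner events, produces bad sub-boxes at a common energy $E$ in each of the two large boxes. Since the large boxes are disjoint, picking one bad sub-box from each yields a disjoint pair whose joint behaviour is precisely the event $R_k$ at the previous scale. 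A union bound over the $O(L_{k+1}^{2d})$ choices of sub-box centers, combined with the independence between the two large boxes and the Wegner-between-boxes bound at scale $L_{k+1}$, gives a bound of the form $L_{k+1}^{c(d)}\cdot L_k^{-\nu\beta} + (\text{exponentially small Wegner remainder})$ with $\nu\ge 2$ coming from also using the squared estimate provided by the two-sub-box structure inside each large box. Choosing $\alpha\in(1,2)$ close enough to $1$ relative to the threshold $\beta>2d$ supplied by (H2) makes the dominant term at most $L_{k+1}^{-\beta}$, closing the induction.

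The main obstacle, and the reason the Wegner estimate between boxes \eqref{eq:web} is included in hypothesis (H1)', is the passage from the pointwise-in-$E$ estimates provided by the geometric resolvent identity and the single-energy induction to the uniform-in-$E$ statement \eqref{outMSA}. Without \eqref{eq:web} there is no mechanism to control, uniformly for $E\in I$, the simultaneous near-singularity of the two resolvents on the disjoint regions. Calibrating the small parameter $\eta$ in \eqref{eq:web} against the exponential rate $m$ appearing in the definition of a good box, so that the Combes--Thomas estimate away from each box's spectrum actually certifies $(m,E)$-goodness on the complement of the Wegner event, is the technically delicate step that distinguishes the energy-interval MSA of Theorem \ref{t:vdk} from its single-energy predecessor.
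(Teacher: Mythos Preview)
The paper does not prove Theorem \ref{t:vdk}. It is stated as a quotation of \cite[Theorem 2.2]{vDK}, in exactly the same spirit as Theorem \ref{vDK} just above it: the paper explicitly says ``We do not explain in detail the induction procedure, for this see \cite{K} or \cite{Kl}'' and offers only the heuristic picture in Figure \ref{f:indstep}. So there is no in-paper argument to compare your proposal against.

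Relative to the actual von Dreifus--Klein proof that the paper is citing, your sketch has the right architecture: induct on scales for the two-box event $R_k(x,y)$, use the geometric resolvent identity to reduce a bad large box to either a resonance or a pair of bad sub-boxes, and invoke the Wegner estimate between boxes \eqref{eq:web} to control simultaneous resonances uniformly in $E$. Two points of imprecision are worth flagging. First, your base case is more elaborate than what is needed: in \cite{vDK} the initial two-box estimate is obtained directly from (H2), independence, and \eqref{eq:web}, without a covering of $I$ and a separate Combes--Thomas step. Second, in the induction step the squaring ``$\nu\ge 2$'' does not come from a single cross-pair of sub-boxes as you describe; it comes from the fact that, on the non-resonant event, \emph{each} large box contains at least two disjoint bad sub-boxes at the common energy, so one obtains two \emph{independent} pairs of disjoint bad boxes at scale $L_k$ and applies the induction hypothesis to each. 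Getting this pairing and the associated combinatorics right is what drives the exponent arithmetic that selects $\alpha\in(1,2)$.
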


The output of the energy-interval MSA \eqref{outMSA} implies dynamical localization in the sense of \eqref{eq:dl}, with an exponent $\zeta<1$, and in the sense of \eqref{eq:at}, see \cite{dBG,DamS}. In \cite{dBG} the authors use ideas from \cite{dRJLS95} and the notion of semi-uniform localized eigenfunctions (SULE). They show that the output of the MSA \cite{vDK}, originally used to prove Anderson localization, can be exploited further to obtain a version of SULE, that gives exponential decay of eigenfunctions, and therefore dynamical localization in the sense of absence of transport \eqref{eq:at}, see Remark \ref{r:loc}-(i).

A pedagogical explanation of how the output of the MSA implies Anderson localization is given in \cite[Section 9]{K}.

We summarize the localization results for the particular case of the Anderson model on the square lattice. In dimension $d\geq1$ we have:
\begin{thm}\label{t:andloc}Let $H_{\om,\lambda}=-\Delta+\lambda V_\om$, with $\lambda>0$, be the Anderson model on $\ell^2(\ZZ^d)$ defined in \eqref{eq:and} and let $\Sigma_\lambda\subset\RR$ be its deterministic spectrum. Assume the probability distribution $\mu$ is H\"older continuous and that $\supp \mu=[a,b]$, with $a<b$. Let $I\subset \RR$ be an open interval. Then,
\begin{itemize}
\item[i.] (regime of extreme energies) for fixed $\lambda>0$, there exist $E_1, E_2\in\RR$ depending on $d,\lambda,\mu$, such that $H_{\om,\lambda}$ exhibits dynamical localization in $I$ if
\be I\subset [\inf \Sigma_\lambda,E_1]\cup [E_2,\sup \Sigma_\lambda],\ee
\item[ii.](regime of high disorder) if $\lambda>0$ is large enough depending on $\mu$ and $d$, then $H_{\om,\lambda}$ exhibits dynamical localization in any interval $I\subset\Sigma_\lambda$.
\end{itemize}
In particular, in both cases $H_{\om,\lambda}$ exhibits all forms of localization stated in Definition \ref{def:loc}.
\end{thm}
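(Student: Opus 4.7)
The plan is to reduce both parts of the theorem to an application of the energy-interval multiscale analysis, Theorem \ref{t:vdk}. Once its output \eqref{outMSA} is obtained on a sequence of scales $L_k$, one obtains dynamical localization \eqref{eq:dl} (with some exponent $\zeta\in(0,1)$) by the arguments of \cite{dBG,DamS} recalled after Theorem \ref{t:vdk}, and then Theorems \ref{t:pp} and \ref{t:ef} together with Remark \ref{r:loc}(i) automatically upgrade this into all the remaining notions of Definition \ref{def:loc}. So the task reduces to verifying hypotheses (H1)' and (H2) on every compact subinterval of the regions described in (i) and (ii).

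Hypothesis (H1)' is insensitive to the regime. Both the Wegner estimate \eqref{eq:we} and the two-box version \eqref{eq:web} hold on any bounded energy interval under the H\"older continuity of $\mu$: the diagonal form of $V_\omega$ gives strict monotonicity of the eigenvalues of $H_{\om,\L}$ in each $\omega_x$, and a standard spectral-averaging argument (cf.\ \cite[Sec.~5.5]{K}) yields \eqref{eq:we} with $a=b=1$; \eqref{eq:web} then follows by combining this estimate with the independence of the restrictions $\omega|_{\L_1}$ and $\omega|_{\L_2}$ on disjoint boxes.

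The real work is in (H2). For the extreme-energies regime (i) I would focus on the lower edge $E_0=\inf\Sigma_\lambda$, the upper edge being symmetric. The Combes-Thomas bound \eqref{CTest} gives exponential decay of $G_{\om,\L_L}(E;x,y)$ at rate $\sim \log(\eta_E/2d)$, where $\eta_E=\dist(E,\sigma(H_{\om,\L_L}))$, so it is enough to exhibit $E_1>E_0$ and a scale $L_0$ such that, uniformly in $E\in[E_0,E_1]$, the probability that $\sigma(H_{\om,\L_{L_0}})$ meets a window $[E_0,E_0+f(L_0)]$ is at most $L_0^{-\beta}$ for a suitable polynomially decaying $f$ compatible with the Combes-Thomas rate. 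This is exactly the Lifshitz-tails behaviour of the integrated density of states (or, in the absence of the IDS, a finite-volume spectral ``lifting'' of the ground energy as in \cite{RM13}). For the high-disorder regime (ii), one does not need any edge structure. Given $E\in\Sigma_\lambda$, on the event $\{|\lambda\omega_x-E|\ge c\lambda,\ \forall x\in\L_{L_0}\}$, whose complement has probability $\le Q_I\,\mathrm{vol}(\L_{L_0})/\lambda$ by the Wegner estimate, the diagonal part of $H_{\om,\lambda,\L_{L_0}}-E$ is invertible with norm $\le (c\lambda)^{-1}$; a Neumann-series expansion in the bounded perturbation $-\Delta$ then gives matrix elements bounded by $(c\lambda)^{-1}(2d/c\lambda)^{\norm{x-y}}$. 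Taking $\lambda$ large enough makes the rate $m_0$ in \eqref{eq:pelk} as large as required, uniformly in $E\in\Sigma_\lambda$.

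The main obstacle I expect is (H2) in case (i): unlike the high-disorder estimate, which is essentially algebraic once the Wegner input is in place, the edge bound genuinely needs a quantitative probabilistic statement on how slowly the finite-volume spectrum approaches $\inf\Sigma_\lambda$. This is where the extreme-energy location of $I$ is used in an essential way, either through Lifshitz tails or through a geometric argument that exploits the $O(L^{-2})$ gap between the ground state of $-\Delta$ on $\L_L$ and $\inf\sigma(-\Delta)$. Everything else in the proof is ``mechanical'' verification of the MSA input or citation of the general-purpose implications already collected in the paper.
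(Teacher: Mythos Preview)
Your proposal is correct and mirrors the paper's approach: Theorem \ref{t:andloc} is not given a standalone proof in the paper but is stated as the summary of the MSA discussion in Section \ref{s:dgf}, i.e., verify (H1)' via the Wegner estimate(s), verify (H2) via Lifshitz tails/Combes--Thomas at the edges or a large-$\lambda$ Neumann expansion throughout the spectrum, run Theorem \ref{t:vdk}, and then invoke \cite{dBG,DamS} and the implications of Section \ref{s:loc}. One small imprecision: in regime (ii) the bound on the complement of $\{|\lambda\omega_x-E|\ge c\lambda,\ \forall x\in\L_{L_0}\}$ is not the Wegner estimate of Definition \ref{d:we} but a direct union bound using the H\"older continuity of $\mu$ (yielding $\le |\L_{L_0}|\,s_\mu(2c)$), which is what \cite[Section 11]{K} actually uses.
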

\begin{rem}In the particular case $d=1$, $\lambda>0$ is enough for $H_{\om,\lambda}$ to exhibit dynamical localization throughout the spectrum, that is, for any $I\subset \Sigma_\lambda$, see e.g. \cite{GMP77,Pas80, Kot}.
\end{rem}

\subsubsection{The case of general metric graphs $\Gamma$}
 In the MSA, the induction step is carried out by covering a cube of side $L_{k+1}$ by smaller cubes of side $L_{k}$. Cubes are the natural choice to tile the square lattice, although the method can be applied to more general $\Gamma$ and more general geometric shapes that tile the space. One can consider the Anderson model with missing sites from Section \ref{s:ne}, where the potential is forced to be zero in an homogenous subset of $\ZZ^d$, and use the MSA to prove localization \cite{RM13,EK} (see \cite{ES} for a proof of localization using the Fractional Moment Method). One can also consider $\ell^2(\Gamma)$, with $\Gamma$ satisfying Assumption (A1) in Section \ref{s:and} provided the volume of balls of radius $L_k$ in the graph-distance grows at most polynomially in $L_k$ as $L_k\nearrow \infty$, see \cite{ChuS}. This fact can be seen from the induction step, in the way \emph{bad} regions are handled: a bad region is a collection of cubes of side $L_*\geq L_{k}$, that contains the collection of cubes $\L_{L_{k}}(j)\subset \L_{L_{k+1}}$ that are not good in the sense of Definition \ref{d:gb}. When the path between $x\in\L_{L_{k+1}/2}$ and $y\in\partial_-\L_{L_{k+1}}$ passes through one of the bad regions, its contribution to the decay of the Green's function is given by the Wegner estimate. This term will be therefore proportional to the volume of the bad region, but should be compensated by the exponential decay of the Green's function in the \emph{good} cubes. This is the reason why, in particular, the MSA cannot be applied in trees, like the Bethe lattice $\mathbb B$, where the volume of cubes grows exponentially.


\subsection{The Fractional Moment Method (FMM)}
We consider the Anderson operator $H_{\om,\lambda}$ on $\ell^2(\Gamma)$ defined in \eqref{eq:and}, with $\Gamma$ satisfying Assumption (A1) and the random variables distributed according to an absolutely continuous probability distribution $\mu$ .

The Fractional Moment Method, when applicable, gives exponential decay of fractional powers of the resolvent $G(z;x,y)$, uniformly on $z$ with $\Im z\neq 0$.
\begin{thm}\label{t:fmm}
Let $H_{\om,\lambda}=-\Delta+\lambda V_\om$ be the Anderson model and assume $\lambda$ is large enough. For some $I\subset\mathbb R$, there exist $s\in(0,1)$ and $0<c$, $C<\infty$ such that
\be\label{eq:fmm}\mathbb E\left( \abs{ \langle\delta_x,(H_{\omega,\lambda}-(E+i\epsilon))^{-1}\delta_y \rangle}^s\right)\leq Ce^{-c\dist_\Gamma(x,y)}\ee
uniformly in $E\in I$, $\epsilon>0$ and $x,y\in\mathbb Z^d$.
\end{thm}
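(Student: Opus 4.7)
The plan is to execute the standard Aizenman--Molchanov Fractional Moment Method, which consists of two logically independent ingredients: an a priori finiteness of fractional moments of the Green's function uniform in $z \in \mathbb{C}\setminus\mathbb{R}$, and a multiplicative iteration of the resolvent identity that, at large disorder $\lambda$, contracts by a factor strictly less than one at each step.

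I fix $s \in (0,1)$ and first derive the a priori bound. The key observation is that $H_{\om,\lambda}$ depends on $\omega_x$ as a rank-one perturbation, so by Krein's formula $\omega_x \mapsto G_\omega(z;x,y)$ is an explicit M\"obius transformation with coefficients independent of $\omega_x$. Since $\mu$ is absolutely continuous with density $\rho \in L^\infty$ and $s<1$ makes $\int |\xi - \zeta|^{-s}\, d\xi$ uniformly bounded in $\zeta \in \mathbb{C}$, integrating $|G_\omega(z;x,y)|^s$ first in $\omega_x$ and then in the remaining coordinates yields
\begin{equation*}
\sup_{z : \Im z \ne 0}\; \sup_{x,y \in \Gamma}\; \mathbb{E}\bigl(|G_\omega(z;x,y)|^s\bigr) \;\le\; C_s \;<\; \infty.
\end{equation*}

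For the decay step, I use the Schur/resolvent identity in the form
\begin{equation*}
G_\omega(z;x,y) \;=\; \frac{1}{\lambda\omega_x + \deg_\Gamma(x) - z}\, \sum_{x' \sim x} G_\omega(z;x',y), \qquad x \ne y,
\end{equation*}
obtained by evaluating $(H_\omega - z)G_\omega(z) = I$ at $(x,y)$. Raising to the power $s$ and using the subadditivity $|a+b|^s \le |a|^s + |b|^s$ valid for $s \in (0,1)$, together with $\deg_\Gamma(x) \le K$ from (A1), yields a pointwise bound. To take expectations I must decouple the prefactor from the summands $G_\omega(z;x',y)$, which also depend on $\omega_x$ through all paths in the graph. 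The standard trick is to replace $G_\omega(z;x',y)$ by the Green's function $G_\omega^{(x)}(z;x',y)$ of the operator with the vertex $x$ removed (hence independent of $\omega_x$) via the rank-one identity, carrying the correction terms through the book-keeping. Conditioning on $\omega_{\Gamma \setminus \{x\}}$ and integrating $\omega_x$ against $\mu$, using $\int |\lambda\xi + c - z|^{-s} \rho(\xi)\, d\xi \le C_s'\, \lambda^{-s}$ for $s < 1$, one obtains
\begin{equation*}
\mathbb{E}\bigl(|G_\omega(z;x,y)|^s\bigr) \;\le\; \frac{C_s'}{\lambda^s}\, \sum_{x' \sim x} \mathbb{E}\bigl(|G_\omega^{(x)}(z;x',y)|^s\bigr).
\end{equation*}
For $\lambda$ large enough that $\gamma := K C_s' \lambda^{-s} < 1$, iterating this inequality along a shortest $\Gamma$-path from $x$ to $y$ produces a contracting factor at each of the $\dist_\Gamma(x,y)$ hops, while the a priori bound absorbs the residual Green's function at the endpoint; this delivers $\mathbb{E}(|G_\omega(z;x,y)|^s) \le C\gamma^{\dist_\Gamma(x,y)} = C e^{-c\, \dist_\Gamma(x,y)}$ with $c = -\log\gamma > 0$, uniformly in $E \in I$ and $\epsilon > 0$, which is precisely \eqref{eq:fmm}.

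The principal technical obstacle is the decoupling step: $G_\omega(z;x',y)$ and the prefactor $1/(\lambda\omega_x + \deg_\Gamma(x) - z)$ are genuinely correlated through $\omega_x$, and a naive factorization would destroy the $\lambda^{-s}$ gain that drives the whole contraction. Replacing $G_\omega$ by $G_\omega^{(x)}$ introduces a rank-one correction that must be controlled by the same fractional-moment technology without spoiling the $\lambda^{-s}$ factor; this is the algebraic heart of the argument. Once it is set up, the iteration uses only the bounded-degree hypothesis (A1), so the proof is structurally the same on $\mathbb{Z}^d$ and on the Bethe lattice $\mathbb{B}$ discussed later.
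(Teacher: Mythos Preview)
Your proposal is correct in spirit and follows the Aizenman--Molchanov strategy, but your decoupling mechanism differs from the one the paper presents as its primary argument. The paper keeps the full Green's function on both sides of the one-step inequality and handles the correlation between the prefactor and $G_\omega(z;x,u)$ by means of the \emph{decoupling lemma} for the density,
\[
\int \frac{\rho(v)}{|v-\beta|^{s}}\,dv \;\le\; C_2 \int \frac{|v-\alpha|^{s}}{|v-\beta|^{s}}\,\rho(v)\,dv,
\]
which lets one insert the factor $|\lambda\omega_y - z|^{s}$ back at the cost of a constant. Your route---removing the vertex $x$ so that the remaining Green's function is independent of $\omega_x$---is instead the idea underlying the self-avoiding-walk representation that the paper describes as an \emph{alternative} method (following Hundertmark). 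Both are valid; the decoupling-lemma route avoids the bookkeeping of successively depleted graphs but requires a regularity hypothesis on $\rho$ beyond boundedness, while the vertex-removal route uses only the a~priori diagonal bound and is structurally simpler on trees.

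Two small points. First, your two-step ``replace $G_\omega$ by $G_\omega^{(x)}$ and carry the corrections'' is more laborious than necessary: the Schur complement gives directly the exact factorization
\[
G_\omega(z;x,y) \;=\; -\,G_\omega(z;x,x)\sum_{x'\sim x} G_\omega^{(x)}(z;x',y),\qquad x\neq y,
\]
with no correction terms to track. Second, ``iterating along a shortest path'' is not quite what happens: at each step the sum runs over \emph{all} neighbours, which is where the factor $K$ enters, and the relevant count is that every nearest-neighbour walk from $x$ to $y$ has length at least $\dist_\Gamma(x,y)$. After $\dist_\Gamma(x,y)$ iterations you invoke the a~priori bound on whatever is left.
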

This analysis can be done directly on the infinite-volume operator, without restricting it to finite-volume sets. Just as in the MSA method the singularities of the resolvent are controlled by the Wegner estimate, in the Fractional Moment method this is done by taking fractional powers of the resolvent. In order to apply the FMM one needs to verify the following,

\begin{itemize}
\item[1)] An a priori bound on fractional moments: There exists a constant $C_1=C_1(s,\rho)<\infty$ such that for all $x,y\in\Gamma$,
    \be\label{eq:ap} \mathbb E\left( \abs{G_{\om,\lambda}(z;x,y)}^s\right)\leq \frac{C_1}{\lambda^s}. \ee
See \cite[Lemma 4.1]{S10}, \cite[Lemma 5]{Graf} for a proof in $\ZZ^d$ and \cite[Lemma 3.1]{T11} for a proof in a general graph $\Gamma$. We describe briefly the steps that lead to \eqref{eq:ap} below.

\item[2)] A decoupling estimate for the probability density of the random variables: There exists a constant $C_2=C_2(s)<\infty$ such that, uniformly in $\alpha,\beta\in\CC$,
    \be\label{eq:decl} \int\frac{1}{\abs{v-\beta}^s}\rho(v)dv\leq C_2 \int \frac{\abs{v-\alpha}^s}{\abs{v-\beta}^s}\rho(v)dv. \ee
      This lemma holds for probability densities that are piecewise continuous, and with enough decay at infinity. This restriction can be weakened, allowing for probability distributions $\mu$ that are not necessarily absolutely continuous, but with a certain degree of regularity, see \cite[Section 3.1]{AM93}.
    \end{itemize}
The proof of \eqref{eq:ap} consists in being able to write the term $G_\om(z;x,y)$ in such a way that the potential at the sites $x$ and $y$ is seen as a finite-rank perturbation. For the diagonal terms $G_\om(z;x,x)$ this is easily done by considering the potential at site $x$ as a rank one perturbation of the operator:
\be H_{\om,\lambda}=H_{\om_x^\bot,\lambda}+\lambda \om_x P_x, \ee
where we write $\om_x^\bot=(\om_u)_{u\neq x}$, and $P_x=\angles{\delta_x,\cdot}\delta_x$ is the rank one projection on the site $x\in\Gamma$.
To compute $G_\om(z;x,x)$ we use the resolvent identity: for self-adjoint operators $A,B$,
\be A^{-1}-B^{-1}=A^{-1}(B-A)B^{-1}, \ee
(see e.g. \cite{RSI}) which gives,
\be G_{\om}(z;x,x)=G_{\om_x^\bot}(z;x,x)-\lambda \om_x G_{\om_x^\bot}(z;x,x)G_{\om}(z;x,x). \ee
From this, one can deduce that for some $a,b\in\CC$, depending on the value of the random potential on all sites except for $\om_x$,
\be G_{\om}(z;x,x)=\frac{a}{b+\lambda\om_x}. \ee
The advantage of isolating the value of the potential at $x$ is that now we can take the expectation with respect to $\om_x$ only, $\mathbb E_x$,
 then \eqref{eq:ap} for $x=y$ is a consequence of the following fact that holds for compactly supported random variables (see  \cite[Section 2]{H08}),
\be\label{eq:reg} \int \frac{\rho(v)}{\abs{\beta-v}^s}dv<C(s,\rho)<\infty,\quad \forall \beta\in\CC. \ee
The fact that $s<1$ is crucial for this integral to be finite, see \cite[Section 2]{H08}.

To bound the off-diagonal terms $x\neq y$, one proceeds in a similar way, but this time considering the potential at the sites $x$ and $y$ as a rank-two perturbation of the operator,
\be H_{\om,\lambda}=H_{\om_{x,y}^\bot,\lambda}+\lambda \om_x P_x+\lambda \om_y P_y, \ee
where we write $\om_{x,y}^\bot=(\om_u)_{u\notin \{x,y\}}$. In order to compute the resolvent, one needs a particular case of the Krein formula  for a projector of rank two. We recall this formula from \cite[Appendix I]{AM93},

\begin{thm}[Krein formula]
Let $H$ be a self-adjoint operator on some Hilbert space $\mathcal H$. If
\be H=H_0+W,\ee
with $W$ a finite rank operator satisfying
\be W=PWP \ee
for some finite-dimensional orthogonal projection $P$, then, for $z$ with ${\rm Im}z\neq 0$, we have
\be \left[ P(H-z)^{-1}P\right]=\left[ W + \left[ P(H_0-z)^{-1}P\right]^{-1} \right]^{-1}  \ee
where the inverse is taken on the restriction to the range of $P$.
\end{thm}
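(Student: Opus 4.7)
The plan is to derive the formula by sandwiching the second resolvent identity between two copies of the projector $P$ and then rearranging. Concretely, I would start from
\be (H-z)^{-1} = (H_0-z)^{-1} - (H_0-z)^{-1}\,W\,(H-z)^{-1}, \ee
multiply on the left and right by $P$, and use the hypothesis $W=PWP$ (which implies $WP=PW=W$) to pass a factor of $P$ through $W$ and obtain
\be P(H-z)^{-1}P \;=\; P(H_0-z)^{-1}P \;-\; \bigl[P(H_0-z)^{-1}P\bigr]\,W\,\bigl[P(H-z)^{-1}P\bigr], \ee
now interpreted as an identity between bounded operators on the (finite-dimensional) range of $P$.

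Abbreviating $A:=P(H-z)^{-1}P$ and $B:=P(H_0-z)^{-1}P$, the identity reads $A = B - BWA$, i.e. $(I+BW)A = B$. Since $B^{-1}$ exists on $\mathrm{Ran}\,P$ (see the next paragraph), this factors as $(W+B^{-1})\,A = B^{-1} B = I$, so that
\be A = (W+B^{-1})^{-1}, \ee
which is exactly the claimed formula. No division by $B$ outside $\mathrm{Ran}\,P$ is needed because every operator in sight has already been compressed to that finite-dimensional range.

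The one step that must be justified carefully is the invertibility of $B=P(H_0-z)^{-1}P$ on $\mathrm{Ran}\,P$. Writing $z=E+i\eta$ with $\eta\neq 0$, the spectral theorem for the self-adjoint operator $H_0$ gives
\be \mathrm{Im}\,(H_0-z)^{-1} \;=\; \eta\,\bigl((H_0-E)^2+\eta^2\bigr)^{-1}, \ee
which is a strictly positive (or strictly negative) bounded operator according to the sign of $\eta$. Compressing by $P$ preserves strict definiteness of the imaginary part on $\mathrm{Ran}\,P$, so $B$ has bounded inverse there. The same argument applied to $H$ shows that $A$ is well defined, and in fact a direct computation of $\mathrm{Im}\,(W+B^{-1})$ (using that $W$ is self-adjoint and hence contributes $0$ to the imaginary part) shows that $W+B^{-1}$ is invertible on $\mathrm{Ran}\,P$, confirming consistency of the formula.

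The main potential obstacle is bookkeeping: making sure that every inverse is taken on $\mathrm{Ran}\,P$ rather than on all of $\mathcal H$, and that the factors of $P$ are manipulated correctly using $W=PWP$. Once those points are clear, the proof reduces to one application of the resolvent identity and one line of algebra, together with the imaginary-part argument above for invertibility.
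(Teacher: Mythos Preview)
Your proof is correct. The paper does not actually prove the Krein formula; it merely recalls the statement and cites \cite[Appendix~I]{AM93} for the proof. So there is no ``paper's own proof'' to compare against here.

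Your argument is the standard one and matches what one finds in \cite{AM93}: apply the second resolvent identity, compress by $P$ using $W=PWP$, and solve the resulting finite-dimensional linear equation $A=B-BWA$. The invertibility of $B=P(H_0-z)^{-1}P$ on $\mathrm{Ran}\,P$ via the sign-definiteness of its imaginary part is exactly the right justification. One small remark: you invoke self-adjointness of $H_0$ (for the imaginary-part computation) and of $W$ (so that $W$ contributes nothing to $\mathrm{Im}(W+B^{-1})$). The theorem as stated only asserts self-adjointness of $H$, but since $W$ is bounded and, in the intended application, manifestly symmetric, $H_0=H-W$ is self-adjoint as well; it would be worth making this explicit in your write-up.
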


From the definition of the resolvent, one can deduce
\be (\lambda\om_y-z)G_{\om}(z;x,y)=\sum_{u:u\sim y}G_\om(z;x,u), \ee
then use \eqref{eq:decl} to obtain (see \cite[Eq. 28,29]{S10})
\be \mathbb E\left(\abs{G_\om(z;x,y)}^s \right) \leq \frac{C_1}{\lambda^s}\sum_{u:u\sim y} \mathbb E\left(\abs{G_\om(z;x,u)}^s \right). \ee
and iterate this estimate as long as $u\neq x$, obtaining
\begin{align} \mathbb E\left(\abs{G_\om(z;x,y)}^s \right) & \leq \left( \frac{KC_2}{\lambda^s}\right)^{\textrm{dist}_\Gamma(x,y)}\sup_{u\in \ZZ^d}\mathbb E\left(\abs{G_\om(z;x,u)}^s \right) \notag\\
& \leq  \left( \frac{KC_2}{\lambda^s}\right)^{\textrm{dist}_\Gamma(x,y)+1},
\end{align}
where in the last step \eqref{eq:ap} was used.
Taking $\lambda$ large enough, the r.h.s. decays exponentially in $\textrm{dist}_\Gamma(x,y)$, proving Theorem \ref{t:fmm}.

An alternative method to deduce the exponential decay \eqref{eq:fmm} was introduced by Hundertmark in \cite{H08}, using the self-avoiding random walk representation for the finite-volume Green's function. We describe briefly his approach.
A self-avoiding random walk is a random sequence of points, or path $\{w_0,w_1,...\}\subset \Gamma$ where $w_n$ and $w_{n+1}$ are nearest neighbors, for $n\geq 0$, and all points $w_n$ are different.
We start by recalling  \cite[Lemma 4.3]{H08},

\begin{lem} Let $\L\subset\Gamma$ be finite and $w=\{w_0,...,w_{\abs{w}}\}$ denote the  self-avoiding random walk (SAW) connecting $x$ and $y$, where $\abs{w}$ is the length of the walk. We define the sets $\L_j$  by
\be\label{eq:cub} \L_0=\L,\quad \L_{j+1}=\L_{j}\setminus\{w_j\},\quad j=0,1,2,... \ee
Then, the Green's function restricted to $\L$ takes the following form
\be\label{eq:saw} G_{\om,\L}(z;x,y)=\sum_{w\in SAW_\L(x,y)} \prod_{j=0}^{\abs{w}}G_{\om,\L_j}(z;w_j,w_j) ,\ee
where $SAW_\L(x,y)$ denotes all the SAW between $x$ and $y$ contained in $\L$.
\end{lem}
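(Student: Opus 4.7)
The plan is to prove the identity by induction on $\abs{\L}$, by iteratively applying a one-step Schur-complement formula that ``peels off'' the starting site $x = w_0$. The main algebraic ingredient is the identity, valid for $x, y \in \L$ with $x \neq y$,
\be\label{eq:onestep}
G_{\om,\L}(z;x,y) = G_{\om,\L}(z;x,x)\sum_{\substack{u \in \L\setminus\{x\}\\ u\sim x}} G_{\om,\L\setminus\{x\}}(z;u,y).
\ee
Taking this for granted, each application removes the site $x$ from the underlying set and shifts the starting vertex of the remaining walk to a neighbor $u$; iterating yields a sum indexed by self-avoiding sequences of vertices in $\L$ starting at $x$ and terminating at $y$.

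To establish \eqref{eq:onestep}, I would write $H_{\om,\L} - z$ as a $2 \times 2$ block matrix with respect to the orthogonal decomposition $\ell^2(\L) = \CC\delta_x \oplus \ell^2(\L\setminus\{x\})$ and apply the standard block-inversion (Schur complement) formula. The $(1,1)$ block is the scalar $\deg_\Gamma(x) + \lambda\om_x - z$; the off-diagonal block has entries $\langle \delta_x,-\Delta\,\delta_u\rangle = -1$ for $u \sim x$ in $\L\setminus\{x\}$; and the $(2,2)$ block is $H_{\om,\L\setminus\{x\}} - z$. The Schur complement gives the $(x,x)$-entry of the inverse as $G_{\om,\L}(z;x,x)$, and the $(x,y)$-entry as $-G_{\om,\L}(z;x,x)\sum_u b(u)\,G_{\om,\L\setminus\{x\}}(z;u,y)$ where $b(u) = -1$ for $u\sim x$ in $\L\setminus\{x\}$. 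The two minus signs --- one from the off-diagonal of $-\Delta$, one from the inversion formula --- cancel, producing \eqref{eq:onestep} with the correct positive sign.

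The induction on $\abs{\L}$ is then straightforward. The base case $\abs{\L}=1$ forces $x = y$ and the unique SAW has length zero, so both sides trivially equal $G_{\om,\L}(z;x,x)$. For the inductive step with $x = y$, the only SAW from $x$ to $x$ in $\L$ is $(x)$ (since distinct vertices are required), and the identity is again trivial. For $x \neq y$, apply \eqref{eq:onestep} and then the induction hypothesis (legitimate since $\abs{\L\setminus\{x\}} < \abs{\L}$) to each factor $G_{\om,\L\setminus\{x\}}(z;u,y)$. Setting $w_0 = x$ and relabelling the inner SAW as $(w_1,w_2,\ldots,w_{\abs{w}})$ with $w_1 = u$, the index sets align: $(\L\setminus\{x\})\setminus\{w_1,\ldots,w_j\} = \L\setminus\{w_0,w_1,\ldots,w_j\} = \L_{j+1}$, so the products of diagonal Green's functions assemble into $\prod_{j=0}^{\abs{w}} G_{\om,\L_j}(z;w_j,w_j)$. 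Summing over $u \sim x$ in $\L\setminus\{x\}$ enumerates precisely the SAWs in $\L$ from $x$ to $y$ whose first step is $u$, completing the induction.

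The only delicate point is keeping track of sign conventions in the Schur complement computation that yields \eqref{eq:onestep}; the remainder is a clean bookkeeping argument matching the labels of the nested sets $\L_j$ with the sites progressively removed during the recursion.
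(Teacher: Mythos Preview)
Your proof is correct. The one-step identity \eqref{eq:onestep} is exactly right under the $-\Delta$ convention (off-diagonal hopping $-1$), and your observation that the $(2,2)$ block of $H_{\om,\L}-z$ coincides with $H_{\om,\L\setminus\{x\}}-z$ is precisely what makes the recursion close. The bijection between SAWs in $\L$ from $x$ to $y$ with first step $x\to u$ and SAWs in $\L\setminus\{x\}$ from $u$ to $y$ is clean, and the bookkeeping on the nested sets $\L_j$ is handled correctly. One small point you leave implicit: all the resolvents $G_{\om,\L_j}(z;\cdot,\cdot)$ are well-defined because $\Im z\neq 0$ (the standing hypothesis in the FMM), so self-adjointness of each $H_{\om,\L_j}$ guarantees invertibility.

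Your route differs from the paper's. The paper does not prove the lemma in detail but sketches the argument from Hundertmark's notes: expand $(H_{\om,\L}-z)^{-1}$ as a Neumann series in powers of the hopping part, observe that the $(x,y)$-entry of each power is a sum over nearest-neighbour walks from $x$ to $y$, and then resum the contributions of loops at each visited site into the diagonal Green's functions $G_{\om,\L_j}(z;w_j,w_j)$, leaving only the self-avoiding backbone. That approach gives a nice combinatorial picture, but it requires either taking $|z|$ large for convergence and then invoking analytic continuation, or carefully justifying the loop resummation. Your Schur-complement induction sidesteps these issues entirely: it is purely algebraic, works directly for any $z$ with $\Im z\neq 0$, and is arguably the more transparent proof. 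The Neumann-series picture, on the other hand, makes the origin of the SAW structure more visible and connects naturally to random-walk expansions used elsewhere in statistical mechanics.
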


The previous statement gives a bound on the infinite-volume Green's function via the relation
\be \mathbb E\left(\abs{G_\om(z;x,y)}^s \right)\leq \liminf_{\L\nearrow \Gamma} \mathbb E\left(\abs{G_{\om,\L}(z;x,y)}^s \right) \ee

This lemma avoids the use of the decoupling estimate \eqref{eq:decl}, and uses only the a priori bound for diagonal terms \eqref{eq:ap}, which relies on \eqref{eq:reg}.
It is obtained from expanding the resolvent as a Neumann series in terms of powers of the Laplacian $\Delta$. The fact that $\angles{\delta_u,\Delta \delta_v}\neq 0$ only when $u\sim v$ implies that the non zero terms in the sum come from a path of nearest neighbors connecting $x$ and $y$. Taking all possible paths between $x$ and $y$ can be re-arranged and be represented as a self-avoiding random walk.

Again, this approach manages to isolate the value of the potential at a site $w_j$, that is, each resolvent in the r.h.s. of \eqref{eq:saw} depends only on the value of the potential at site $w_j$. The independence of the random variables $\om_{w_j}$ allows for a simple computation, where the  expectation decouples in single-site contributions:
\begin{align} & \mathbb E\left(\abs{G_{\om,\L}(z;x,y)}^s\right)\\
 & \quad\quad\quad \leq \sum_{w\in SAW_\L(x,y)} \mathbb E_{\om_{w_0}^\bot}\left( \prod_{j=1}^{\abs{w}} G_{\om,\L_j}(z;w_j,w_j)\right)
\mathbb E_{\om_{w_0}} \left(G_{\om,\L_j}(z;w_0,w_0)\right) \notag\\
&\quad\quad\quad \leq  \frac{C_1}{\lambda^s}\sum_{w\in SAW_\L(x,y)} \mathbb E_{\om_{w_{0},w_1}^\bot}\left( \prod_{j=2}^{\abs{w}} G_{\om,\L_j}(z;w_j,w_j)\right)
\mathbb E_{\om_{w_1}} \left(G_{\om,\L_j}(z;w_1,w_1)\right) \notag\\
&\quad\quad\quad ... \notag\\
&\quad\quad\quad \leq \sum_{w\in SAW_\L(x,y)} \left(\frac{C_1}{\lambda^s}\right)^{\abs{w}+1}
 \end{align}
 where we used the a priori bound \eqref{eq:ap}. We can bound the last line by a sum that considers all possible self-avoiding random walks between $x$ and $y$ not necessarily restricted to $\L$, which we denote by $SAW(x,y)$, and get
 \be\mathbb E\left(\abs{G_{\om,\L}(z;x,y)}^s\right) \leq  \sum_{w:SAW(x,y)} \left(\frac{C_1}{\lambda^s}\right)^{\abs{w}+1}. \ee
  Note that the length of the self-avoiding random walk $\abs{w}$ is at least $\textrm{dist}_\Gamma(x,y)$. Therefore, we can write
  \be\mathbb E\left(\abs{G_{\om,\L}(z;x,y)}^s\right) \leq  \sum_{n\geq \dist_\Gamma(x,y)} \left(\frac{C_1}{\lambda^s}\right)^{\abs{n}+1}, \ee
 which is bounded if $\lambda$ is large enough and the graph $\Gamma$ is such that the number of self-avoiding random walks of length $n$ grows less than exponentially in $n$. From this one obtains Theorem \ref{t:fmm}.

The self-avoiding walk representation shows more directly the structure of the graph $\Gamma$, and allows for graphs with unbounded vertex degree \cite{T11}. Using the SAW representation, J. Schenker was able to give a rigourous proof of the critcal value $\lambda_*$ of the disorder parameter conjectured by P.W. Anderson for which there is localization for all $\lambda\geq \lambda_*$ \cite{Sche}.

So far, the arguments relied on $\lambda$ being very large. One can also treat the energy regime near the spectral band edges, with fixed $\lambda>0$, restricting the operator to finite volumes, see \cite[Section 7]{S10}. Thus, one retrieves Theorem \ref{t:andloc} with the condition on the probability distribution $\mu$ replaced by a condition such that \eqref{eq:decl} holds.

In their original work \cite{AM93}, the authors proved spectral localization directly from \eqref{eq:fmm} using the Simon-Wolff Criterion \cite{SW}, which we state as in \cite[Theorem 5.7]{AW},
\begin{thm}[The Simon-Wolff Criterion]\label{t:sw} Let $\Gamma$ be a countable set of points. Let $H_{\omega}=-\Delta+V_\omega$ on $\ell^2(\Gamma)$, $\om\in\Omega$, such that the probability distribution of the random variables, $\mu$, is absolutely continuous. Then,
for any Borel set $I$:
\begin{itemize}
\item[i.] If for Lebesgue-a.e. $E\in I$ and $\mathbb P$-a.e. $\omega$
\be\label{eq:pp} \lim_{\epsilon\rightarrow 0} \sum_{y\in\Gamma } \abs{\langle \delta_y, (H_{\omega}-(E+i\epsilon))^{-1}\delta_x\rangle}^2<\infty, \ee
then for $\mathbb P$-a.e. $\omega$,  the spectral measure of $H$ associated to $\delta_x$ is pure point in $I$.
\item[ii.] If for Lebesgue-a.e. $E\in I$ and $\mathbb P$-a.e. $\omega$
\be \lim_{\epsilon\rightarrow 0} \sum_{y\in\Gamma } \abs{\langle \delta_y, (H_{\omega}-(E+i\epsilon))^{-1}\delta_x\rangle}^2=\infty, \ee
then for $\mathbb P$-a.e. $\omega$,  the spectral measure of $H$ associated to $\delta_x$ is continuous in $I$.
\end{itemize}
\end{thm}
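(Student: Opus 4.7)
The plan is to translate both hypotheses into statements about the spectral measure $\mu_x^\omega$ of $\delta_x$ with respect to $H_\omega$, and then apply rank-one perturbation theory with the coupling $\omega_x$ as the parameter. First, by Parseval's identity in the orthonormal basis $(\delta_y)_{y\in\Gamma}$ and the spectral theorem applied to $H_\omega$,
\be
\sum_{y\in\Gamma}\abs{\angles{\delta_y,(H_\omega-(E+i\eps))^{-1}\delta_x}}^2
=\norm{(H_\omega-(E+i\eps))^{-1}\delta_x}^2
=\int_\RR\frac{d\mu_x^\omega(t)}{(t-E)^2+\eps^2}.
\ee
Monotone convergence as $\eps\searrow 0$ yields $G_\omega(E):=\int d\mu_x^\omega(t)/(t-E)^2$. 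Thus the hypothesis of (i) reads $G_\omega(E)<\infty$ for Lebesgue-a.e. $E\in I$ and $\P$-a.e. $\omega$, while (ii) is the divergence analogue.

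For part (i), I would rule out the absolutely continuous and singular continuous components of $\mu_x^\omega|_I$ separately. If the ac density of $\mu_x^\omega$ were strictly positive at a Lebesgue point $E_0$, a standard lower bound on $\int (t-E_0)^{-2} d\mu_x^\omega(t)$ gives $G_\omega(E_0)=\infty$; combined with Lebesgue differentiation this forces $\mu_x^{\omega,\mathrm{ac}}|_I=0$ for $\P$-a.e. $\omega$. To exclude the sc part one exploits the rank-one structure: writing $\omega^\perp=(\omega_y)_{y\neq x}$ and $P_x=\angles{\delta_x,\cdot}\delta_x$, decompose $H_\omega=H_{\omega^\perp}+\omega_x P_x$ and obtain
\be
F_\omega(z)=\frac{F_{\omega^\perp}(z)}{1+\omega_x F_{\omega^\perp}(z)},\qquad F_{\omega^\perp}(z):=\angles{\delta_x,(H_{\omega^\perp}-z)^{-1}\delta_x}.
\ee
The Aronszajn--Donoghue theorem classifies the pp, sc and ac parts of $\mu_x^\omega$ in terms of boundary values of $F_{\omega^\perp}$, and the spectral averaging formula---valid precisely because $\mu$ is absolutely continuous---states that $\int \mu_x^\omega(A)\,d\mu(\omega_x)$ is absolutely continuous in $A$, with bounded density. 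A Fubini argument, interchanging the a.e.-in-$\omega_x$, a.e.-in-$\omega^\perp$ and a.e.-in-$E$ qualifiers, then transfers the almost-sure vanishing of the sc component from the averaged measure to $\mu_x^\omega$ for $\P$-a.e. $\omega$.

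For part (ii), the same rank-one framework acts in reverse. An atom of $\mu_x^\omega$ at $E_0\in I$ corresponds, via Aronszajn--Donoghue, to a solution of $1+\omega_x F_{\omega^\perp}(E_0+i0)=0$ with $G_{\omega^\perp}(E_0)<\infty$, a condition which for $\mu$-a.e. $\omega_x$ pins $E_0$ to a Lebesgue-null set. Conversely, a positive-probability pp component in $I$ would produce a lower bound $G_\omega(E)\geq \mu_x^\omega(\{E_0\})(E_0-E)^{-2}$ on a neighborhood of each atom, and after averaging over $\omega_x$ and applying Fubini one would obtain a subset of $I$ of positive Lebesgue measure on which $G_\omega(E)<\infty$ with positive $\P$-probability, contradicting the divergence hypothesis.

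The main obstacle is the rank-one perturbation machinery itself: the Aronszajn--Donoghue classification and the Kotani--Simon--Wolff spectral averaging formula, together with the Fubini bookkeeping required to pass between statements about averaged spectral measures and statements holding for $\P$-a.e. $\omega$. The absolute continuity of $\mu$ enters crucially in the averaging step and cannot be removed without substantial additional work.
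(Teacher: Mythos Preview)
The paper does not prove this theorem; it is quoted from \cite[Theorem~5.7]{AW} (with the original reference \cite{SW}) and used as a black box. There is therefore no in-paper argument to compare your proposal against.

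For what it is worth, your outline is the standard Simon--Wolff route via rank-one perturbation theory and spectral averaging, and is essentially what one finds in the cited references. Two places deserve tightening. First, a step you leave implicit but which is essential: the finiteness of $\int d\mu_x^\omega(t)/(t-E)^2$ is \emph{independent of the coupling} $\omega_x$, i.e.\ $G_\omega(E)<\infty$ if and only if $G_{\omega^\perp}(E)<\infty$. Without this equivalence you cannot pass from the hypothesis on $G_\omega$ to a statement about the base object $G_{\omega^\perp}$ to which Aronszajn--Donoghue is applied. Second, your argument for part (ii) is the least convincing as written. The clean line is: by Aronszajn--Donoghue the pure-point part of $\mu_x^\omega$ is supported on $\{E\in I:G_{\omega^\perp}(E)<\infty\}$, which by hypothesis (and the equivalence just mentioned) has Lebesgue measure zero; spectral averaging then gives $\mu_x^\omega$ of this null set equal to zero for $\mu$-a.e.\ $\omega_x$, and Fubini finishes. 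Your contradiction-by-lower-bound phrasing does not close, since the divergence hypothesis is only Lebesgue-a.e.\ in $E$ and says nothing at the countably many atom locations.
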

We will use the following inequality, valid for $s\in(0,1)$ and a sequence $(a_n)_{n\in\NN}\subset \CC$,
\be \left(\sum_n \abs{a_n}\right)^s\leq \sum_n \abs{a_n}^s.\ee
For $s=1/4$ we have,
\be\left(\sum_y  \abs{ \langle\delta_x,(H_{\omega,\lambda}-z)^{-1}\delta_y \rangle}^2\right)^{\frac{1}{4}}\leq
\sum_y \abs{ \langle\delta_x,(H_{\omega,\lambda}-z)^{-1}\delta_y \rangle}^{\frac{1}{2}}<\infty\ee
for $\mathbb P$-a.e. $\omega\in\Omega$. Therefore, Theorem \ref{t:fmm} implies \eqref{eq:pp} for any $x\in\Gamma$, and by the Simon-Wolff Criterion, the spectral measure associated to $H_\omega$ and $\delta_x$ is pure point in the deterministic spectrum of $H_\omega$, for $\mathbb P$-a.e. $\omega\in\Omega$. Since this holds for every $\delta_x$, one can deduce that
\be\sigma(H_\omega)=\sigma_{pp}(H_\omega) \quad\mbox{for }\mathbb P\mbox{-a.e.}\,\omega\in\Omega. \ee


The Simon-Wolff criterion is based on the theory of rank-one perturbations, which are at the core of many of the arguments in the spectral analysis of Schr\"odinger operators, and in particular in the early proofs of localization, see \cite{Si,dRJLS95}. The disadvantage is the requirement of absolutely continuous probability distributions.

In \cite{A} it was shown that the exponential decay of Green's function obtained by the FMM implies dynamical localization. This was proved in more generality by \cite{Graf}, see also \cite[Proposition 5.1]{S10}, \cite[Lemma 4.3 and Section 5]{T11}.

\subsection{A tale of two methods: the MSA and the FMM}
A brief comment is in place to compare the two methods presented above. These are the two methods available in arbitrary dimension to prove dynamical localization, and given a random Schr\"odinger operator, the question is which one to choose. They both have advantages and disadvantages: the MSA is a more cumbersome method than the FMM, but once one learns it in the discrete setting $\ell^2(\mathbb Z^d)$, the passage to the continuous setting $\Lp{\mathbb R^d}$ is straightforward. The FMM, on the other hand, becomes much more technical in the continuous setting \cite{AENSS}, since the finite-rank arguments at its core are no longer available.
Both methods give exponential decay in \eqref{eq:dl}, i.e., $\zeta=1$ there. This is is part of the proof of localization with the FMM case. In the proofs using MSA, the sub-exponential decay can be computed directly, and for many years, it was believed this was the optimal result. In the more refined study \cite{GKnew} it was shown that actually exponential decay holds, see also \cite[Remark 1.7]{GK11}. Both methods, MSA and FMM, yield localization in the form \eqref{eq:at}.

While for an Anderson operator with random variables that have absolutely continuous probability distributions both methods can be applied, the MSA has proved to give the best adaptability to singular probability measures. The proof of localization for the Anderson model with Bernoulli random variables was obtained in \cite{BoK05,GK11} using a version of the MSA method, where the Wegner estimate to control resonances is replaced by a weaker estimate that is incorporated in the induction step.

We can see in the induction step of the MSA that bad boxes are allowed, as long as they are compensated by the exponential decay of the Green's function on good boxes. Therefore it is important that the volume of a cube $\L_L$  in the graph-distance is at most sub-exponential in the scale $L$ (see also \cite[Theorem 2.9]{GT}). This makes the MSA unsuitable for tree graphs, where the volume growth is exponential. In this setting, the FMM is applicable \cite{A}.

Although it appears these two methods go on two completely different paths, the developments made for one method can have consequences on the other. In \cite{ETV10, ETV11}, the authors study the case where the dependence of the model on the  parameter $\om$ is non-monotonous, which presents several technical challenges. They were able to extend the FMM to this setting obtaining \eqref{eq:fmm}, however, the existing proofs of dynamical localization derived from this estimate could not be applied to the non-monotonous case. The authors tackled this problem by showing that the fractional moment bound \eqref{eq:fmm} implies the output of the MSA \eqref{outMSA}. Then they could prove dynamical localization as in \cite{vDK}, see \cite{Tphd} for a detailed discussion.

\begin{rem}Recently, A. Elgart and A. Klein developed a multi-scale method that does not involve Green's function, but is a direct analysis of finite-volume eigenfunctions \cite{EK15, EK16}. They retrieve the localization results known for the Anderson model, introducing novel analytical tools. The motivation behind the search for a proof of localization that does not involve Green's function comes from applications to $N$-particle models.
\end{rem}

\section{Delocalization and the phase diagram}\label{s:deloc}
While dynamical localization is by now well understood, its absence, called \emph{delocalization} has turned out to be by far more elusive. It has been proven in few models, among them, the Anderson model with decaying randomness (see e.g. \cite{KKO, Sim82, DSS, Kis96}); the magnetic Anderson operator in the continuous setting $\textrm{L}^2(\RR^2)$ \cite{GKS1, GKS2}; see also the work \cite{JSBS} on random polymer models, which includes a type of Bernoulli-Anderson model where the random variables are correlated in pairs (dimer model). The most far-reaching developments concerning delocalization are in the setting of tree-graphs. In \cite{AGKL} the authors found a connection between the Anderson model on the Bethe lattice and many-particle systems with inter-particle interaction. This contributed to the interest the Anderson model on trees gathers within the field of random Schr\"odinger operators.
We take most of the material in this section from \cite{Wa12,ASWrev} and \cite[Section 16]{AW}, and we refer the reader to these expository works for more details on the history and ideas behind delocalization (see also \cite{AW13, AW}).

In the remaining of this section we consider the case of the Bethe lattice $\Gamma=\mathbb B$.

One can find two lines of arguments in the literature concerning delocalization for the Anderson model on $\ell^2(\mathbb B)$:
\begin{itemize}
\item[i.] Continuity arguments show the stability of the ac spectrum of $-\Delta_\mathbb B$ under the effect of random perturbations.\\
The first proof of delocalization was given by Klein in \cite{Kl94,Kl98}. Under the assumption that the probability distribution $\mu$ has finite second moment, it is shown that for every $0<\abs{E_0}<2\sqrt{K}$, there exists a disorder parameter $\lambda(E_0)>0$ such that
\[ \sup_{ \substack{\epsilon\in(0,1],E\in(-E_0,E_0)\setminus\{0\}\\ \lambda\in(0,\lambda(E_0)]}} \mathbb E\left( \abs{G_{\lambda,\om}(E+i\epsilon;0,0)}^2\right)<\infty.\]
This implies that $H_{\om,\lambda}$ has purely ac spectrum in $[-E_0,E_0]$ almost surely. Klein's proof is based in a renormalization procedure and a fixed-point argument. In \cite{FHS07} the authors give an alternative proof of Klein's result.

In \cite{ASW06}, the authors show the stability of the ac spectral measure for energies in the spectrum (see Theorem 1.1 therein):
\[ \lim_{\lambda\rightarrow 0}\mathbb E\left( \int_I \abs{\Im G_{\lambda,\om}(E+i0;0,0)- \Im G_{0,\om}(E+i0;0,0)}dE\right)=0,\]
where $\Im G_{\lambda,\om}(E+i0;0,0)$ denotes $\lim_{\epsilon\rightarrow 0}\Im G_{\lambda,\om}(E+i\epsilon;0,0)$.
\item[ii.] Resonant delocalization: In \cite{AW13}, M. Aizenman and S. Warzel give a criterion for ac spectrum in terms of the decay rate of Green's function. They exploit the geometric properties of $\mathbb B$ to show that although the Green's function between distant points might decay, the exponential growth of the volume of balls in the tree can contribute enough to compensate for this and give rise to resonances. As a consequence, the Green's function is not square-summable, and by the Simon-Wolff criterion (Theorem \ref{t:sw}), this implies ac spectrum. For a sketch of the proof we refer the reader to the review article by S. Warzel \cite{Wa12} and for details, see \cite{AW13} and \cite[Chapters 15 and 16]{AW}.
\end{itemize}
 Recall that $\mathbb B$ is the graph where every point has $K+1$ neighbors. We fix a point in $\mathbb B$ that we call the root $0$, and we call a point $u\in\mathbb B$ a forward neighbor of $x\in\mathbb B$ if $u\sim x$ and $\dist_{\mathbb B}(u,0)>\dist_{\mathbb B}(x,0)$, that is, $u$ is a neighbor of $x$ away from the direction of the root (see Fig. \ref{f:bet}). We denote by $\mathbb B_0$ the rooted tree at $0$, that is, the tree where all points have $K$ forward neighbors, starting at $0$.

\begin{figure}[h]
  \centering
  \includegraphics[width=3cm]{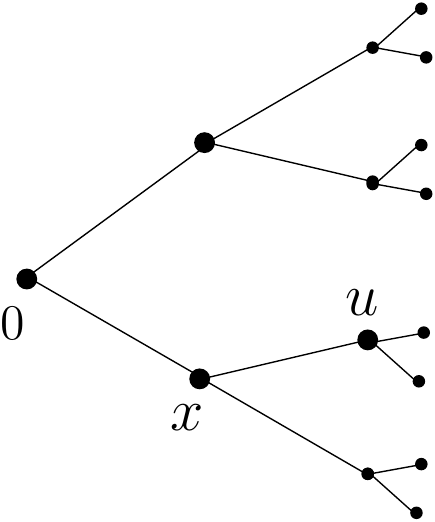}\\
  \caption{In the picture, $u$ is the forward neighbor of $x$ in the graph rooted at $0$.}
  \label{f:bet}
  \end{figure}

The methods to prove delocalization mentioned above rely on a recursion property of the Green's function on the tree, see \cite{ACATh73, Kl98, AW}. We recall a particular case as stated in \cite[Proposition 16.1]{AW},

\begin{thm} Let $\mathbb B_0$ be the rooted tree at $0$ and consider the Anderson model $H_{\omega,\lambda}$ on $\ell^2(\mathbb B_0)$. Then, for any $x\in\mathbb B_0$ and all $z\in\mathbb C\setminus \RR$,
\be G_{\mathbb B_0}(z;0,0)=\left(\lambda V(0)-z-\sum_{y\sim 0}G_{\mathbb B_0}(z;y,y) \right)^{-1}, \ee
and
\be G_{\mathbb B_0}(z;0,x)=\prod_{0\preceq v\preceq x}G_{\mathbb B_0}(z;v,v), \ee
where $0\preceq v\preceq x$ means that $v$ lies in the (unique) path connecting the root $0$ to $x$ in $\mathbb B_0$.
\end{thm}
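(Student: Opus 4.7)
The plan is to derive both formulas from the Schur complement (block resolvent) identity combined with the fundamental geometric property of the rooted tree: removing any vertex splits the graph into mutually disconnected subtrees. Throughout, I will use that the Green's function $G_{\mathbb B_0}(z;v,v)$ appearing on the right-hand sides should be understood, by the usual abuse of notation in this literature, as the diagonal resolvent on the forward subtree at $v$, which is isomorphic to $\mathbb B_0$ itself by self-similarity.

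For the first identity, I would split $\ell^2(\mathbb B_0)=\mathbb C\delta_0\oplus \ell^2(\mathbb B_0\setminus\{0\})$ and write $H_{\om,\lambda}-z$ in block form, with the $(0,0)$ block equal to $\lambda V(0)-z$ and the off-diagonal blocks encoding hopping from $0$ to its neighbors. The Schur complement formula gives
\be
G_{\mathbb B_0}(z;0,0)=\Bigl(\lambda V(0)-z-\sum_{y,y'\sim 0}\langle\delta_y,(H_{\om,\lambda}^{(0)}-z)^{-1}\delta_{y'}\rangle\Bigr)^{-1},
\ee
where $H_{\om,\lambda}^{(0)}$ denotes the restriction of $H_{\om,\lambda}$ to $\ell^2(\mathbb B_0\setminus\{0\})$. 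The key geometric input is that on a tree, excising $0$ decomposes $\mathbb B_0\setminus\{0\}$ into the disjoint subtrees $T_y$ rooted at the forward neighbors $y\sim 0$. Consequently the resolvent of $H_{\om,\lambda}^{(0)}$ is block-diagonal with respect to this decomposition, so all cross terms $y\neq y'$ vanish and only $\sum_{y\sim 0}G^{T_y}(z;y,y)$ survives. Invoking the self-similarity $T_y\simeq \mathbb B_0$ then gives the stated formula.

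For the product identity, I would proceed by induction on $n:=\dist_{\mathbb B}(0,x)$ along the unique path $0=v_0\prec v_1\prec\cdots\prec v_n=x$. The base case $n=0$ is the first identity. For the induction step, apply the resolvent identity to compare $H_{\om,\lambda}$ on $\mathbb B_0$ with the operator $\tilde H$ obtained by deleting the edge $\{v_{n-1},v_n\}$; this deletion disconnects $\mathbb B_0$ into two pieces, with $v_n$ sitting at the root of a subtree isomorphic to $\mathbb B_0$. Writing $H_{\om,\lambda}-\tilde H$ as the rank-two hopping operator supported on $\{v_{n-1},v_n\}$, the resolvent identity together with the decoupling of $\tilde G(z;\cdot,\cdot)$ across the deleted edge yields
\be
G_{\mathbb B_0}(z;0,v_n)=G_{\mathbb B_0}(z;0,v_{n-1})\,G_{\mathbb B_0}(z;v_n,v_n),
\ee
where the second factor is the diagonal Green's function on the forward subtree at $v_n$ (identified with $\mathbb B_0$). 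Iterating gives the claimed product over $0\preceq v\preceq x$.

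The main obstacle I anticipate is purely notational rather than analytic: the symbol $G_{\mathbb B_0}(z;v,v)$ appearing in both formulas is not literally the diagonal element of the resolvent of $H_{\om,\lambda}$ on the full tree, but of the restriction to the forward subtree at $v$, which by the rooted structure is isomorphic to $\mathbb B_0$. Once this identification is made explicit, both proofs reduce to a clean Schur complement computation and a one-step resolvent identity, both made nontrivial only by the tree's property that removing a vertex or an edge produces a complete decoupling. In particular, no spectral assumption on $H_{\om,\lambda}$ is needed beyond $z\in\mathbb C\setminus\RR$, which guarantees invertibility of every finite- or infinite-volume restriction appearing in the argument.
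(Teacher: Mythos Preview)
The paper does not actually prove this theorem; it merely recalls it from \cite[Proposition 16.1]{AW} and the earlier references \cite{ACATh73,Kl98}. There is therefore no in-paper argument to compare your proposal against.

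That said, your approach is correct and is the standard one found in those references. The Schur complement (Feshbach/Krein) reduction at the root, together with the observation that removing $0$ decouples $\mathbb B_0\setminus\{0\}$ into the disjoint forward subtrees $T_y$, is exactly how the first identity is derived; your remark that the cross terms $y\neq y'$ vanish by block-diagonality is the entire content of the tree geometry here. Likewise, edge deletion plus the second resolvent identity, iterated along the geodesic, is the usual route to the product formula.

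You have also correctly identified the essential notational point: the symbols $G_{\mathbb B_0}(z;y,y)$ and $G_{\mathbb B_0}(z;v,v)$ on the right-hand sides denote diagonal resolvent elements on the \emph{forward subtree} rooted at $y$ (respectively $v$), identified with $\mathbb B_0$ via self-similarity, and not matrix elements of the full-tree resolvent. Without this reading both formulas are false, so it is worth making the identification explicit rather than calling it an abuse.

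One small caveat: track the sign of the hopping carefully. The paper first defines $H_\omega=-\Delta+\lambda V_\omega$ with $-\Delta=\deg_\Gamma-A_\Gamma$, and then switches convention to $H_\omega=A_\Gamma+V_\omega$. Depending on which convention is in force, the rank-two perturbation in your induction step carries a sign that either cancels to give the stated product or produces an extra factor $(-1)^{\dist(0,x)}$. The argument is unaffected structurally, but the precise formula you land on depends on this choice.
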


\begin{rem}The methods mentioned above to prove delocalization have been applied to other models, like tree-strips \cite{Hal12, KlSa, Sha}, decorated trees \cite{FHH12,FHS09}, complete graphs \cite{AShW}, so-called trees of finite cone type \cite{KLW12, KLW15} and random quantum trees \cite{ASW06bis}.
\end{rem}

\subsection{The phase diagram of the Anderson model on $\ell^2(\ZZ^d)$ and $\ell^2(\mathbb B)$}

It is known that in dimension $1$, for any $\lambda>0$ the operator $H_{\omega,\lambda}$ defined in \eqref{eq:and} acting on $\ell^2(\ZZ)$ (i.e. with bounded potential) exhibits localization everywhere in the spectrum. On the other hand, it is conjectured that in $d=2$ the operator still exhibits localization in the whole spectrum, for any value of $\lambda$. However, to this date, for weak disorder the methods only give localization at the edges of the spectrum (recall Theorem \ref{t:andloc}).

 In $d\geq 3$ it is conjectured that there is a transition from localized to delocalized states in the spectrum, which is called the \emph{Anderson metal-insulator transition}. Except for the decaying randomness model and the magnetic Anderson model on $\Lp{\RR^2}$ mentioned above, this remains an open problem. For the Anderson model on $\ell^2(\mathbb B)$ a transition has been proven in the parameter $\lambda$: there exist critical values $\lambda_0,\lambda_*>0$ such that $H_{\omega,\lambda}$ exhibits delocalization in the whole spectrum for $\lambda \in[0,\lambda_0)$, and localization in the whole spectrum for all $\lambda\geq \lambda_*$.

See the figure below for a comparison between the phase diagrams of the Anderson model on $\ell^2(\ZZ^d)$ with $d\geq3$ and $\ell^2(\mathbb B)$ with $K>1$.

%

\begin{figure}[h]
  \centering
  \includegraphics[width=14cm]{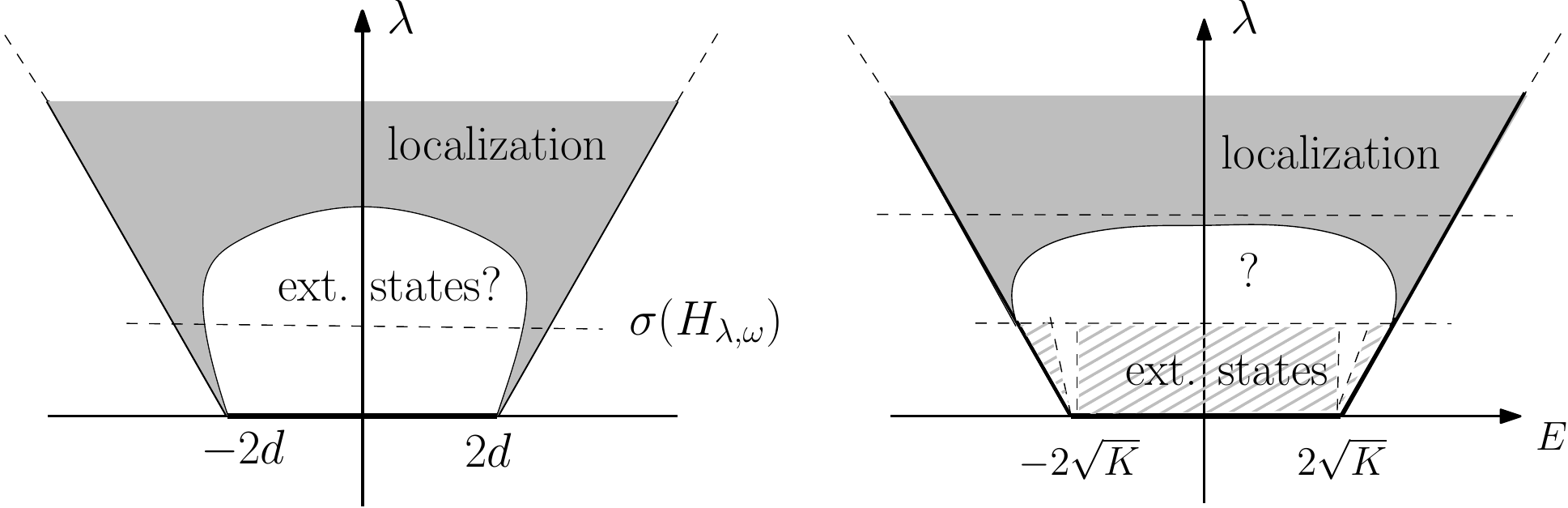}\\
 \caption{The figure on the left shows the phase diagram for the Anderson model $H_{\lambda,\omega}$ on $\ell^2(\mathbb Z^d)$ for $d\geq3$, while the on on the right shows it for the Anderson model on $\ell^2(\mathbb B)$. The horizontal lines represent the spectrum of the operator. Note that on the left, only the localization area is known, while in the figure on the right, only the lower dashed and upper grey sections are known, the white section being a conjecture. The Anderson transition for the Anderson model on $\ell^2(\mathbb B)$ is therefore in the disorder parameter $\lambda$.}
\end{figure}

 The phenomenon of delocalization and the Anderson transition has also been investigated in models other than the Anderson operator on graphs: the Anderson transition has been obtained for the so-called supersymmetric $\sigma$-model \cite{DSZ}, while \cite{ErKYY} gives a proof of eigenfunction delocalization in certain random matrices.

 More recently, N. Anantharaman and M. Sabri have investigated the delocalization of eigenfunctions in the absolutely continuous spectrum for the Anderson model on graphs in the context of quantum ergodicity \cite{AnS1,AnS2}.


\section*{Acknowledgements}
 The author acknowledges the support of the Hausdorff Center for Mathematics, Bonn and the organizing committee of the CIMPA School in Kairouan in November 2016. The author would like to thank S. Gol\'enia for enjoyable discussions and helpful remarks, and to F. Hoecker-Escuti and A. Klein for comments on the manuscript.


\begin{thebibliography}{AAAAA}

\bibitem[ACATh]{ACATh73}
Abou-Chacra, R., Anderson, P. W. , Thouless, D. J.: A selfconsistent theory of localization, \emph{J. Phys. C: Solid State Phys.} 6, 1734–1752 (1973).

\bibitem[AK]{AK92} V. Acosta, A. Klein, Analyticity of the density of states in the Anderson model in the Bethe lattice, \emph{J. Stat. Phys.} 69, 277--305 (1992).

\bibitem[A]{A} M.~Aizenman
 Localization at Weak Disorder: Some Elementary Bounds, \emph{Rev. Math. Phys.} 6, 1163 (1994).

\bibitem[AENSS]{AENSS} M. Aizenman, A. Elgart, S. Naboko, J. Schenker and G. Stolz, Moment analysis for localization in random Schr\"odinger operators, \emph{Invent. math.} 163, 343--413 (2006).



\bibitem[AM]{AM93}
M.~Aizenman and S.A. Molchanov.
\newblock {Localization at large disorder and at extreme energies: an
  elementary derivation}.
\newblock {\em Comm. Math. Phys.}, 157 (1993) 245--278.

\bibitem[AnS1]{AnS1} N. Anantharaman, M. Sabri, Quantum ergodicity for the Anderson model on regular graphs.
J. Math. Phys. 58, (2017).
\bibitem[AnS2]{AnS2} N. Anantharaman, M. Sabri,  Quantum Ergodicity on Graphs : from Spectral to Spatial Delocalization.
Preprint arXiv:1704.02766.

\bibitem[AShW]{AShW} M. Aizenman, M. Shamis, S. Warzel, Resonances and Partial Delocalization on the
Complete Graph, \emph{Ann. Henri Poincar\'e} 16, 1969--2003 (2015).


\bibitem[ASW1]{ASW06}M. Aizenman, R. Sims, S. Warzel,
Stability of the absolutely continuous
spectrum of random Schr\"odinger operators on tree graphs, \emph{Probab. Theory
Related Fields} 136 (2006), 363--394.

\bibitem[ASW2]{ASW06bis} M. Aizenman, R. Sims, S. Warzel, Absolutely continuous spectra of quantum tree graphs with weak disorder, \emph{Comm. Math. Phys.} 264, 371--389 (2006).

\bibitem[ASW3]{ASWrev} M. Aizenman, R. Sims, S. Warzel, Fluctuation based proof of the stability of ac spectra of random operators on tree graphs. In: Recent Advances in Differential Equations and Mathematical Physics
Eds.: N. Chernov, Y. Karpeshina, I.W. Knowles, R.T. Lewis, R. Weikard, Contemporary Math. Series 412, 1--14 (2006).

\bibitem[AW1]{AW13}M. Aizenman, S. Warzel,
Resonant delocalization for random Schr\"odinger
operators on tree graphs, \emph{J. Eur. Math. Soc. (JEMS)} 15 (2013), 1167--1222.


\bibitem[AW2]{AW} M. Aizenman, S. Warzel, \emph{Random Operators: Disorder Effects on Quantum Spectra and Dynamics},
Graduate Studies in Mathematics, vol 168. AMS, 2016.




\bibitem[An]{And58} P.W. Anderson, Absence of Diffusion in Certain Random Lattices, \emph{Phys. Rev.} 109 (1958) 1492--1505.



\bibitem[AGKL]{AGKL}B. L. Altshuler, Y. Gefen, A. Kamenev, L. S. Levitov, Quasiparticle lifetime in a finite system: A nonperturbative approach, \emph{Phys. Rev. Lett.} 78, 2803 (1997).


\bibitem[BKe]{BoK05} J. Bourgain, C. Kenig, On localization in the continuous Anderson-Bernoulli model in higher dimension, \emph{Invent. Math.} 161, 389--426 (2005).

\bibitem[CKM]{CKM} R. Carmona, A. Klein, F. Martinelli, Anderson localization for Bernoulli and other singular potentials,
\emph{Comm. Math. Phys.} 108, 1 (1987), 41--66.

\bibitem[CaE]{CaE}Z. Cao, A. Elgart,
The weak localization for the alloy-type Anderson model on a cubic lattice, \emph{J. of Stat. Phys.} 148, 1006--1039 (2012).





%
%
%
%
%
%
%
%


\bibitem[ChS]{ChuS} V. Chulaevsky, Y. Suhov, Efficient Anderson localization bounds for large multi-particle systems, \emph{J. of Spectral Theory} 7, 269--320  (2017).



\bibitem[CyFKS]{CyFKS} H.L. Cycon, R.G. Froese, W. Kirsch, B. Simon,
Schrödinger Operators, Springer-Verlag, Berlin/Heidelberg/New York (1987).

\bibitem[CL]{CaLa90} R. Carmona, J. Lacroix, Spectral Theory of Random Schr\"odinger Operators,
Springer series in statistics: Probability and its applications, Birkh\"auser Boston, 1990.

\bibitem[D]{Dam16} D. Damanik, Schrödinger operators with dynamically defined potentials, \emph{Ergodic Theory and Dynamical Systems}, 1--84 (2016).





\bibitem[DS]{DamS}D. Damanik, P. Stollmann, Multiscale analysis implies dynamical localization, \emph{Geom. Funct. Anal.} 11, 11--29 (2001).

\bibitem[DSS]{DSS} D. Damanik, R. Sims, G. Stolz, Localization for one-dimensional, continuum, Bernoulli-Anderson models, \emph{Duke Math. J.} 114 (2002), 59--100.

\bibitem[dBG]{dBG} S. De Bièvre, F. Germinet, Dynamical Localization for Discrete and Continuous Random Schrödinger Operators, \emph{Commun. Math. Phys.} 194, 323--342 (1998).


\bibitem[DSS]{DSS} F. Delyon, B. Simon, and B. Souillard, From power pure point to continuous spectrum
in disordered systems, \emph{Ann. Institute H. Poincar\'e} 42, 283--309 (1985).




\bibitem[dRJLS]{dRJLS95} R. del Rio, S. Jitomirskaya, Y. Last, B. Simon,
\emph{Phys. Rev. Lett.} 75, 117--119 (1995).

\bibitem[DSZ]{DSZ} M. Disertori, T. Spencer, M. Zirnbauer, Quasi-diﬀusion in a 3D supersymmetric hyperbolic Sigma Model \emph{Commun. Math. Phys.} 300, 435--486 (2010).

\bibitem[Du]{Du}R. Durrett Probability: Theory and Examples. Cambridge Series in Statistical and Probabilistic Mathematics, 4th Edition 2010.

\bibitem[EK1]{EK} A. Elgart, A. Klein,
Ground state energy of trimmed discrete Schrödinger operators and localization for trimmed Anderson models. J. Spectr. Theory 4, 391--413 (2014).

\bibitem[EK2]{EK15} A. Elgart, A. Klein, An eigensystem approach to Anderson localization, \emph{J. of Funct. Anal.} 271, 3465--3512 (2016).

\bibitem[EK3]{EK16} A. Elgart, A. Klein, Eigensystem multiscale analysis for Anderson localization in energy intervals, 2016, To appar in J. Spectr. Theory.



\bibitem[EKTV]{EKTV} A. Elgart, H. Krüger, M. Tautenhahn, I. Veselić, Discrete Schrödinger operators with random alloy-type potential, in Spectral analysis of quantum Hamiltonians
Pages, Springer Basel, pp. 107--131.


\bibitem[ESS]{ESS} A. Elgart, M. Shamis, S. Sodin, Localisation for non-monotone Schrödinger operators,
\emph{J. Eur. Math. Soc. (JEMS)} 16  909--924 (2014).

\bibitem[ES]{ES} A. Elgart, S. Sodin, The trimmed Anderson model at strong disorder: localisation and its breakup, \emph{J. Spectr. Theory} 7, 87--110 (2017).

\bibitem[ETV2]{ETV10} A. Elgart, M. Tautenhahn, I. Veselić, Localization via fractional moments for models on $Z$ with single-site potentials of finite support, \emph{Journal of Physics A: Mathematical and Theoretical} 43 (47), 474021 (2010).


\bibitem[ETV1]{ETV11} A. Elgart, M. Tautenhahn, I. Veselić, Anderson localization for a class of models with a sign-indefinite single-site potential via fractional moment method, \emph{Ann. Henri Poincaré} 12 1571--1599 (2011).

\bibitem[ErKYY]{ErKYY} L.Erd\H{o}s, A. Knowles, H.-T. Yau, J. Yin, Delocalization and diffusion profile for random band matrices, \emph{Commun. Math. Phys.} 323, 367--416 (2013).

\bibitem[ExHS]{ExHS} P. Exner, M. Helm, P. Stollmann, Localization on a quantum graph with a random potential on the
edges, \emph{Rev. Math. Phys.} 19, 923--939 (2007).

\bibitem[FrS]{FrS83} J. Fr\"olich, T. Spencer, Absence of diffusion with Anderson
tight binding model for large
disorder or low energy, \emph{Comm. Math. Phys.} 88, 151--184 (1983).

\bibitem[FrMSS]{FrMSS} J. Fr\"olich, F. Martinelli, E. Scoppola, T. Spencer,
Constructive proof of
localization in the Anderson tight binding model, \emph{Comm. Math.
Phys.} 101, 21-46 (1985).

\bibitem[FHS1]{FHS07}R. Froese, D. Hasler, W. Spitzer,
Absolutely continuous spectrum for the
Anderson model on a tree: A geometric proof of Kleins theorem
, \emph{Commun.
Math. Phys.} 269 (2007), 239--257.

\bibitem[FHS2]{FHS09} R. Froese, D. Hasler, W. Spitzer,
Absolutely continuous spectrum for a
random potential on a tree with strong transverse correlations and large
weighted loops, \emph{Rev. Math. Phys.} 21, 709--733 (2009).

\bibitem[FHH]{FHH12} R. Froese, F. Halasan, D. Hasler,
Absolutely continuous spectrum for the
Anderson model on a product of a tree with a finite graph, \emph{J. Funct. Anal.}
262, 1011--1042 (2012).






\bibitem[GK04]{GKduke} F. Germinet, A. Klein,  A characterization of the Anderson metal-insulator transport transition. Duke Math. J. \textbf{124}, 309--351 (2004).

\bibitem[GK06]{GKnew} F. Germinet, A. Klein, New characterizations of the region of complete localization for random Schrödinger operators, J. Stat. Phys. 122, 73--94 (2006).

\bibitem[GK1]{GK11} F. Germinet, A. Klein, A comprehensive proof of localization for continuous Anderson models with singular random potentials,
\emph{J. Europ. Math. Soc.} 13, 53--143 (2013).

\bibitem[GKS1]{GKS1} F. Germinet, A. Klein, J. Schenker, Dynamical delocalization in random Landau Hamiltonians, \emph{Annals Math.} 166, 215--244 (2007).

\bibitem[GKS2]{GKS2} F. Germinet, A. Klein, J. Schenker, Quantization of the Hall conductance and delocalization in ergodic Landau Hamiltonians, Rev. Math. Phys 21, 1045-1080 (2009).


\bibitem[GT]{GT} F. Germinet, A. Taarabt, Spectral properties of dynamical localization for Schr\"odinger operators, \emph{Rev. Math. Phys.} 25 (9) 2013.


\bibitem[GMP]{GMP77}
	I. Ya. Gol'dsheid, S. A. Molchanov, L. A. Pastur,
	{A pure point spectrum of the stochastic one-dimensional Schr{\"o}dinger equation},
	\emph{Funkt. Anal. Appl.} 11 (1977) 1--8
	[Russian original: \emph{Funkts. Anal. Prilozh.} 11 (1977) 1--10].

\bibitem[G]{Go} S. Gol\'enia, On the absolute continuous spectrum of discrete operators, Lecture notes in this volume, 2017.

\bibitem[Gr]{Graf}G.-M. Graf, Anderson  localization  and  the  space-time  characteristic  of  continuum  states.  \emph{J.
Stat. Phys.}75, 337--346  (1994).

 \bibitem[HiP]{HiP}P. D. Hislop, O. Post, Anderson localization for radial tree-like random quantum
graphs, \emph{Waves in Random and Complex Media} 19 (2) (2009).

\bibitem[Ha]{Hal12} F. Halasan,
Absolutely continuous spectrum for the Anderson model on
some tree-like graphs, Ann. Henri Poincar \'e 13, 789--811 (2012).


 \bibitem[H]{H08} D. Hundertmark, A short introduction to Anderson localization, in proceedings of the LMS Meeting on Analysis and Stochastics of Growth Processes and Interface Models, Bath, September 2006. Oxford Univ. Press, Oxford, 194--218 2008.



\bibitem[JSBS]{JSBS}S. Jitomirskaya, H. Schulz-Baldes, G. Stolz, Delocalization in random polymer models. \emph{Commun. Math. Phys.} 233, 27--48 (2003).

\bibitem[Ka]{Ka} T. Kato, Perturbation Theory for Linear Operators, Springer-Verlag, 1976.

\bibitem[KLW1]{KLW12}
M. Keller, D. Lenz, S. Warzel,
Absolutely continuous spectrum for random
operators on trees of finite cone type
,\emph{ Journal d’Analyse Mathematique}
118, 363--396 (2012).

\bibitem[KLW2]{KLW15}
D. Lenz, M. Keller, S. Warzel, An invitation to trees of finite cone type: random and deterministic Operators, \emph{Markov Processes Relat. Fields} 21, 557--574 (2015).


\bibitem[K1]{Kirsch89}
	W. Kirsch, Random Schr\"odinger operators. A course,
	in \emph{Schr\"odinger operators},
	H. Holden, A. Jensen (eds.),
	Lect.\ Notes Phys. 345, Springer, Berlin, 1989, pp. 264--370.

\bibitem[K2]{K}
	W. Kirsch,
	An invitation to random Schr\"odinger operators (with an appendix by F. Klopp),
	in \emph{Random Schr\"odinger operators},
	M. Disertori, W. Kirsch, A. Klein, F. Klopp, V. Rivasseau (eds.),
	Panoramas et Synth\`eses 25,
	Soc. Math. France, Paris, 2008, pp. 1--119.



\bibitem[KKO]{KKO} W. Kirsch, M. Krishna, J. Obermeit, Anderson model with decaying randomness:
mobility edge, \emph{Math. Z.} 235:421 (2000).


\bibitem[KM]{KMa}
	W. Kirsch, F. Martinelli,
	On the ergodic properties of the specrum of general random operators,
	\emph{J. Reine Angew. Math.} 334 (1982) 141--156.



\bibitem[KM\"u]{KMue}W. Kirsch and P. M \"uller.
Spectral properties of the Laplacian on bond-percolation graphs,
Math. Z. 252 (2006), 899--916.

\bibitem[Ki]{Kis96} A. Kiselev. Absolutely continuous spectrum of one dimensional Schr\"odinger operators
and Jacobi matrices with slowly decreasing potentials, \emph{Commun. Math. Phys.} 179,
377--400 (1996).

\bibitem[Kl]{Kl}A. Klein, Multiscale analysis and localization of random operators, in  Random
Schr\"odinger  operators,  Panor. et Synth`eses,  vol.  25,  Soc. Math. France, Paris,  2008, 121--159.

\bibitem[Kl2]{Kl94}A. Klein,
Absolutely continuous spectrum in the Anderson model on the
Bethe lattice. \emph{Math. Res. Lett.} 1 (1994), 399--407.
\bibitem[Kl3]{Kl98}A. Klein,
Extended states in the Anderson model on the Bethe lattice, \emph{Adv. Math.} 133 (1998), 163--184.



\bibitem[KlNRM]{KlNRM} A. Klein, S.T. Nguyen, C. Rojas-Molina, Characterization of the metal-insulator transport  transition for the two-particle Anderson model, \emph{Ann. Henri Poincaré} online first (2017).

\bibitem[KlS]{KlSa} A. Klein, C. Sadel,
Absolutely continuous spectrum for random Schr ̈odinger
operators on the Bethe strip
, \emph{Math. Nachr.} 285, 5--26 (2012).



\bibitem[KloP1]{KloP08}F. Klopp, K. Pankrashkin, Localization on quantum graphs with random vertex couplings, \emph{J. Stat. Phys.} 131, 651--673 (2008).

\bibitem[KloP2]{KloP09}F. Klopp, K. Pankrashkin,
Localization on Quantum Graphs with Random Edge Lengths, \emph{Lett. Math. Phys.} 87 (2009).

\bibitem[Ko]{Kot}
S. Kotani, Ljapunov indices determine absolutely continuous spectra of stationary random
one-dimensional Schr\"odinger operators. North-Holland Math. Libr. 32, 225--247 (1984).

\bibitem[KuS]{KuSo80}
  {H. Kunz, B. Souillard},
  {Sur le spectre des op\'erateurs aux diff\'erences finies al\'eatoires},
  \emph{Commun. Math. Phys.} {78} (1980) {201--246}.

\bibitem[LPV2]{LPV03} D. Lenz, N. Peyerimhoff, I. Veseli\'c,
Random Schr\"odinger operators on manifolds, Markov Process. Related Fields 9 (2003), 717--728.

\bibitem[LPV]{LPV07}
  {D.} {Lenz}, N. Peyerimhoff, I. Veseli\'c,
  {Groupoids, von Neumann algebras and the integrated density of states},
  \emph{Math. Phys. Anal. Geom.} {10} (2007) {1--41}.


\bibitem[MS]{MS}F. Martinelli and E. Scoppola, Remark on the absence of absolutely continuous spectrum for d-dimensional Schr\"odinger operators with random potential for large disorder or low energy, \emph{Comm. Math. Phys.} 97, 465--471 (1985).

\bibitem[MS2]{MS87}F. Martinelli and E. Scoppola,  Introduction to the mathematical theory of Anderson localization, \emph{Rivista del Nuovo Cimento} 10, 1987.

\bibitem[M]{M90} S. Molchanov, Lectures on random media, in Lectures on Probability Theory (Saint-Flour 1992) (P. Bernard Ed),  Lecture Notes in Mathematics 1581, Springer, New York, 242--411.

\bibitem[M\"uR]{MR}
	P. M\"uller, C. Richard,
	Ergodic properties of randomly coloured point sets,
	\emph{Canad. J. Math.} 65 (2013) 349--402.

\bibitem[M\"uS]{MueS}P. M\"uller and P. Stollmann, Percolation Hamiltonians. Random walks, boundaries and spectra, 235–258, Progr. Probab., 64, Birkhäuser/Springer Basel AG, Basel, 2011.




%




%

\bibitem[P]{Pas80}
  {L. Pastur},
  {Spectral properties of disordered systems in the one-body approximation},
  \emph{Commun. Math. Phys.} {75} (1980) {179--196}.

\bibitem[PF]{PF92}
  {L. Pastur}, A. Figotin,
  {Spectra of random and almost-periodic operators},
  Springer, Berlin, 1992.

\bibitem[RS]{RSI}
  {M.} {Reed}, {B.} {Simon},
  {Methods of modern mathematical physics I: Functional analysis},
  rev. and enl. ed.,
  {Academic}, {San Diego}, {1980}.

\bibitem[RM1]{RM12}
	C. Rojas-Molina, {Characterization of the Anderson Metal-Insulator Transition for
	Non Ergodic Operators and Application} \emph{Ann. Henri Poincar{\'e}} 13 (2012) 1575--1611.

\bibitem[RM2]{RM13}
	C. Rojas-Molina,
	The Anderson model with missing sites,
	\emph{Operators and Matrices} 8 (2014) 287--299.

\bibitem[Sa]{Sa}M. Sabri, Anderson Localization for a multi-particle quantum graph, \emph{Rev. Math. Phys.} 26, 1350020 (2014).

\bibitem[Sch]{Sche}J. Schenker, How large is large? Estimating the critical disorder for the Anderson model, \emph{Lett. Math. Phys.} 105, 1--9 (2015).

\bibitem[Schu]{Schub} C. Schubert, Localization for quantum graphs with a random potential, PhD Thesis, TU Chemnitz 2012. arXiv1208.6278

\bibitem[Sha]{Sha}M. Shamis,
Resonant delocalization on the Bethe strip
, \emph{Ann. Henri
Poincar\'e})15, 1549--1567 (2014).
	
\bibitem[Si1]{Si}B. Simon, Spectral analysis of rank one perturbations and applications, Proc. Mathematical Quantum Theory, II: Schr\"odinger Operators (eds. J. Feldman, R. Froese and L. Rosen) CRM Proc. Lecture Notes 8 (1995), 109--149.

\bibitem[Si2]{Sim82} B. Simon, Some Jacobi matrices with decaying potentials and dense point spectrum, \emph{Comm. Math. Phys.} 87, 253--258 (1982).

\bibitem[SiW]{SW} B. Simon and T. Wolﬀ, Singular continuous spectra under rank one perturbations and localization for random Hamiltonians, \emph{Commun. Pure Appl. Math.} 39, 75--90 (1986).


\bibitem[Sp]{Sp} T. Spencer, The Schr\"odinger equation with a random potential, a mathematical review. In Les Houches Summer School of Critical Phenomena 1984, Ed. K. Osterwalder and R. Stora.

\bibitem[Sto]{Sto}
	P.~Stollmann,
	\emph{Caught by disorder: Bound states in random media},
	Birkh{\"a}user, Boston, 2001.



\bibitem[S]{S10} G. Stolz, \emph{An introduction to the mathematics of Anderson localization}, Contemporary Mathematics 551, 2010.

\bibitem[T1]{T11} M. Tautenhahn, Localization criteria for Anderson models on locally finite graphs
\emph{J. Stat. Phys. } 144, 60--77 (2011).

\bibitem[T2]{Tphd}  M. Tautenhahn, Localization for alloy-type models with non-monotone potentials, PhD Thesis, TU Chemnitz, 2012.

\bibitem[vDK]{vDK} H. von Dreifus, A. Klein, A new proof of localization in the Anderson tight-binding model, \emph{Commun. Math. Phys.} 140, 133--147 (1991).

\bibitem[V1]{Ve05a} I. Veseli\'c, Quantum site percolation on amenable graphs. In
Proceedings of the Conference on Applied Mathematics and Scientific Computing
, 317--328, Dordrecht, 2005. Springer.

\bibitem[V2]{Ve05b}  I. Veseli\'c, Spectral analysis of percolation Hamiltonians, \emph{Math. Ann.} 331, 841--865 (2005).

\bibitem[V3]{V} I. Veseli\'c, Existence and Regularity Properties of the Integrated Density of States of Random Schrödinger Operators,
Lecture Notes in Mathematics 1917,  Springer-Verlag, Berlin (2008).


\bibitem[W]{Wa12} S. Warzel, Surprises in the phase diagram of the Anderson model on the Bethe lattice, pp 239-253 in: XVIIth International Congress on Mathematical Physics Ed.: A. Jensen World Scientific 2013.

\bibitem[We]{Weid} J. Weidmann,{Linear operators in Hilbert spaces}, {Springer}, {New York}, {1980}.






%
%
%
%
%
%
%

 \end{thebibliography}
\end{document}